\newcommand{\ket}[1]{|#1\rangle}
\newcommand{\bra}[1]{\langle#1|}
\newcommand{\lb}{\left(}
\newcommand{\rb}{\right)}
\newcommand{\braket}[2]{\langle#1|#2\rangle}
\newcommand{\norm}[1]{\left\lVert#1\right\rVert}
\newcommand\footnoteref[1]{\protected@xdef\@thefnmark{\ref{#1}}\@footnotemark}
\newcommand{\ketbra}[2]{|#1\rangle\!\langle #2|}
\newcommand{\thm}[1]{\hyperref[thm:#1]{Theorem~\ref*{thm:#1}}}
\newcommand{\defn}[1]{\hyperref[defn:#1]{Definition~\ref*{defn:#1}}}
\newcommand{\lem}[1]{\hyperref[lem:#1]{Lemma~\ref*{lem:#1}}}
\newcommand{\prop}[1]{\hyperref[prop:#1]{Proposition~\ref*{prop:#1}}}
\newcommand{\fig}[1]{\hyperref[fig:#1]{Figure~\ref*{fig:#1}}}
\newcommand{\tab}[1]{\hyperref[tab:#1]{Table~\ref*{tab:#1}}}
\renewcommand{\sec}[1]{\hyperref[sec:#1]{Section~\ref*{sec:#1}}}
\newcommand{\app}[1]{\hyperref[app:#1]{Appendix~\ref*{app:#1}}}
\newcommand{\cor}[1]{\hyperref[cor:#1]{Corollary~\ref*{cor:#1}}}
\newcommand{\obs}[1]{\hyperref[obs:#1]{Observation~\ref*{obs:#1}}}
\newcommand{\paragraphsentence}[1]{}
\renewcommand{\paragraphsentence}[1]{\emph{#1}}
\newcommand{\One}{\mathbb{1}}
\newtheorem{theorem}{Theorem}
\newtheorem{lemma}[theorem]{Lemma}
\newtheorem{definition}[theorem]{Definition}
\newtheorem{corollary}[theorem]{Corollary}
\newcommand{\UofT}{\affiliation{
Department of Computer Science, University of Toronto, Canada}}
\newcommand{\UofTP}{\affiliation{
Department of Physics, University of Toronto, Canada}}
\newcommand{\PNNL}{\affiliation{Pacific Northwest National Laboratory, Richland WA, USA}}
\newcommand{\CIFAR}{\affiliation{Canadian Institute for Advanced Studies, Toronto, Canada}}
\newcommand{\BIQuantum}{\affiliation{
Quantum Lab, Boehringer Ingelheim, 55218 Ingelheim am Rhein, Germany}}
\begin{document}

\title{Amplified Amplitude Estimation: Exploiting Prior Knowledge to Improve Estimates of Expectation Values}

\author{Sophia Simon}
\email{sophia.simon@mail.utoronto.ca}
\UofTP

\author{Matthias Degroote}
\email{matthias.degroote@boehringer-ingelheim.com}
\BIQuantum

\author{Nikolaj Moll}
\BIQuantum

\author{Raffaele Santagati}
\BIQuantum

\author{Michael Streif}
\BIQuantum

\author{Nathan Wiebe}
\email{nathan.wiebe@utoronto.ca}
\UofT
\PNNL
\CIFAR

\begin{abstract}
We provide a method for estimating the expectation value of an operator that can utilize prior knowledge to accelerate the learning process on a quantum computer. Specifically, suppose we have an operator that can be expressed as a concise sum of projectors whose expectation values we know a priori to be $O\lb \epsilon \rb$. In that case, we can estimate the expectation value of the entire operator within error $\epsilon$ using a number of quantum operations that scales as $O(1/\sqrt{\epsilon})$. We then show how this can be used to reduce the cost of learning a potential energy surface in quantum chemistry applications by exploiting information gained from the energy at nearby points. Furthermore, we show, using Newton-Cotes methods, how these ideas can be exploited to learn the energy via integration of derivatives that we can estimate using a priori knowledge. This allows us to reduce the cost of energy estimation if the block-encodings of directional derivative operators have a smaller normalization constant than the Hamiltonian of the system.
\end{abstract}

\maketitle

\section{Introduction}

Quantum simulation algorithms have progressed to the point where their asymptotic worst-case computational complexity is essentially optimal \cite{berry2001optimal, Low2017QSP, Low2019hamiltonian, Gilyen2019qsvt}. Despite this and improved constant factor analyses \cite{Su2021first_quant, Steudtner2023observables}, current resource estimates for industrially relevant applications such as FeMoco~\cite{reiher2017elucidating,von2021quantum,lee2021even} indicate that solving these problems with existing algorithms will likely remain out of reach for early fault-tolerant quantum computers let alone NISQ-era quantum computers.  This suggests that further improvements may be needed to achieve a practical advantage.

Crucially though, classical algorithms for quantum chemistry algorithms try to use prior knowledge as much as possible, while quantum algorithms tend to use such information less frequently. Cases where prior knowledge is used in quantum computing include state learning algorithms such as compressed sensing~\cite{flammia2012quantum}, ``warm starts" in quantum machine learning~\cite{wiebe2016quantum}, quantum machine learning with inductive biases \cite{Kubler2021inductive_bias, Larocca2022QML, Cerezo2022QML, Ragone2023GQML} and even the phase estimation protocol in Shor's factoring algorithm~\cite{shor1994algorithms}. These examples exploit a special structure that is known in the problem to surpass the performance of algorithms that do not possess this knowledge.  Quantum simulation often uses prior knowledge in limited ways.   In particular, in applications such as chemistry, we can often use classically tractable methods such as Hartree-Fock (HF) or Density Functional Theory (DFT) to begin with an initial estimate of quantities of interest such as the ground state energy~\cite{sholl2022density,he2019density}.  Existing quantum algorithms seldom depend strongly on this knowledge apart from its use in providing an ansatz that has large overlap with the target eigenstate that we may wish to prepare (for problems that involve simulation of ground state energies or related quantities) \cite{Ge2019groundstate, Lin2020ground_state}.  This is a significant drawback as in the most extreme case, perfect classical knowledge about the quantity one wishes to estimate eliminates the need for a quantum computing simulation, while in more realistic cases, it might drastically reduce the cost of the simulation.  Thus providing a new approach to simulation that properly utilizes this prior knowledge is a promising avenue if we wish to demonstrate a useful quantum advantage on intermediate-term quantum computers which may not be able to execute the billions of gate operations needed by existing algorithms~\cite{von2021quantum,lee2021even,Obrien2022,Goings2022,Steudtner2023observables,Cortes2023, Simon2023Liouvillian}.

In this work, we take a step towards solving this problem by introducing an approach to learning expectation values that we call amplified amplitude estimation (AAE). The idea behind this technique is to use amplitude amplification in concert with amplitude estimation to estimate probabilities that are \emph{a priori} known to be small ($O(\epsilon)$) within error $\epsilon$ using a number of queries that scales only as $O(1/\sqrt{\epsilon})$.
We show how AAE can be used to estimate expectation values of operators which do not satisfy the smallness assumption directly, assuming that we have some prior knowledge of the expectation values we wish to estimate. More specifically, we use linear combinations of unitaries~\cite{childs2012hamiltonian} together with techniques for implementing matrix square roots of positive semi-definite (PSD) operators to reduce the problem of estimating generic expectation values to the case of estimating small probabilities.
This allows us to extend the $O(1/\sqrt{\epsilon})$ scaling beyond the case of small probability estimation.  

We provide several applications for these ideas.  First, we apply the idea to expectation value estimation of one-body operators for chemistry.  Specifically, we show that this approach can yield a polynomial advantage if the uncertainty provided is small relative to the mean value of the coefficients of the operator that we wish to measure.  We then discuss how to leverage knowledge of the expectation value for a given molecular configuration to learn the expectation value for a nearby configuration at low cost, assuming that an appropriate smoothness condition on the coefficients in the decomposition of the one-body operator is satisfied.

Our second application involves estimating energy differences using Newton-Cotes formulas to integrate the energy derivative.  We show, under similar assumptions to the one-body operator case, that we can use this approach to take a set of configurations for a molecular system and interpolate from them in order to estimate the energy at nearby points at a lower cost than you would ordinarily be able to do so using standard approaches to estimating the energy of the system such as phase estimation.

\section{Background}

It is a common folk theorem in quantum metrology and quantum computing that classical strategies can be used to estimate a quantity within error $\epsilon$ (with high probability) using $O(1/\epsilon^2)$ applications of a preparation unitary whereas coherent methods such as phase estimation can learn the same quantity using $O(1/\epsilon)$ applications.  The latter scaling is often referred to as Heisenberg limited scaling, whereas the former is called shot noise limited scaling.  It is perhaps less well appreciated that small or large probabilities can actually be estimated using quadratically better scaling than would otherwise be expected.  Specifically, imagine that we have a state of the form $\sqrt{1-\epsilon} \ket{0} + \sqrt{\epsilon}\ket{1}$ and that our goal is to learn the probability of $\ket{1}$ within error $O(\epsilon)$.  Using the sample mean as an unbiased estimator of the probability, we have that the number of samples scales as 
\begin{align}
    N_{\rm samp} \in O\left(\frac{\sigma^2}{\epsilon^2} \right) = O\left(\frac{\epsilon(1-\epsilon)}{\epsilon^2} \right) = O\left(\frac{1}{\epsilon} \right),
\end{align}
where $\sigma^2$ denotes the variance. This represents a quadratic advantage compared to the shot noise-limited scaling. The guarantee that the amplitude of $\ket{1}$ is small, i.e. $O \lb \sqrt{\epsilon} \rb$, is essential in this scaling and can be considered prior knowledge. In this work, we show that the scaling for estimating the expectation value of an operator with prior knowledge can be further improved to $O\left(\frac{1}{\sqrt{\epsilon}} \right)$ through amplitude amplification together with amplitude estimation.

The idea behind standard amplitude estimation~\cite{brassard2002quantum} is intimately linked to that of amplitude amplification.  Specifically, amplitude amplification is a way of transducing a probability into a phase.  After converting the probability into a phase, the quantum phase estimation algorithm can be employed to learn the resultant phase.  If we wish to compute the probability $P =|\braket{0}{\psi}|^2$ for some quantum state $\ket{\psi}$, then we can estimate it by constructing a walk operator $W = -(\openone - 2 \ketbra{\psi}{\psi})(\openone -2\ketbra{0}{0})$.  The eigenvalues of $W$ that are supported by $\ket{\psi}$ are then of the form $e^{\pm i \theta}$ where $\theta = \sin^{-1}(\sqrt{P})$. Thus we can learn a probability $P$ by performing quantum phase estimation on $W$ to yield $\theta$ from which we can estimate $P= \sin^2(\theta)$.
The standard statement of amplitude estimation's performance is given below.

\begin{theorem}[Based on Theorem 12 in Brassard et al.~\cite{brassard2002quantum}]
Let $O_{\psi} : \ket{0} \mapsto \ket{\psi}$ and $O_{\phi}: \ket{0} \mapsto \ket{\phi}$ be invertible oracles and let $R= \openone - 2 O_\psi \ket{0}\!\bra{0}O_\psi^\dagger$, be the reflection with respect to $\ket{\psi}$.  There exists an algorithm that can estimate, for any state $\ket{\phi}$, $\bra{\phi} ( \openone- R)/2 \ket{\phi}$ within error $\epsilon$ and with a probability of success greater than $8/\pi^2$ while using a number of queries to  $O_\psi$ and $O_\phi$  that is in $O(1/\epsilon)$.\label{Th:Brassard2002}
\end{theorem}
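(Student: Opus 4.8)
The plan is to recognize that the quantity to be estimated is precisely the overlap probability handled by standard amplitude estimation, and then to assemble the appropriate walk (Grover) operator out of the two oracles and their inverses and run quantum phase estimation on it. First I would simplify the target. Since $O_\psi \ket{0} = \ket{\psi}$, we have $R = \openone - 2\ketbra{\psi}{\psi}$, so $(\openone - R)/2 = \ketbra{\psi}{\psi}$ is the rank-one projector onto $\ket{\psi}$ and
\begin{align}
\bra{\phi}\frac{\openone - R}{2}\ket{\phi} = |\braket{\psi}{\phi}|^2 =: a .
\end{align}
Thus the problem reduces to estimating the overlap probability $a \in [0,1]$ to additive error $\epsilon$, which is exactly the amplitude that Brassard et al.'s algorithm estimates.

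Next I would construct the walk operator. Define the second reflection $R_\phi = \openone - 2 O_\phi \ketbra{0}{0} O_\phi^\dagger = \openone - 2\ketbra{\phi}{\phi}$, which costs one query each to $O_\phi$ and $O_\phi^\dagger$ together with a reflection about $\ket{0}$; the reflection $R$ is built analogously from $O_\psi$ and $O_\psi^\dagger$. Set $Q = -R\,R_\phi$, mirroring the walk operator $W$ of the background with $\ket{0}$ replaced by the general initial state $\ket{\phi}$. The structural heart of the argument is that the two-dimensional subspace $\mathcal{S} = \mathrm{span}\{\ket{\psi},\ket{\phi}\}$ is invariant under both $R$ and $R_\phi$, and hence under $Q$. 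Writing $a = \sin^2\theta$ with $\theta \in [0,\pi/2]$ and decomposing $\ket{\phi} = \sin\theta\,\ket{\psi} + \cos\theta\,\ket{\psi^\perp}$, I would compute the product of the two reflections in the orthonormal basis $\{\ket{\psi},\ket{\psi^\perp}\}$ of $\mathcal{S}$ and show that $Q|_{\mathcal{S}}$ is a rotation with eigenvalues $e^{\pm 2i\theta}$, so that the eigenphases encode $a$ through $a = \sin^2\theta$.

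Then I would run quantum phase estimation on $Q$ with input state $\ket{\phi} = O_\phi\ket{0}$. Because $\ket{\phi}$ lies in $\mathcal{S}$, it is an equal-weight superposition of the two eigenvectors of $Q$ carrying phases $\pm 2\theta$, so phase estimation returns an estimate $\tilde\theta$ of $\theta$; the residual sign ambiguity is harmless since we only need $\tilde a = \sin^2\tilde\theta$. Using $M$ controlled applications of $Q$ — hence $O(M)$ queries to each of $O_\psi, O_\psi^\dagger, O_\phi, O_\phi^\dagger$ — the standard phase-estimation analysis yields, with probability at least $8/\pi^2$, the Brassard et al. guarantee
\begin{align}
|\tilde a - a| \le \frac{2\pi\sqrt{a(1-a)}}{M} + \frac{\pi^2}{M^2} .
\end{align}
Choosing $M \in O(1/\epsilon)$ makes the right-hand side at most $\epsilon$, which delivers both the claimed error and the $O(1/\epsilon)$ query complexity at the stated success probability.

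The main obstacle I anticipate is not any single hard estimate but the bookkeeping required to cast the generalized, two-oracle setting into the form where Theorem 12 of Brassard et al.\ applies verbatim. In the textbook statement the ``good'' subspace is fixed by a computational-basis projector and the initial state is a single prepared state, whereas here the good subspace is the reflection $R$ about an \emph{arbitrary} state $\ket{\psi}$ and the initial state is a \emph{second} arbitrary state $\ket{\phi}$. I would therefore need to check carefully that every reflection demanded by the Grover iterate is realizable from $O_\psi$, $O_\phi$, and their inverses (this is exactly where the invertibility hypothesis on the oracles is used), and that the invariant-subspace computation carries through unchanged, so that the cited bound holds with $a = |\braket{\psi}{\phi}|^2$.
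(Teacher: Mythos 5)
Your proposal is correct and takes essentially the same route as the paper: the paper offers no independent proof but delegates to Theorem 12 of Brassard et al., and its background sketch is exactly your argument --- reduce $\bra{\phi}(\openone - R)/2\,\ket{\phi}$ to the overlap probability $a = |\braket{\psi}{\phi}|^2$, form the Grover walk operator from the two reflections built out of $O_\psi$, $O_\phi$ and their inverses, and run phase estimation on its $e^{\pm 2i\theta}$ eigenphases to get $|\tilde a - a| \le 2\pi\sqrt{a(1-a)}/M + \pi^2/M^2$ with success probability at least $8/\pi^2$ and $M \in O(1/\epsilon)$. Your invariant-subspace and error-bound bookkeeping is precisely the content of the cited result, so the reconstruction is faithful and complete.
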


This result is known to be optimal~\cite{berry2001optimal}.  It is impossible to achieve better than an $\Omega(1/\epsilon)$ scaling for the phase estimation problem without making further assumptions. A key assumption behind this result is that no prior knowledge is given about the value of the phase in question. While the estimation of small probabilities provides superior scaling, as shown above, its advantage is balanced by the assumptions of small probability and an equally small tolerance for the estimation error. It remains a question whether it is possible to find a method that, under certain assumptions, can provide the $O(1/\sqrt{\epsilon})$ scaling without requiring that the expectation value that we wish to estimate is small. 
We address this question with a new technique called amplified amplitude estimation, described in greater detail below. The general idea is as follows: if we have a good estimate of the expectation value that we wish to improve, we can transform the question into the problem of estimating the difference between the prior value we have and the true value we wish to know. In this way, we have transformed the problem into the estimation of a small probability.

\section{Amplified Amplitude Estimation}

Imagine we know the probability of a measurement outcome within some small error. The question is how this knowledge reduces the cost of probability estimation. In this section, we provide a way to leverage this knowledge by applying amplitude amplification on a small probability to boost it to a much larger probability if our prior knowledge rules out the possibility of over-rotation.  
We will later show how this can be applied to solve the problem of estimating the expectation values of operators.

We reframe the problem of estimating the overlap between two states of Theorem~\ref{Th:Brassard2002} in terms of the expectation value of a projector, $\bra{\psi}\Pi \ket{\psi}$, i.e., Brassard et al.\ estimate $\langle \psi | \Pi_{\rm good} |\psi \rangle =  \bra{\psi}(\openone- R)/2  \ket{\psi}$ where $\Pi_{\rm good}$ is a projector onto a ``good'' subspace and $R$ is a reflection operator~\cite{brassard2002quantum}. For the case where the ``good'' subspace consists of only a projector onto a state $\ket{\phi}$, the resulting probability is simply $|\braket{\psi}{\phi}|^2$. We can estimate the good probability within error $\epsilon$ using their approach in $O(1/\epsilon)$ queries to the underlying state preparation unitaries, which is known to be optimal in the worst-case scenario.  

A fundamental difference compared to Brassard et al.~\cite{brassard2002quantum} is that here we assume having access to prior knowledge in the form of an upper bound, $P_0$, on the quantity we want to estimate, $\bra{\psi}\Pi\ket{\psi}$. The new quantity we want to estimate is then the deviation from this upper bound, $\delta := \bra{\psi}\Pi\ket{\psi} - P_0$. This assumption of prior knowledge precludes the worst-case bounds discussed previously, which allows us, under certain restrictions, to show better scaling than those results allow. More specifically, we use AAE to estimate $\delta$, which then also yields an estimate of $\bra{\psi}\Pi\ket{\psi}$.
The number of queries AAE needs to estimate such a ``good'' or marked probability within error $\epsilon$ under these assumptions is given below.

\begin{lemma}[Amplified Amplitude Estimation] 
\label{lem:estimate_small_prob}
Let $\Pi$ be a projector and assume we have access to a unitary oracle $O_{\psi}$ such that $O_{\psi} \ket{0} = \ket{\psi}$ and a reflection oracle $R_\Pi$ such that $\Pi := (\openone-R_{\Pi})/2$. We assume that $\bra{\psi} \Pi \ket{\psi} = P_0 + \delta$, where $P_0=\sin^2(\pi/(2(2\mu+1)))$ is a known upper bound on the quantity $\bra{\psi} \Pi \ket{\psi}$ which we will interpret as a success probability. Furthermore, $\mu\ge 1$ is an integer and $\delta \in [-P_0, 0)$ is an unknown negative number. Then there exists a quantum algorithm that can estimate $\delta$, and hence $\bra{\psi} \Pi \ket{\psi}$, within error $\epsilon \in O(P_0^2/|\delta|) \subseteq o(1)$ and with failure probability at most $\delta'$ using 
$$
    N_\textrm{queries} \in O\left(\frac{P_0^{1/2} \log(1/\delta')}{\epsilon}\left( \frac{P_0}{|\delta|}\right) \right)
$$
queries to $R_{\Pi}$ and $O_{\psi}$.
\end{lemma}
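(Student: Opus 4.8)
The plan is to use amplitude amplification to convert the small, a priori bounded probability $\bra{\psi}\Pi\ket{\psi} = P_0 + \delta$ into a probability close to one, and then apply standard amplitude estimation (Theorem~\ref{Th:Brassard2002}) to the amplified state. First I would restrict attention to the two-dimensional invariant subspace spanned by the normalized projections of $\ket{\psi}$ onto the range and kernel of $\Pi$, writing $\ket{\psi} = \sin\theta\,\ket{g} + \cos\theta\,\ket{b}$ so that $\sin^2\theta = P_0 + \delta$. On this subspace the walk operator $Q = -(\openone - 2\ketbra{\psi}{\psi})R_\Pi$ acts as a rotation by $2\theta$, so that $Q^\mu\ket{\psi}$ makes angle $\Theta := (2\mu+1)\theta$ with $\ket{b}$. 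The role of the specific choice $P_0 = \sin^2(\pi/(2(2\mu+1)))$ is then transparent: when $\delta = 0$ we have $\theta = \theta_0 := \pi/(2(2\mu+1))$ and $\Theta = \pi/2$, i.e.\ $\mu$ rounds of amplification rotate $\ket{\psi}$ exactly onto the good subspace.

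The key structural point I would establish next is that, because $\delta \in [-P_0,0)$, we have $\theta \le \theta_0$ and hence $\Theta \le \pi/2$: the prior knowledge that $P_0$ is an upper bound precludes over-rotation. Consequently the amplified good probability $\tilde p := \sin^2\Theta$ is a strictly monotone, invertible function of $\delta$ on the relevant interval, and writing $\Theta = \pi/2 - \phi$ the residual bad probability $1 - \tilde p = \sin^2\phi$ is small. Treating $A := Q^\mu O_\psi$ as a composite state-preparation oracle that costs $\mu$ (of order $1/\sqrt{P_0}$) queries to $O_\psi$ and $R_\Pi$, I would apply Theorem~\ref{Th:Brassard2002} to estimate $\tilde p$ to additive error $\epsilon_{\tilde p}$ using $O(1/\epsilon_{\tilde p})$ calls to $A$, and then invert the known map $\delta \mapsto \tilde p$ to recover an estimate of $\delta$.

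The main work, and the step I expect to be the crux, is the error-propagation argument that turns the additive error $\epsilon_{\tilde p}$ on $\tilde p$ into the claimed additive error $\epsilon$ on $\delta$. Using $\delta \approx -\sin(2\theta_0)\,\Delta\theta$ with $\Delta\theta := \theta_0 - \theta$ and $\phi = (2\mu+1)\Delta\theta$, one finds $\phi \approx \pi|\delta|/(4P_0)$, and differentiating $\tilde p = \sin^2\Theta$ through the chain $\delta \mapsto \Theta \mapsto \tilde p$ yields a sensitivity $|d\delta/d\tilde p|$ of order $P_0^2/|\delta|$. Hence $\epsilon_{\tilde p}$ must be taken of order $\epsilon\,|\delta|/P_0^2$, which both explains the constraint $\epsilon \in O(P_0^2/|\delta|)$ (it is precisely the regime $\epsilon_{\tilde p} = O(1)$) and gives
$$
N_\textrm{queries} \in O\!\left(\frac{\mu}{\epsilon_{\tilde p}}\right) \subseteq O\!\left(\frac{1}{\sqrt{P_0}}\cdot\frac{P_0^2}{\epsilon\,|\delta|}\right) = O\!\left(\frac{P_0^{1/2}}{\epsilon}\cdot\frac{P_0}{|\delta|}\right).
$$
The delicate part here is that the sensitivity depends on the unknown $\delta$ and varies across the confidence interval, so I would need to control the higher-order terms of the expansion (not merely the derivative at a point) over the whole interval $[-P_0,0)$ and argue that the linearization is faithful once $\epsilon \in o(1)$. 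Finally, since Theorem~\ref{Th:Brassard2002} succeeds only with probability greater than $8/\pi^2$, I would run $O(\log(1/\delta'))$ independent repetitions and take the median to boost the success probability to $1-\delta'$, which contributes the remaining $\log(1/\delta')$ factor.
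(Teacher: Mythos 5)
Your proposal is correct and follows essentially the same route as the paper's proof: amplify with $W^\mu$ (the paper builds its walk operator $W'$ from the boosted oracle $W^\mu O_\psi$, which is your composite oracle $A$), run amplitude estimation on the amplified probability $P_1=\sin^2\left((2\mu+1)\arcsin\sqrt{P_0+\delta}\right)$, invert the map with sensitivity $|d\delta/dP_1|\in O(P_0^2/|\delta|)$, and repeat $O(\log(1/\delta'))$ times to boost the success probability. The only difference is presentational — you argue in the rotation-angle picture while the paper carries out the corresponding $\arcsin/\sin^2$ series expansion explicitly in Eq.~\eqref{eq:deltaerr} — and the linearization-validity issue you flag is exactly the caveat the paper absorbs into the assumption $\epsilon \in O(P_0^2/|\delta|) \subseteq o(1)$.
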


\begin{proof}
By assumption, we have that $P_0$ is a probability such that for some positive integer $\mu\ge 1$
\begin{equation}
    \sin^{-1}(\sqrt{P_0}) = \frac{\pi}{2(2\mu+1)}
\label{eq:P0def},
\end{equation} 
which is to say that the angle in the rotation carried out by a single application of amplitude amplification is at most $\frac{\pi}{2(2\mu + 1)}$.
Further, we know that there exists a projector $\Pi$ and a negative number $\delta$ such that
\begin{equation}
    \bra{\psi} \Pi \ket{\psi} = P_0 + \delta = \sin^2\left(\frac{\pi}{2(2\mu+1)}\right) + \delta.
\label{eq:probdef}
\end{equation}
Next, note that the reflection operator $R_{\psi} := (\openone -2 \ketbra{\psi}{\psi})$ can be implemented using the unitary oracle $O_\psi$:
\begin{equation}
R_{\psi} =  \openone-2O_\psi\ketbra{0}{0}O_\psi^\dagger.
\end{equation}

This allows us to implement amplitude amplification through the walk operator
\begin{equation}
    W := - R_{\psi} R_{\Pi}= -  O_\psi \lb \openone-2 \ketbra{0}{0} \rb O_{\psi}^\dagger \lb \openone-2 \Pi \rb 
\end{equation}
using $O(1)$ queries to our oracles $O_\psi$ and $R_{\Pi}$~\cite{brassard2002quantum} .

Next, consider the boosted probability that would be found by applying $W^{\mu}$, corresponding to $\mu$ applications of $W$, to the state $\ket{\psi}$ to amplify the probability given in Eq.~\eqref{eq:probdef}. Using this amplified oracle is equivalent to constructing a new walk operator $W'$ such that
\begin{equation}
     W' := - \lb \openone - 2 W^\mu O_\psi \ketbra{0}{0}O_\psi^\dag {W^\mu}^\dag \rb \lb \openone -2\Pi \rb.  
\end{equation} 
The idea then is to perform amplitude estimation with the new walk operator $W'$.
This operator, from the analysis of~\cite{brassard2002quantum}, has eigenvalues (within the two-dimensional subspace in question) of the form $\exp(\pm i \arcsin{\sqrt{P_1}})$ where
\begin{equation}
    P_1 = \sin^2\left((2\mu +1) \sin^{-1}(\sqrt{P_0 +\delta}) \right)=\sin^2\left(\frac{\pi}{2}\left(\frac{\sin^{-1}(\sqrt{P_0 +\delta})}{\sin^{-1}(\sqrt{P_0})} \right)\right).
\label{eq:P1def}
\end{equation}

If we are given $\hat{P}_1$, an $\epsilon'$-accurate estimate of the probability, then we can solve for our estimate $\hat{\delta}$ of $\delta$ as a function of $\epsilon':= P_1-\hat{P}_1 $. This expression and a series expansion for the quantity in terms of small $\epsilon'$ and $\delta$ can be found using elementary algebra and is given below: 
\begin{align}
    \hat\delta &= \hat{\bra{\psi}\Pi\ket{\psi}} -P_0 \nonumber\\
    &= \sin^2\left(\frac{2\sin^{-1}(\sqrt{\hat{P}_1})\sin^{-1}(\sqrt{{P}_0})}{\pi}\right)-P_0\nonumber\\
    &= \sin^2\left(\frac{2\sin^{-1}(\sqrt{{P}_1-\epsilon'})\sin^{-1}(\sqrt{{P}_0})}{\pi}\right)-P_0\nonumber\\
    &=\delta -\frac{8P_0(1-P_0) \arcsin^2(\sqrt{P_0})\epsilon'}{\pi^2 \delta} + O(\epsilon' (1+\epsilon'/\delta)).
\label{eq:deltaerr}
\end{align}
Here we require $P_0 < 1$ for the leading order term in the asymptotic series to have this scaling since for $P_0=1$ the arcsin and sine trivially invert each other, making this form inappropriate for a series expansion along with causing the leading order term in the previous expansion to vanish. Note though that the condition $P_0 < 1$ follows directly from Eq.~\eqref{eq:P0def} if $\mu \geq 1$ since we then have that $P_0\leq \arcsin(\pi/6)^2= 1/4$. For this reason, we assume that $\mu \ge 1$, which then automatically ensures that $P_0 <1$.
From~\eqref{eq:deltaerr} we can obtain the error in the estimated value of $\delta$:
\begin{align}
    |\hat{\delta} - \delta| &\in O\left(\frac{P_0(1-P_0) \arcsin^2(\sqrt{P_0})\epsilon'}{|\delta|} \right).
\label{eq:est_delta}
\end{align}
Note that we neglected the $O(\epsilon' (1+\epsilon'/\delta))$ term from Eq.~\eqref{eq:deltaerr} since $\epsilon' < \frac{\epsilon'}{\delta}$.

Next, to understand the asymptotic behavior of this expression, it is useful to consider the case where $P_0 \ll 1$ such that $\arcsin(\sqrt{P_0}) \in \Theta(\sqrt{P_0})$ and similarly, $1-P_0 \in \Theta(1)$.  This leads to the conclusion that
\begin{equation}
|\hat{\delta} - \delta| \in O\left(\frac{P_0^2\epsilon'}{|\delta|} \right).
\end{equation}
Thus, if we wish to estimate $\delta$ within error $\epsilon$, it suffices to choose
\begin{equation}
\epsilon' \in \Theta\left(\frac{\epsilon |\delta|}{P_0^2} \right),
\label{eps_bound}
\end{equation}
which implies that if we wish to perform amplitude estimation within error $\epsilon$ and probability of failure smaller than $\delta'$, we will need a number of queries to the boosted walk operator $W'$ that obeys~\cite{brassard2002quantum, Ge2019groundstate} 
\begin{equation}
N_{W'} \in O\left(\frac{P_0^2\log(1/\delta')}{\epsilon |\delta|} \right).
\end{equation}
The factor of $\log(1/\delta')$ comes from the need to repeat the amplitude estimation procedure $\log(1/\delta')$ times to ensure that the failure probability is at most $\delta'$.
Further, taking Eq.~\eqref{eps_bound} into account, the smallness assumptions leading to Eq.~\eqref{eq:deltaerr} are satisfied if we pick $\epsilon \in O(P_0^2/|\delta|) \subseteq o(1)$.
As there are $O(\mu) = O(1/\sqrt{P_0})$ applications of $W$ needed for each application of the walk operator $W'$, we then find that
\begin{equation}
N_\textrm{queries} = N_W \in O(\mu N_{W'}) = O\left(\frac{P_0^{1/2} \log(1/\delta')}{\epsilon}\left( \frac{P_0}{|\delta|}\right) \right).
\end{equation}
\end{proof}

\begin{algorithm}[t]
        \caption{$\mathtt{AmplifiedAmplitudeEstimation}$: Estimation of probabilities with prior knowledge}
    \label{alg:class_ev}
    \KwIn{Reflection oracle $R_\Pi = \openone - 2 \Pi$, unitary state preparation oracle $O_\psi: \ket{0} \mapsto \ket{\psi}$, integer $\mu \geq 1$, allowable estimation error $\epsilon$, allowable failure probability $\delta'$.}
    \KwOut{An estimate of $\bra{\psi} \Pi \ket{\psi}$.}
    \begin{enumerate}
        \item Construct the operator $W' = -(\openone - 2 W^\mu O_\psi \ketbra{0}{0}O_\psi^\dag {W^\mu}^\dag)(\openone -2\Pi)$ \newline
        where $W = - R_\psi R_\Pi$ and $R_\psi = O_\psi \lb \One - 2 \ketbra{0}{0} \rb O_\psi^\dagger$ \;
        \item  Prepare the state $O_{\psi} \ket{0}$\;
        \item  Apply a unitary implementation of an amplitude estimation protocol, $U_{AE}(\epsilon, \delta')$, with $W'$ to $\ket{0}_a \ket{\psi}$ where $\epsilon, \delta'$ are the requested error and failure probabilities for the protocol and $\ket{0}_a$ is an ancilla register\;
        \item Let $\hat{P}_1$ denote the value of the phase returned in the ancilla register at the end of the amplitude estimation protocol\;
        \item $P_0 \gets \sin^2\left(\frac{\pi}{2(2\mu+1)}\right)$
        \item Classically compute and return $\sin^2\left(\frac{2\sin^{-1}(\sqrt{\hat{P}_1})\sin^{-1}(\sqrt{{P}_0})}{\pi}\right)$ as our estimate of $\bra{\psi} \Pi \ket{\psi}$.
    \end{enumerate}
\end{algorithm}

This means that to leverage the prior information, we perform an estimation procedure not directly on the expectation value but on the difference $\delta$ between the expectation value and our (known) upper bound on it. The advantage arises only in those cases where our upper bound $P_0$ is only slightly larger than the value of $\delta$. 
More specifically, we have an advantage over standard amplitude estimation if $P_0^{3/2} \in o(|\delta|)$. Note that above we assume that $\epsilon \in O(P_0^2/|\delta|)$. However, if we have the more stringent assumption that $\epsilon~\in~\Theta(P_0^2/|\delta|)$, then the scaling goes like
\begin{equation}
    N_{queries} \in O\left( \frac{\log(1/\delta')}{\sqrt{P_0}} \right),
\end{equation}
which demonstrates an advantage over what would be expected from Heisenberg limited scaling in this context. 
Put differently, if $|\delta|\in \Theta(P_0)$ and $P_0 \in \Theta(\epsilon)$, then we find $N_{\rm queries} \in O(1/\sqrt{\epsilon})$.

\begin{table}
    \centering
    \begin{tabular}{|l|l|}
        \hline
      \textbf{Symbol}   &  \textbf{Meaning}   \\ 
      \hline
        $\bra{\psi}\Pi \ket{\psi}$ & Expectation value (EV) we want to estimate \\
        $\mu \geq 1$ & Integer which quantifies our prior knowledge \\
        $P_0$ & Known upper bound on the target EV: $P_0 = \sin^2 (\pi/(2(2\mu +1)))$ \\
        $\delta<0$ & Defines how loose $P_0$ is: $\bra{\psi}\Pi \ket{\psi]}= P_0 +\delta$  \\
        $\epsilon':= P_1-\hat{P}_1 $ & Accuracy of the prior $\hat{P}_1$\\
        $\epsilon$ & Error in the estimation of $\delta$   \\
        $\delta'$ & Failure probability upper bound in estimation of $\delta$\\
        $\hat{\delta}$ &  $\epsilon$-precise estimate of $\delta$ \\
    \hline
    \end{tabular}
    \caption{Summary of variables - $\delta$ is the quantity we need to estimate.}
    \label{tab:my_label}
\end{table}

\subsection{Probability Estimation Using Fast-Square-Rootable Operators}

A challenge with amplified amplitude estimation is that we require the probability we wish to estimate to be small. Usually, probability estimation relies on techniques like the SWAP or Hadamard test. However, the Hadamard test does not return the expectation value directly but will return $(1/2 + {\rm Re}(\bra{\psi} U \ket{\psi})/2)$ for some unitary $U$ encoding the operator whose expectation value we wish to estimate. For small expectation values, $\bra{\psi} U \ket{\psi} \ll 1$, the amplitudes of the output state will be close to $1/2$ and our method from Lemma~\ref{lem:estimate_small_prob} cannot boost the result. 
Instead, we require a setup that directly encodes the small expectation values into an amplitude. Below, we provide a method for computing small expectation values by exploiting ideas from linear combinations of unitaries~\cite{childs2012hamiltonian}. Our method relies on the decomposition of $A$ into sums of matrices that are ``fast-square-rootable'', which means that there exists a constant query complexity algorithm for preparing the square root of the matrices given query access to the original matrix.

Let us now clarify our access model for the operators and quantum states whose expectation values we wish to estimate.
\begin{definition}[Access model]
\label{def:access}
    Define $O_{\psi}$ to be a unitary state preparation oracle such that $O_\psi \ket{0}^{\otimes n} = \ket{\psi}$ and the inverse, $O_{\psi}^\dagger \ket{\psi} = \ket{0}^{\otimes n}$, can be performed at the cost of a single query.
    Furthermore, let $A = \sum_{j=0}^{J-1} \alpha_j U_j$ be a unitary decomposition of a matrix $A \in \mathbb{C}^{2^n \times 2^n}$ such that $\alpha_j \geq 0$ and $U_j \in \mathbb{C}^{2^n \times 2^n}$ is unitary for all $j$.
    Define the block-encoding operations for $A$ as follows:
    \begin{equation}
        \mathtt{PREPARE}_A \ket{0}\ket{\psi} := \sum_{j=0}^{J-1} \frac{\sqrt{\alpha_j}}{\sqrt{\alpha}}\ket{j}\ket{\psi}, \qquad \mathtt{SELECT}_A := \sum_{j=0}^{J-1} \ketbra{j}{j}\otimes U_j,
    \end{equation}
    where $\alpha := \sum_{j=0}^{J-1} \alpha_j$. Then the unitary
    \begin{equation}
        U_A := \mathtt{PREPARE}_A^\dagger \, \mathtt{SELECT}_A \, \mathtt{PREPARE}_A
    \end{equation}
    block-encodes the matrix $A$ via $(\bra{0} \otimes \openone)U_A(\ket{0} \otimes \openone)= A/\alpha$.
\end{definition}

The following lemma shows how the expectation value of an operator can be encoded directly as a success probability.
\begin{lemma}[Expectation value as a success probability]
\label{lem:smalltrace}
Let $A_j$ be an operator such that $A_j = \sum_{k\ge 1} \beta_{jk} \Pi_{jk}$ where $\Pi_{jk}$ are projectors and $\beta_{jk} \geq 0$. Each projector is of the form $\Pi_{jk} = (\One - R_{jk})/2$ where $R_{jk}$ is a reflection operator. 
Let, for $y\in \{0,1\}$, 
\begin{equation}
    \mathtt{SELECT}_\pi\ket{k}\ket{y} \ket{\psi}\ket{\phi} = \ket{k} Z\ket{y} \lb R_{jk}^y\otimes (\ketbra{0}{k} + \ketbra{k}{0}) \rb \ket{\psi}\ket{\phi}
\end{equation}
and
\begin{equation}
    \mathtt{PREPARE}_{\pi}\ket{0}\ket{0} = \sum_{ky} \frac{({\beta_{jk}})^{1/4}}{\sqrt{2} \sqrt{\sum_{k} \sqrt{\beta_{jk}}}} \ket{k} \ket{y}
\end{equation}
be two unitary quantum oracles. Then there exists a unitary circuit, $U_\pi$, that can be implemented using a single query to each of $\mathtt{PREPARE}_\pi$, $\mathtt{PREPARE}_\pi^\dagger$ and $\mathtt{SELECT}_\pi$ such that for any state $\ket{\psi}$
$$
    {\rm Tr}(U_\pi(\ketbra{\psi}{\psi}\otimes \ketbra{0}{0}) U_\pi^\dagger (\One \otimes \ketbra{0}{0} )) = \frac{\bra{\psi} A_j \ket{\psi}}{\lb \sum_{k} \sqrt{\beta_{jk}} \rb^2}. 
$$
\end{lemma}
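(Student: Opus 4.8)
The plan is to recognize this as a linear-combination-of-unitaries computation in the same spirit as the access model above, and to read off the claimed success probability by tracing a single state through the circuit
\[
U_\pi := \mathtt{PREPARE}_\pi^\dagger \, \mathtt{SELECT}_\pi \, \mathtt{PREPARE}_\pi ,
\]
which manifestly uses one query each to $\mathtt{PREPARE}_\pi$, $\mathtt{SELECT}_\pi$, and $\mathtt{PREPARE}_\pi^\dagger$. Here $\mathtt{PREPARE}_\pi$ acts on the index register $\ket{k}$ and the sign qubit $\ket{y}$, $\mathtt{SELECT}_\pi$ additionally touches the system register holding $\ket{\psi}$ and the pointer register $\ket{\phi}$, and I initialize all ancillas to $\ket{0}$. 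The whole argument reduces to computing the amplitude of the branch in which the $\ket{k}$ and $\ket{y}$ registers return to $\ket{0}$, because the trace in the statement is exactly the probability of that measurement outcome, namely $\norm{(\One\otimes\bra{0})U_\pi\ket{\psi}\ket{0}}^2$ with $\One$ acting on the system-and-pointer space.

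First I would apply $\mathtt{PREPARE}_\pi$ to obtain $\sum_{k\ge1}\sum_{y}\frac{(\beta_{jk})^{1/4}}{\sqrt2\sqrt{S}}\ket{k}\ket{y}\ket{\psi}\ket{0}$ with $S:=\sum_k\sqrt{\beta_{jk}}$, noting that the amplitude is flat in $y$ so the sign qubit sits in the uniform superposition. Applying $\mathtt{SELECT}_\pi$ then multiplies each branch by $Z\ket{y}=(-1)^y\ket{y}$, applies $R_{jk}^y$ to $\ket{\psi}$, and maps $\ket{0}\mapsto\ket{k}$ in the pointer register (well defined since $k\ge1$). The key manipulation is the final $\mathtt{PREPARE}_\pi^\dagger$ followed by projection onto $\ket{0}\ket{0}$ in the $(k,y)$ registers: this contracts against $\mathtt{PREPARE}_\pi\ket{0}\ket{0}$, producing a second factor of $(\beta_{jk})^{1/4}/(\sqrt2\sqrt S)$ so that the two quarter-powers combine into $\sqrt{\beta_{jk}}$, while the sum over $y$ of $(-1)^yR_{jk}^y$ assembles $\One-R_{jk}=2\Pi_{jk}$. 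I expect to land on the (unnormalized) state
\[
\ket{\chi}=\frac{1}{S}\sum_{k\ge1}\sqrt{\beta_{jk}}\,\Pi_{jk}\ket{\psi}\otimes\ket{k}_\phi .
\]

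It then remains to take the norm. Because the pointer register records the distinct label $\ket{k}$ in each branch, the cross terms vanish and the norm is an \emph{incoherent} sum, $\norm{\ket{\chi}}^2 = S^{-2}\sum_k\beta_{jk}\bra{\psi}\Pi_{jk}^\dagger\Pi_{jk}\ket{\psi} = S^{-2}\sum_k\beta_{jk}\bra{\psi}\Pi_{jk}\ket{\psi}$, where I use idempotence $\Pi_{jk}^\dagger\Pi_{jk}=\Pi_{jk}$. Recognizing $\sum_k\beta_{jk}\Pi_{jk}=A_j$ gives $\norm{\ket{\chi}}^2=\bra{\psi}A_j\ket{\psi}/S^2$, which is exactly the claimed value once the trace is identified with this success probability. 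I would stress in the writeup that the role of the $\ket{\phi}$ register is precisely to decohere the $k$-branches: without it one would obtain $\norm{\sum_k\sqrt{\beta_{jk}}\Pi_{jk}\ket{\psi}}^2$ with spurious interference, and the expectation value would fail to appear linearly.

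The main obstacle I anticipate is not the amplitude bookkeeping but justifying that $U_\pi$ is a genuine unitary implementable with the stated single queries. The partial-swap $\ketbra{0}{k}+\ketbra{k}{0}$ appearing in $\mathtt{SELECT}_\pi$ is only an isometry on the two-dimensional subspace $\{\ket{0},\ket{k}\}$, so I would need to check that $\mathtt{SELECT}_\pi$ extends to a unitary on the full space, that the $k\ge1$ restriction keeps the $\ket{0}\leftrightarrow\ket{k}$ action well defined and collision-free across branches, and that restricting attention to the initial pointer state $\ket{0}$ is consistent with this extension. A secondary point to state carefully is the register bookkeeping behind the trace: the operator $\One\otimes\ketbra{0}{0}$ must be read as projecting only the $(k,y)$ registers onto $\ket{0}$ while $\One$ acts on the system-and-pointer space, so that the $\ket{\phi}$ record is retained rather than measured. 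Once these points are pinned down, the computation above is routine.
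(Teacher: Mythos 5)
Your proposal is correct and follows essentially the same route as the paper's proof: the same conjugated circuit $U_\pi = \mathtt{PREPARE}_\pi^\dagger\,\mathtt{SELECT}_\pi\,\mathtt{PREPARE}_\pi$, the same tracking of the state through the three stages, projection of the $(k,y)$ registers onto $\ket{00}$ so the two factors of $(\beta_{jk})^{1/4}$ combine into $\sqrt{\beta_{jk}}$ and the $y$-sum assembles $\One - R_{jk} = 2\Pi_{jk}$, and the observation that the orthogonal pointer states $\ket{k}$ make the norm an incoherent sum yielding $\bra{\psi}A_j\ket{\psi}/\lb\sum_k\sqrt{\beta_{jk}}\rb^2$. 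Your two flagged subtleties---extending $\ketbra{0}{k}+\ketbra{k}{0}$ to a genuine unitary by acting as the identity off ${\rm span}\{\ket{0},\ket{k}\}$, and reading $\One\otimes\ketbra{0}{0}$ as projecting only the $(k,y)$ ancillas while retaining the pointer record---are handled correctly and are merely left implicit in the paper.
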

\begin{proof}
The result follows straightforwardly from the results of~\cite{childs2012hamiltonian} and~\cite{somma2013spectral} but we provide a short proof here for the interested reader.  Specifically, it uses intuition that the square root can be block encoded as
\begin{equation}
B_j := \sum_{{k\geq1}} \sqrt{\beta_{jk}} \Pi_{jk} \otimes (\ketbra{k}{0} + \ketbra{0}{k})
\end{equation}
since we then have that
\begin{equation}
B_j^2 = A_j \otimes \ketbra{0}{0} + \sum_{kk'\ge 1} {\sqrt{\beta_{jk}} \sqrt{\beta_{jk'}}} \Pi_{jk} \Pi_{jk'} \otimes \ketbra{k}{k'}.
\end{equation}
First, let us choose $U_\pi$ in the following manner:
\begin{equation}
    U_\pi := (\mathtt{PREPARE}_\pi\otimes \One)^\dagger \mathtt{SELECT}_\pi(\mathtt{PREPARE}_\pi\otimes \One).
\end{equation}
We then have that
\begin{align}
    U_\pi \ket{0}\ket{0}\ket{\psi} \ket{0} &=(\mathtt{PREPARE}_\pi\otimes \One)^\dagger \mathtt{SELECT}_\pi(\mathtt{PREPARE}_\pi\otimes \One)\ket{0}\ket{0}\ket{\psi} \ket{0}\nonumber\\
    &= (\mathtt{PREPARE}_\pi\otimes \One)^\dagger \mathtt{SELECT}_\pi \frac{\sum_{ky}(\beta_{jk})^{1/4}}{\sqrt{2} \sqrt{\sum_{k}\sqrt{\beta_{jk}}}} \ket{k}\ket{y} \ket{\psi} \ket{0}\nonumber\\
    &= (\mathtt{PREPARE}_\pi\otimes \One)^\dagger \frac{\sum_{ky}(\beta_{jk})^{1/4}}{\sqrt{2} \sqrt{\sum_{k}\sqrt{\beta_{jk}}}} \ket{k}Z\ket{y} R_{jk}^y\ket{\psi} \ket{k}\\\nonumber
    &= (\mathtt{PREPARE}_\pi\otimes \One)^\dagger \frac{\sum_{ky}(\beta_{jk})^{1/4}}{\sqrt{2} \sqrt{\sum_{k}\sqrt{\beta_{jk}}}} \ket{k}\ket{y} (-R_{jk})^y\ket{\psi} \ket{k}\nonumber.
\end{align}

Next, with this expression in place, we have that the unnormalized quantum state after having measured the $\mathtt{PREPARE}$ ancilla register as $\ket{00}$ is given by 
\begin{align}
    &\lb \bra{00} \otimes \One) (\mathtt{PREPARE}_\pi\otimes \One \rb^\dagger \frac{\sum_{ky}(\beta_{jk})^{1/4}}{\sqrt{2} \sqrt{\sum_{ky}\sqrt{\beta_{jk}}}} \ket{k}\ket{y} (-R_{jk})^y\ket{\psi} \ket{k}  \nonumber\\
    & =\sum_{k'y'ky} \frac{\beta_{jk}^{1/4} \beta_{jk'}^{1/4}}{2 \sum_{k} \sqrt{\beta_{jk}}} (\bra{k'} \bra{y'} \otimes \One) (\ket{k}\ket{y}(-R_{jk})^{y}) \ket{\psi} \ket{k}\nonumber\\
    & =\sum_{ky} \frac{\sqrt{\beta_{jk}} }{2 \sum_{k} \sqrt{\beta_{jk}}} (-R_{jk})^{y}) \ket{\psi} \ket{k}.
\end{align}
The probability of this occurring is by Born's rule the sum of the squares of the amplitudes which yields
\begin{equation}
    \norm{\sum_{ky} \frac{\sqrt{\beta_{jk}} }{2 \sum_{k} \sqrt{\beta_{jk}}} (-R_{jk})^{y} \ket{\psi} \ket{k}}^2= \sum_{k} \frac{\beta_{jk} {\bra{\psi} (\One - R_{jk})\ket{\psi}}}{2\lb \sum_k \sqrt{\beta_{jk}}\rb^2} =\frac{\bra{\psi} A_{j} \ket{\psi}}{\lb \sum_k \sqrt{\beta_{jk}}\rb^2}.
\end{equation}
\end{proof}

The above shows a straightforward experiment that can be performed to estimate the expectation value of any observable that is a convex combination of projection operators. This, unfortunately, is not general. While all operators can trivially be written as a sum of projectors, such a sum is not in general convex, and convexity is required to ensure that the signs of the combinations used in the above lemma combine to form the desired mean value. A natural tactic to deal with this is to express an arbitrary operator as a non-convex combination of convex sums of projectors. We use this tactic in the following result, where we provide an end-to-end algorithm for estimating the expectation value of such a non-convex combination, given that we know approximate expectation values for each of the convex sums.

\begin{theorem}[Application of AAE to fast-square-rootable operators]
\label{thm:AAE}
Let $A = \sum_{j=0}^{J-1} A_j = \sum_{jk} \beta_{jk} \Pi_{jk}$ such that for any $j'$, all $\beta_{j'k}$ are either non-negative or are all non-positive and $\Pi_{jk}$ are projectors. Assume we are provided a series of upper bounds $P_0(j)$ such that for all $j$
\begin{equation}
    P_0(j):=\sin^2 \lb \frac{\pi}{2(2\mu_j + 1)} \rb = \frac{\bra{\psi} A_j \ket{\psi}}{\lb \sum_k \sqrt{\beta_{jk}} \rb ^2} - \delta_j,
\end{equation}
with $\mu_j \geq 1$ being an integer and $\delta_j \in [-P_0(j), 0)$. Let $\|\beta\|_{1,1/2} := \sum_{j} \lb \sum_k \sqrt{\beta_{jk}} \rb^2$.
Then there exists a quantum algorithm that, for any $\ket{\psi}$ and $\epsilon \in O \lb \|\beta\|_{1,1/2} \min_j \{ P_0^2(j)/|\delta_j| \} \rb \subseteq o(1)$, can estimate $\bra{\psi} A \ket{\psi}$ within error $\epsilon$ and probability of failure at most $\delta'$ using a number of queries to $\mathtt{SELECT}_\pi$ and $\mathtt{PREPARE}_\pi$ that scales as
$$
    {O}\left( \frac{\sum_j\sqrt{P_0(j)} \|\beta\|_{1,1/2} \log(J/\delta')}{\epsilon } \max_j\left(\frac{P_0(j)}{|\delta_j|} \right) \right).
$$
\end{theorem}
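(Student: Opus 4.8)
The plan is to reduce the estimation of $\bra{\psi} A \ket{\psi}$ to $J$ independent instances of amplified amplitude estimation, one per block $A_j$, and then recombine the block estimates by linearity. First I would set $c_j := \lb \sum_k \sqrt{\beta_{jk}} \rb^2$, so that $\|\beta\|_{1,1/2} = \sum_j c_j$, and check that each block meets the hypotheses of \lem{smalltrace}. For a block whose coefficients are all non-negative I apply that lemma directly to build a circuit $U_\pi^{(j)}$ whose success probability (measuring the PREPARE ancilla in $\ket{0}$) equals $p_j := \bra{\psi} A_j \ket{\psi}/c_j$; for a block whose coefficients are all non-positive I instead apply it to $-A_j = \sum_k (-\beta_{jk}) \Pi_{jk}$, which has non-negative coefficients, obtain the success probability $-p_j \ge 0$, and flip the sign of that block's estimate at the end. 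The sign hypothesis on the $\beta_{j'k}$ is exactly what guarantees that, in either case, the quantity fed into the amplitude-estimation machinery is a genuine probability lying in $[0, P_0(j))$.

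Next I would invoke \lem{estimate_small_prob} block by block. For block $j$ I take its AAE state-preparation oracle $O_\psi$ to be $U_\pi^{(j)}$ composed with the system-state preparation (an $O(1)$ overhead in queries to $\mathtt{PREPARE}_\pi$, $\mathtt{PREPARE}_\pi^\dagger$ and $\mathtt{SELECT}_\pi$), and I take its reflection oracle $R_\Pi$ to be the reflection about the success subspace $\One \otimes \ketbra{0}{0}$, which costs no $\mathtt{PREPARE}_\pi/\mathtt{SELECT}_\pi$ queries. By hypothesis $p_j = P_0(j) + \delta_j$ with $P_0(j) = \sin^2\lb \pi/(2(2\mu_j+1)) \rb$, $\mu_j \ge 1$ and $\delta_j \in [-P_0(j),0)$, so the preconditions of the lemma hold verbatim with $\mu = \mu_j$. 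Estimating $p_j$ (equivalently $\delta_j$) to error $\epsilon_j$ with failure probability $\delta_j'$ then costs $O\lb \sqrt{P_0(j)} \log(1/\delta_j')\, P_0(j)/(\epsilon_j |\delta_j|) \rb$ queries to $\mathtt{PREPARE}_\pi$ and $\mathtt{SELECT}_\pi$.

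The key design choice is the error budget. Since $\bra{\psi} A \ket{\psi} = \sum_j c_j p_j$ (with the sign flips absorbed), the triangle inequality bounds the total error by $\sum_j c_j \epsilon_j$, so I would allocate a \emph{uniform} per-block probability error $\epsilon_j = \tilde\epsilon := \epsilon / \|\beta\|_{1,1/2}$, making the total estimation error at most $\tilde\epsilon \sum_j c_j = \epsilon$. With this choice each block costs $O\lb \sqrt{P_0(j)} \|\beta\|_{1,1/2} \log(1/\delta_j') \, P_0(j)/(\epsilon|\delta_j|) \rb$, and summing over $j$ while bounding $\sum_j \sqrt{P_0(j)}\, P_0(j)/|\delta_j| \le \lb \sum_j \sqrt{P_0(j)} \rb \max_j P_0(j)/|\delta_j|$ reproduces the claimed complexity. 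Taking a per-block failure probability $\delta_j' = \delta'/J$ and a union bound over the $J$ blocks turns each $\log(1/\delta_j')$ into $\log(J/\delta')$ and caps the total failure probability at $\delta'$.

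The crux I expect to require care is verifying that the single uniform choice $\tilde\epsilon = \epsilon/\|\beta\|_{1,1/2}$ simultaneously lands inside the validity regime $\epsilon_j \in O(P_0^2(j)/|\delta_j|)$ that \lem{estimate_small_prob} demands for \emph{every} block, since the series expansion underpinning that lemma is only accurate there. This is precisely why the hypothesis is stated as $\epsilon \in O\lb \|\beta\|_{1,1/2} \min_j \{P_0^2(j)/|\delta_j|\} \rb$: dividing through by $\|\beta\|_{1,1/2}$ gives $\tilde\epsilon \in O(\min_j P_0^2(j)/|\delta_j|) \subseteq O(P_0^2(j)/|\delta_j|)$ for all $j$, so the precondition holds uniformly. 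The remaining ingredients — the constant per-$U_\pi^{(j)}$ query overhead, the union bound, and the final summation — are then routine.
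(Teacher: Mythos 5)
Your proposal is correct and follows essentially the same route as the paper's proof: block-by-block reduction via \lem{smalltrace}, per-block application of \lem{estimate_small_prob} with a uniform probability-level error allocation $\epsilon'_j = \epsilon/\|\beta\|_{1,1/2}$ (the paper states this as an absolute error $\epsilon_j \propto \lb \sum_k \sqrt{\beta_{jk}}\rb^2$, which is the same allocation after dividing by the block normalization), followed by a union bound with $\delta'' = \delta'/J$. Your explicit sign-flip for non-positive blocks and your verification that the uniform allocation lands in the validity regime $O(P_0^2(j)/|\delta_j|)$ for every $j$ are details the paper leaves implicit, but they do not change the argument.
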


\begin{proof}
The proof is an application of Lemma~\ref{lem:smalltrace} and Lemma~\ref{lem:estimate_small_prob}. The algorithm we propose for solving this problem is simple. We loop over each of the $A_j$ and measure each such operator within sufficient error to achieve our bounds. Specifically, if we wish to learn $\bra{\psi} A \ket{\psi}$ then it suffices to learn the individual estimates $\langle \hat{A}_j \rangle$ of the $\bra{\psi} A_j \ket{\psi}$ such that 
\begin{equation}
    |\bra{\psi} A \ket{\psi} - \sum_j \langle \hat{A}_j \rangle| \le \sum_j |\bra{\psi} A_j \ket{\psi} -  \langle \hat{A}_j \rangle| \le \epsilon.
\label{eq:epsbound}
\end{equation}
Let us now choose the error tolerance for our estimate $\langle \hat{A}_j \rangle$ such that 
\begin{equation}
    |\bra{\psi} A_j \ket{\psi} -  \langle \hat{A}_j \rangle| = \frac{ (\sum_k \sqrt{\beta_{jk}})^2\epsilon}{ \sum_{j}\left(\sum_k \sqrt{\beta_{jk}}\right)^2} =: \epsilon_j.
\label{eq:epschoice}
\end{equation}
It is clear that~\eqref{eq:epschoice} implies~\eqref{eq:epsbound}. Next, as we have inherited the assumptions of Lemma~\ref{lem:estimate_small_prob}, we can use that algorithm to learn the probability that the method of Lemma~\ref{lem:smalltrace} yields $\ket{00}$ in the ancilla register. Our expectation value estimate is found from that success probability by multiplying the success probability by $\lb \sum_{k} \sqrt{\beta_{jk}} \rb^2$. Thus, we must aim to estimate the success probability within error $\epsilon'_j$ such that
\begin{equation}
    \epsilon'_j \le \frac{\epsilon_j}{\lb \sum_{k} \sqrt{\beta_{jk}} \rb^2} = \frac{\epsilon}{\sum_{j}\left(\sum_k \sqrt{\beta_{jk}}\right)^2} =: \frac{\epsilon}{\|\beta\|_{1,1/2}}.
\label{interm_eps_bound}
\end{equation}
Then the cost of learning the probability within error $\epsilon'_j$ in terms of the number of queries made to our oracles scales from Lemma~\ref{lem:estimate_small_prob} and our assumption that $P_0^2(j)/|\delta_j| \subseteq o(1)$ as
\begin{equation}
    O\left(\frac{\sqrt{P_0(j)} \log(1/\delta'')}{\epsilon'_j } \left(\frac{P_0(j)}{|\delta_j|} \right)\right) = O\left( \frac{\sqrt{P_0(j)} \|\beta\|_{1,1/2} \log(1/\delta'')}{\epsilon } \left(\frac{P_0(j)}{|\delta_j|} \right)\right),
\end{equation}
where $\delta''$ is an upper bound on the failure probability and we chose $\epsilon'_j$ to saturate the upper bound in Eq.~\eqref{interm_eps_bound}.
This process needs to be repeated for each $A_j$ and so the total number of queries is the sum of this result over all $j$ which yields
$$
    {O}\left( \frac{\sum_j\sqrt{P_0(j)} \|\beta\|_{1,1/2} \log(J/\delta')}{\epsilon } \max_j\left(\frac{P_0(j)}{|\delta_j|} \right) \right).
$$
Note that the probability of error from the union bound is no longer bounded above by $\delta''$ but instead becomes $J\delta''$. In order to ensure that the overall failure probability is at most $\delta'$, it suffices to choose $\delta''=\delta'/J$ for each set which yields the above result. 
\end{proof}

This shows that in the event that the expectation values tend to zero and under the assumption that we have sufficient prior knowledge to at least know the leading digit of the result, the cost of learning the expectation value of an operator, which is given as a sum of projectors, can be substantially lower than the cost of performing na\"ive amplitude estimation which would scale as $\sum_{jk} \beta_{jk} \log(1/\delta') /\epsilon$. Under the assumption that we have constant relative uncertainty, $\max_j (P_0(j)/|\delta_j|) \in O(1)$, we then expect an asymptotic advantage if
\begin{equation}
\sum_j \sqrt{P_0(j)} \in o\left( \frac{\|\beta\|_{1,1}}{\|\beta\|_{1,1/2}}\right).
\end{equation}
Note that in general this condition will not be asymptotically attainable since $\|\beta\|_{1,1} \le \|\beta\|_{1,1/2}$ and so the right-hand side of the above expression can vanish as the dimension of the problem increases. This underscores the need to have applications where the probabilities that we wish to estimate are small. 

Our approach can be further generalized to compute a vector of expectation values straightforwardly. The additional cost involved is simply a factor of $m$ greater where $m$ is the number of distinct expectation values required. 
The cost of computing the vector of expectation values can be further reduced to a $O(\sqrt{m})$ scaling by using the methods of~\cite{huggins2021nearly}, which involves expressing the expectation value estimation as a gradient estimation problem which can be addressed from the results of~\cite{jordan2005fast,gilyen2019optimizing}. However, while such a process could be used to find a vector of independent observables over a set of points, it cannot be used to compute the values along a given path because of the way that prior knowledge is used in the amplified amplitude estimation algorithm. Specifically, the methods of~\cite{huggins2021nearly} compute each component of the vector of expectation values at the same time in superposition, meaning that this approach does not use any continuity of the values at adjacent points. Because of these complications, we focus in the following on the case of interpolating one-dimensional vectors while noting that similar results can be obtained for interpolating $m$-dimensional vectors.

The following corollary deals with the situation where we have prior knowledge of
the expectation value $\bra{\psi} A \ket{\psi}$ but the expectation value is too large for Lemma~\ref{lem:estimate_small_prob} or Theorem~\ref{thm:AAE} to be directly applicable. The main idea is to lower the expectation value of each of the operators $A_j$ by exploiting prior knowledge to reduce the number of projectors to be evaluated on a quantum computer.

\begin{corollary}[Using prior knowledge to refine estimates of large expectation values]
    Let $\epsilon > 0$ be an error tolerance and $A = \sum_{j=0}^{J-1} A_j = \sum_{j=0}^{J-1} \sum_{k=1}^{n_j} \beta_{jk} \Pi_{jk}$ an operator such that for any $j'$, all $\beta_{j'k}$ are either non-negative or are all non-positive and $\Pi_{jk}$ are projectors. Assume that for every $j$ there exists a subset $I_j \subseteq [n_j]$ of indices and known positive numbers $C_{j,i} \geq \widetilde{\epsilon}_j$, where 
    \begin{equation}
        \widetilde{\epsilon}_j := \frac{\epsilon}{2 \sum_{i \in I_j} \beta_{j,i}} \frac{(\sum_{k} \sqrt{\beta_{jk}})^2}{\sum_j (\sum_{k} \sqrt{\beta_{jk}})^2},
    \end{equation} 
    such that $|C_{j,i} - \bra{\psi} \Pi_{j,i} \ket{\psi}| \leq \widetilde{\epsilon}_j$.
    
    Furthermore, assume that we are provided with a series of upper bounds $P_0(j) = \sin^2(\pi/(2(2\mu_j+1)))$ and integers $\mu_j \geq 1$ such that
    \begin{equation}
        P_{A'_j} := \frac{\sum_{k \not\in I_j} \beta_{j,k} \bra{\psi} \Pi_{j,k} \ket{\psi}}{(\sum_{k \not\in I_j} \sqrt{\beta_{ji}})^2} = P_0(j) + \delta_j
    \end{equation}
    for $\delta_j \in [-P_0(j),0)$. Let $\|\beta\|_{1,1/2} := \sum_{j} \lb \sum_k \sqrt{\beta_{jk}} \rb^2$.
    Then there exists a quantum algorithm that, for any $\ket{\psi}$ and $\epsilon \in O \lb \|\beta\|_{1,1/2} \min_j \{ P_0^2(j)/|\delta_j| \} \rb \subseteq o(1)$, can estimate $\bra{\psi} A \ket{\psi}$ within error $\epsilon$ and probability of failure at most $\delta'$ using a number of queries to $\mathtt{SELECT}_\pi$ and $\mathtt{PREPARE}_\pi$ that scales as
    \begin{equation}
        {O}\left( \frac{\sum_j\sqrt{P_0(j)} \|\beta\|_{1,1/2} \log(J/\delta')}{\epsilon } \max_j\left(\frac{P_0(j)}{|\delta_j|} \right) \right).
    \end{equation}
\label{cor:large_val}
\end{corollary}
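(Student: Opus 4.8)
The plan is to split the operator into a ``classically known'' part and a ``quantum'' part, to charge each part a separate half of the error budget, and then to invoke \thm{AAE} on the quantum part essentially verbatim. Concretely, I would write $A = A^{\mathrm{known}} + A'$, where $A^{\mathrm{known}} := \sum_j \sum_{i \in I_j} \beta_{j,i} \Pi_{j,i}$ collects the projectors whose expectation values we already know approximately, and $A' := \sum_j A'_j$ with $A'_j := \sum_{k \notin I_j} \beta_{jk} \Pi_{jk}$ is the reduced operator to be measured on the device. By linearity $\bra{\psi} A \ket{\psi} = \bra{\psi} A^{\mathrm{known}} \ket{\psi} + \bra{\psi} A' \ket{\psi}$, so it suffices to estimate each summand within error $\epsilon/2$ and add.

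For the classical part, I would take $\sum_j \sum_{i \in I_j} \beta_{j,i} C_{j,i}$ as the estimate of $\bra{\psi} A^{\mathrm{known}} \ket{\psi}$; this uses no quantum resources since the $C_{j,i}$ are given a priori. Its error is controlled by the triangle inequality together with the hypothesis $|C_{j,i} - \bra{\psi}\Pi_{j,i}\ket{\psi}| \le \widetilde{\epsilon}_j$: the contribution of block $j$ is at most $\sum_{i \in I_j} \beta_{j,i}\widetilde{\epsilon}_j$, and substituting the definition of $\widetilde{\epsilon}_j$ collapses this to $(\epsilon/2)(\sum_k \sqrt{\beta_{jk}})^2 / \|\beta\|_{1,1/2}$. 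Summing over $j$ telescopes against the definition of $\|\beta\|_{1,1/2}$ to give a total classical error of exactly $\epsilon/2$; the factor of $2$ hard-coded into $\widetilde{\epsilon}_j$ is precisely what reserves the remaining half of the budget for the quantum step.

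For the quantum part, I would verify that $A'$ meets every hypothesis of \thm{AAE}: within each block $j$ the surviving coefficients $\{\beta_{jk}\}_{k \notin I_j}$ inherit the common-sign property, and the data $P_0(j) = \sin^2(\pi/(2(2\mu_j+1)))$, $\mu_j \ge 1$, and $\delta_j \in [-P_0(j),0)$ are exactly the prior-knowledge inputs that theorem requires, now attached to the normalized quantity $P_{A'_j} = \bra{\psi} A'_j \ket{\psi} / (\sum_{k \notin I_j}\sqrt{\beta_{jk}})^2 = P_0(j)+\delta_j$. I would therefore apply \thm{AAE} to $A'$ with target error $\epsilon/2$ and failure probability $\delta'$; its internal union bound over the $J$ blocks (via $\delta''=\delta'/J$) already delivers the $\log(J/\delta')$ factor, and the classical step contributes no failure probability, so the overall failure probability stays at most $\delta'$. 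Reading off its complexity with $\|\beta\|_{1,1/2}$ replaced by the reduced norm $\|\beta'\|_{1,1/2} := \sum_j (\sum_{k\notin I_j}\sqrt{\beta_{jk}})^2$, and noting that the two error contributions add to $\epsilon/2+\epsilon/2=\epsilon$, completes the estimate. A final monotonicity observation closes the loop: since $\sum_{k \notin I_j}\sqrt{\beta_{jk}} \le \sum_k \sqrt{\beta_{jk}}$ for every $j$, we have $\|\beta'\|_{1,1/2} \le \|\beta\|_{1,1/2}$, so the \thm{AAE} complexity is bounded above by the stated expression, with the constant from halving $\epsilon$ absorbed into the $O$.

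The step I expect to demand the most care — the only genuinely non-cosmetic point — is checking that the smallness precondition of \thm{AAE} (inherited from \lem{estimate_small_prob}) actually holds for the reduced problem. Tracing the budget through, the per-block normalized-probability error fed into the amplitude-estimation routine is $\epsilon'_j = (\epsilon/2)/\|\beta'\|_{1,1/2}$, and one must confirm $\epsilon'_j \in O(P_0^2(j)/|\delta_j|)$ for all $j$. Because the reduced norm $\|\beta'\|_{1,1/2}$ can be strictly smaller than $\|\beta\|_{1,1/2}$, the hypothesis must be reconciled with this tighter normalization; I would make this consistent either by re-expressing the stated condition $\epsilon \in O(\|\beta\|_{1,1/2}\min_j\{P_0^2(j)/|\delta_j|\})$ in terms of $\|\beta'\|_{1,1/2}$, or by remarking that the full-norm condition is the conservative one that the device actually operates under. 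This is the single place where the bookkeeping between the full and reduced normalizations has to be handled explicitly rather than absorbed into $O$-notation.
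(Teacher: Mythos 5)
Your proposal is correct and follows essentially the same route as the paper: split $A$ into the classically-known projectors (estimated via the $C_{j,i}$, costing half the error budget through the choice of $\widetilde{\epsilon}_j$) and the reduced operator $A'$, whose blocks are handled by amplified amplitude estimation with a per-block probability tolerance and a union bound giving the $\log(J/\delta')$ factor. The one subtlety you flagged — reconciling the reduced norm $\|\beta'\|_{1,1/2}$ with the hypothesis stated in terms of $\|\beta\|_{1,1/2}$ — is resolved in the paper exactly by your ``conservative'' option: it lower-bounds the per-block probability tolerance by $\epsilon/(2\|\beta\|_{1,1/2})$, so the smallness precondition of \lem{estimate_small_prob} holds under the stated hypothesis and the cost is absorbed into the claimed bound.
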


\begin{proof}
    The proof is similar to the proof of Theorem~\ref{thm:AAE}. As before, we wish to obtain an estimate $\hat{A}$ of $\bra{\psi}A\ket{\psi}$ such that
    \begin{equation}
        |\hat{A} - \bra{\psi}A\ket{\psi}| \leq \epsilon.
    \label{hatA_error}
    \end{equation}
    The assumption $ P_{A'_j} = \frac{\sum_{k \not\in I_j} \beta_{j,k} \bra{\psi} \Pi_{j,k} \ket{\psi}}{(\sum_{k \not\in I_j} \sqrt{\beta_{ji}})^2} = P_0(j) + \delta_j$ ensures that ignoring the set of projectors $\{\Pi_{j,i}\}_{i \in I_j}$ leads to a sufficiently low success probability $P_{A'_j}$ which can be estimated using Lemma~\ref{lem:estimate_small_prob}. From this we can obtain an $\epsilon'_j$-precise estimate $\hat{A}'_j$ of $A'_j := \sum_{k \not\in I_j} \beta_{j,k} \bra{\psi} \Pi_{j,k} \ket{\psi}$. This then allows us to obtain an estimate $\hat{A}_j$ for $A_j$ as follows:
    \begin{equation}
        \hat{A}_j = \hat{A}'_j + \sum_{i \in I_j} \beta_{j,i} C_{j,i}.
    \end{equation}
    The error associated with $\hat{A}_j$ is bounded by
    \begin{equation}
        |\hat{A}_j - \bra{\psi}A_j\ket{\psi}| \leq \epsilon'_j + \widetilde{\epsilon}_j \sum_{i \in I_j} \beta_{j,i}.
    \end{equation}
    We construct the estimate $\hat{A}$ via the estimates $\hat{A}_j$ of the individual terms $A_j$. Note that
    \begin{equation}
        |\hat{A} - \bra{\psi}A\ket{\psi}| = |\sum_j \hat{A}_j - \sum_j \bra{\psi}A_j\ket{\psi}| \leq \sum_j |\hat{A}_j - \bra{\psi}A_j\ket{\psi}|.
    \end{equation}
    Ensuring that
    \begin{equation}
        |\hat{A}_j - \bra{\psi}A_j\ket{\psi}| \leq \frac{(\sum_{k} \sqrt{\beta_{jk}})^2}{\sum_j (\sum_{k} \sqrt{\beta_{jk}})^2} \epsilon =: \epsilon_j
    \end{equation}
    automatically yields \eqref{hatA_error}. This implies that it suffices to choose $\epsilon'_j = \epsilon_j/2$ and 
    \begin{equation}
        \widetilde{\epsilon}_j = \frac{\epsilon_j}{2 \sum_{i \in I_j} \beta_{j,i}}.
    \end{equation}
    Note that $\widetilde{\epsilon}_j$ only affects the cost of obtaining our prior knowledge but not the quantum cost associated with estimating the operators $A_j$. The quantum experiments yield estimates of the probabilities $P_{A'_j}$ rather than the estimates for $\hat{A}'_j$ directly. It suffices to estimate $P_{A'_j}$ within error 
    \begin{equation}
        \epsilon_{P_j} := \frac{\epsilon'_j}{\lb \sum_{k \not\in I_j} \sqrt{\beta_{jk}} \rb^2} = \frac{\epsilon}{2 \sum_j (\sum_{k} \sqrt{\beta_{jk}})^2} \frac{(\sum_{k} \sqrt{\beta_{jk}})^2}{\lb \sum_{k \not\in I_j} \sqrt{\beta_{jk}} \rb^2} \geq \frac{\epsilon}{2\|\beta\|_{1,1/2}}
    \end{equation}
    to obtain an $\epsilon'_j$-precise estimate of $\hat{A}'_j$.
    Thus, following the same argument as in the proof of Theorem~\ref{thm:AAE}, we find that the number of queries to the unitaries used for implementing the projectors, $\mathtt{PREPARE}_{\Pi}$ and $\mathtt{SELECT}_{\Pi}$, is in
    \begin{equation}
        O \left( \frac{\sum_j\sqrt{P_0(j)} \|\beta\|_{1,1/2} \log(J/\delta')}{\epsilon } \max_j\left(\frac{P_0(j)}{|\delta_j|} \right) \right),
    \end{equation}
    where $\|\beta\|_{1,1/2}=\sum_{j} (\sum_k \sqrt{\beta_{jk}})^2$.
\end{proof}

The above corollary shows that we can get an asymptotic advantage over standard amplitude estimation even for large expectation values as long as we have sufficient prior knowledge of the individual expectation values in the projector decomposition. This allows us to express a large expectation value that we wish to compute as a sequence of smaller ones, each of which we then estimate using amplified amplitude estimation to achieve an advantage relative to what would be possible with methods that do not use any prior information.

\section{Applications}
\label{sec:applications}

There are many different ways in which prior knowledge about the expectation value of an observable on a wave function can be obtained. In what follows, we will explore two examples of how prior knowledge can be exploited by amplified amplitude estimation to solve problems relevant to chemistry. However, the applications of AAE are potentially broader than the simulation of quantum systems as long as the conditions on the expectation values and the prior knowledge are satisfied. 

The first example we will consider involves using classical methods to obtain a low-cost a priori estimate of the expectation value of a given observable and apply AAE to improve on such prior estimate. This is a common situation in simulations of molecular systems: often, a cheap mean-field solution such as Hartree-Fock (HF) or Density Functional Theory (DFT) forms the basis for a more demanding correlated calculation. The expectation values from these low-cost methods, together with a tight enough bound on their error, are the two ingredients needed to apply our algorithm. Specifically, we will show how to use AAE to reduce the cost of estimating expectation values of one-body fermionic operators, such as dipole moments for water clusters.

The second example focuses on estimating properties of a target quantum system by exploiting available prior knowledge regarding other quantum systems with similar parameters.
For instance, in Markov Chain Monte Carlo dynamics, new molecular geometries are proposed by randomly changing the current coordinates. These new geometries cannot differ too much in energy from the current system, or the detailed balance mechanism would reject them. The prior knowledge in this setting can be exploited to infer the energy for the new geometry by using knowledge about the current energy. We can lower the cost even further by realizing that derivative operators may have a lower one-norm than the Hamiltonian operator. This motivates us to consider the problem of computing the ground state energy of a molecular system within the Born-Oppenheimer approximation by integrating the directional derivative of the energy using Newton-Cotes formulas.

\subsection{Fermionic operator estimation with prior knowledge from low-cost methods}
\label{sec:dipole}

To apply our prior results to the estimation of expectation values of fermionic operators, we first need to reduce the problem from one of measuring fermionic operators to a problem involving measuring projections. We will demonstrate this for a general fermionic one-body operator
\begin{equation}
    A = \sum_{pq} A_{pq} a^{\dagger}_p a_q,
\end{equation}
where $p$ and $q$ label a set of spin orbitals such that $a^{\dagger}_p$ is the fermionic creation operator acting on orbital $p$ and $a_q$ is the fermionic annihilation operator acting on orbital $q$.
After using the Jordan-Wigner decomposition, we have that 
\begin{equation}
    A = \sum_{p<q} \frac{1}{2}A_{pq} X_p \otimes Z_{p+1} \cdots Z_{q-1}\otimes X_q + \frac{1}{2}A_{pq} Y_p \otimes Z_{p+1} \cdots Z_{q-1}\otimes Y_q + \sum_{p} A_{pp}\frac{1-Z_p}{2}.
\end{equation}
The Pauli products in this formula are reflection operators, each of which can be thought of as a projection onto a negative eigenspace added to a projection onto a positive eigenspace. Let us define, for example,
\begin{align}
    \Pi_{p,q}^{(\pm X)} &:= \frac{\openone_{pq} \pm X_p \otimes Z_{p+1} \cdots Z_{q-1}\otimes X_q}{2}, \\
    \Pi_{p,q}^{(\pm Y)} &:= \frac{\openone_{pq} \pm Y_p \otimes Z_{p+1} \cdots Z_{q-1}\otimes Y_q}{2}, \\
    \Pi_{p,p}^{(\pm Z)} &:= \frac{\openone_{pp} \pm Z_p}{2},
\end{align}
where the $\pm$ denotes the projector onto the positive and negative eigenspaces of the operator. This is an obvious choice that does not necessarily have the best properties for the AAE algorithm. Optimizing this choice might lead to better scaling of the algorithm. Using the fact that $\Pi^{(+)} + \Pi^{(-)} = \mathbb{1}$ we then have that
\begin{align}
    \bra{\psi} A \ket{\psi} &= \frac{1}{2}\sum_{p < q} A_{pq} \bra{\psi}\left( \Pi_{p,q}^{(+X)} - \Pi_{p,q}^{(-X)} + \Pi_{p,q}^{(+Y)} - \Pi_{p,q}^{(-Y)}\right)\ket{\psi}  + \sum_p \bra{\psi}A_{pp}\Pi_{p}^{(-Z)}\ket{\psi} \nonumber\\
    &=\sum_{p < q} A_{pq} \bra{\psi}\left( \Pi_{p,q}^{(+X)}  + \Pi_{p,q}^{(+Y)} \right)\ket{\psi}  + \sum_p A_{pp}\bra{\psi}\Pi_{p}^{(-Z)}\ket{\psi} - \sum_{p<q} A_{pq}.
\end{align}
Even though the operator is now written as a sum of projectors, the sum is still not necessarily convex. If terms are negative, then we can always group them separately from the positive terms and apply two separate rounds of AAE. The above expectation value can then be decomposed as follows:
\begin{align}
    \bra{\psi} A \ket{\psi} &= \sum_{j=0,1}\sum_{p < q} \frac{A_{pq} + (-1)^j\left\vert A_{pq}\right\vert}{2} \bra{\psi}\left( \Pi_{p,q}^{(+X)}  + \Pi_{p,q}^{(+Y)} \right)\ket{\psi} \\
    & \quad + \sum_{j=0,1} \sum_p \frac{A_{pp}+(-1)^j|A_{pp}|}{2}\bra{\psi}\Pi_{p}^{(-Z)}\ket{\psi} - \sum_{p<q} A_{pq}\nonumber\\
    &=: \sum_{j}\sum_{k}\beta_{jk}\Pi_{jk} - \sum_{p<q} A_{pq},\label{eq:Proj_decomp}
\end{align}
where the last step is a relabeling of the two separate convex sums into a single sum over $k$. Further, the constant offset $\sum_{p<q} A_{pq}$ is known a priori and so does not need to be considered for our purposes.  This shows that we can use Theorem \ref{thm:AAE} to bound the query complexity of estimating $\bra{\psi} A \ket{\psi}$.

Let us now consider the example of estimating the expectation value of one-body operators for groups of water molecules of increasing size. We will consider three types of operators: the $x,y,z$ components of the dipole operator, the kinetic energy operator, and the one-body part of the Hamiltonian. The matrix elements of the components of the dipole operator are obtained as integrals over the molecular orbitals $\phi_p$ by
\begin{align}
D_{\rho,pq} = \int \phi^*_p\left(\mathbf{r}\right)\,\rho\, \phi_q\left(\mathbf{r}\right)\,\mathrm{d}\mathbf{r} \mathrm{\quad for} \; \rho \in \{x,y,z\}.
\end{align}
The value of the dipole moment is often significant because it 
provides qualitative information about the strength of intermolecular forces in the system, and the expectation value of the dipole operator in the ground state will give an estimate of this dipole moment.
Similarly, the kinetic energy operator $K$, with elements
\begin{align}
K_{pq} = \int \phi^*_p\left(\mathbf{r}\right)\,\left(-\frac{1}{2}\nabla^2\right)\, \phi_q\left(\mathbf{r}\right)\,\mathrm{d}\mathbf{r},
\end{align}
and the one-body Hamiltonian $P$, with elements
\begin{align}
P_{pq} = K_{pq} + \int \phi^*_p\left(\mathbf{r}\right)\,V\left(\mathbf{r}\right)\, \phi_q\left(\mathbf{r}\right)\,\mathrm{d}\mathbf{r},
\end{align}
where $V\left(\mathbf{r}\right)$ is the external potential due to attraction with the nuclei, are important operators that help assess the quality of a wave function.

In order to use amplified amplitude estimation to profitably estimate a quantity, it is important that the requirements on the size of the perturbation ($\delta_j \in [-P_0(j),0)$) and the prior knowledge ($P_0(j)$) s.t. $\epsilon~\in~O(P_0^2(j)/|\delta_j|)$) imposed in Corollary~\ref{cor:large_val} are met. Here we study these requirements numerically. In Figure~\ref{fig:delta}, we explore the spread of $\delta_j$ for the problem of estimating the dipole moment, the kinetic energy, and the single-particle Hamiltonian for a cluster of water molecules. Specifically, we plot the spread of the perturbation as a function of a growing water cluster for prior knowledge obtained either from RHF or DFT with the B3LYP functional or low bond dimension DMRG. We use the projector decomposition of the single-particle operators as described in Eq.~\eqref{eq:Proj_decomp}. 
It is clear that different methods (depending on their accuracy) will give different orders of magnitude for the $\delta_j$: while RHF and DFT require learning a deviation of the same order of magnitude, low bond dimension DMRG is more accurate, and the corresponding correction of the prior knowledge is two orders of magnitude smaller. In all these cases, the value of $\delta_j$ that is attainable is sufficiently small so that we can profitably use this prior knowledge provided that $P_0(j)/|\delta_j|$ is appropriately small. In particular, the difference between the DMRG calculations and the true values for the expectation values of the projectors differ on the order of $10^{-9}$ for all data between $2$ and $7$ water molecules in our box, whereas RHF and DFT calculations provide estimates that are accurate within $10^{-7}$ and $10^{-6}$ roughly. For RHF and DFT, we see evidence of polynomial decay in the difference as the number of water molecules increases. The data is roughly consistent with a polynomial scaling of $O((\# \text{ waters})^{-2})$ for RHF and for DFT.  In contrast, we see no evidence of a clear systematic trend for the data for the DMRG calculation results at the scale of our computations.

\begin{figure}[!ht]
\includegraphics[width=0.48\linewidth]{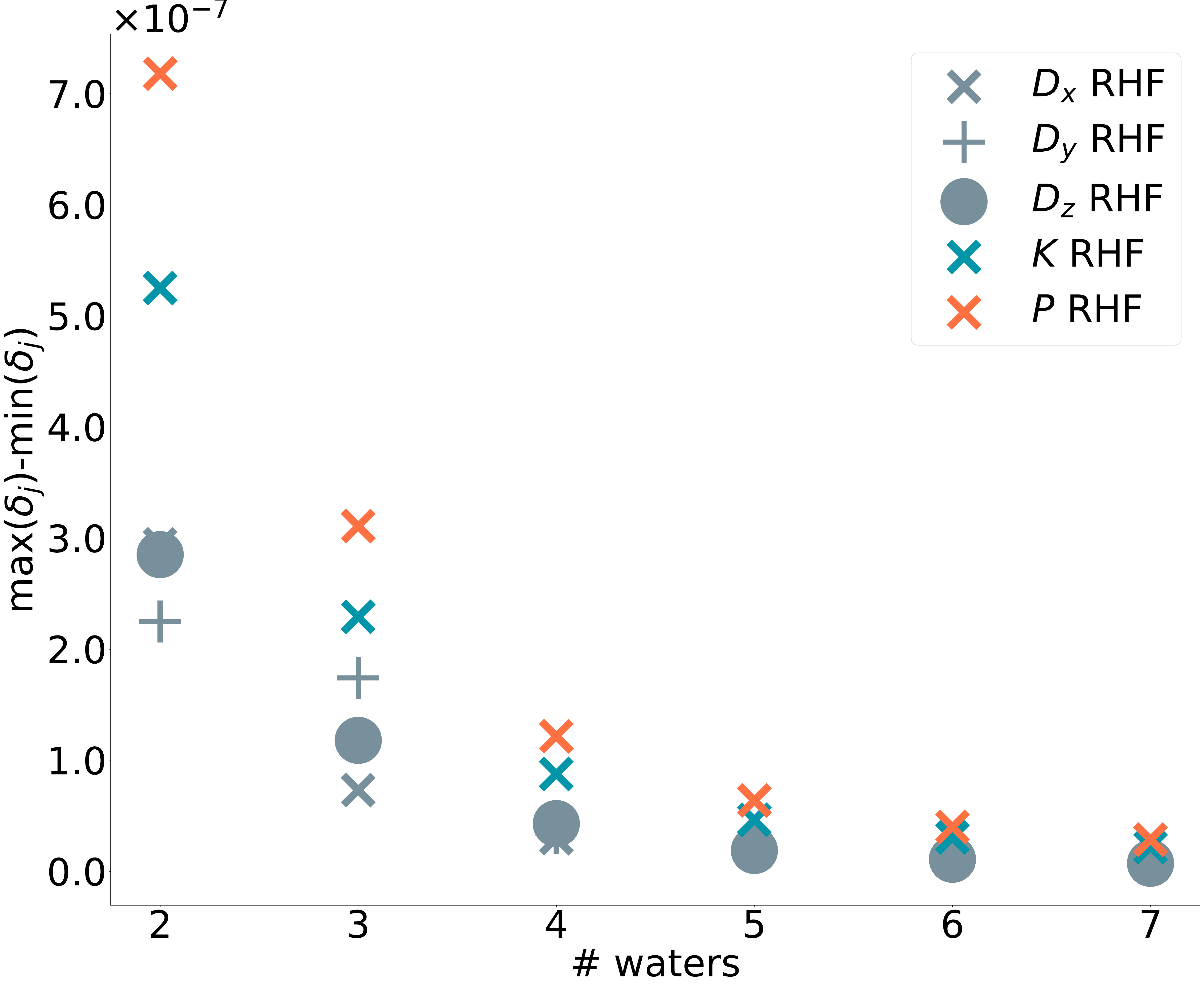}
\includegraphics[width=0.48\linewidth]{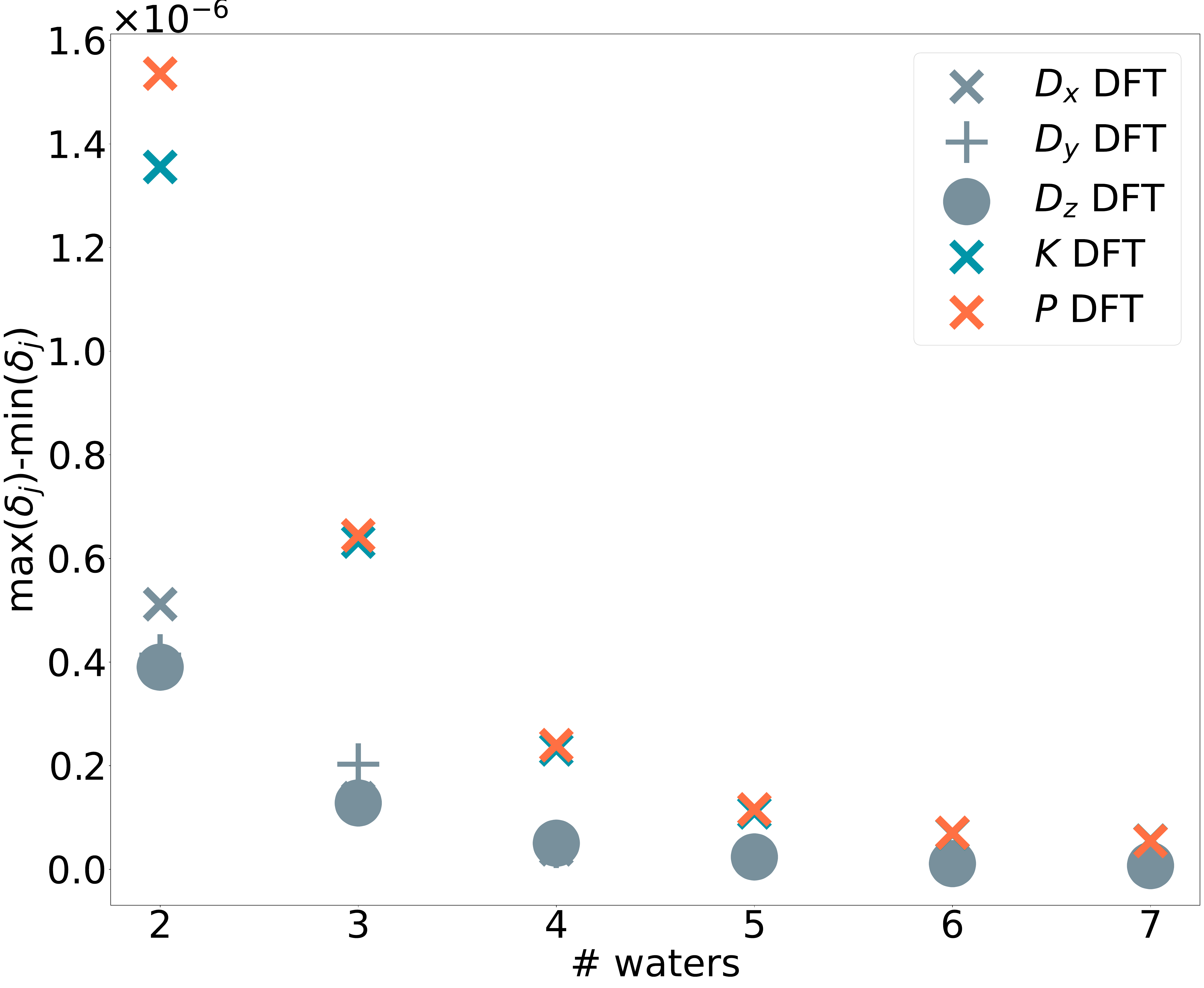}\\
\includegraphics[width=0.48\linewidth]{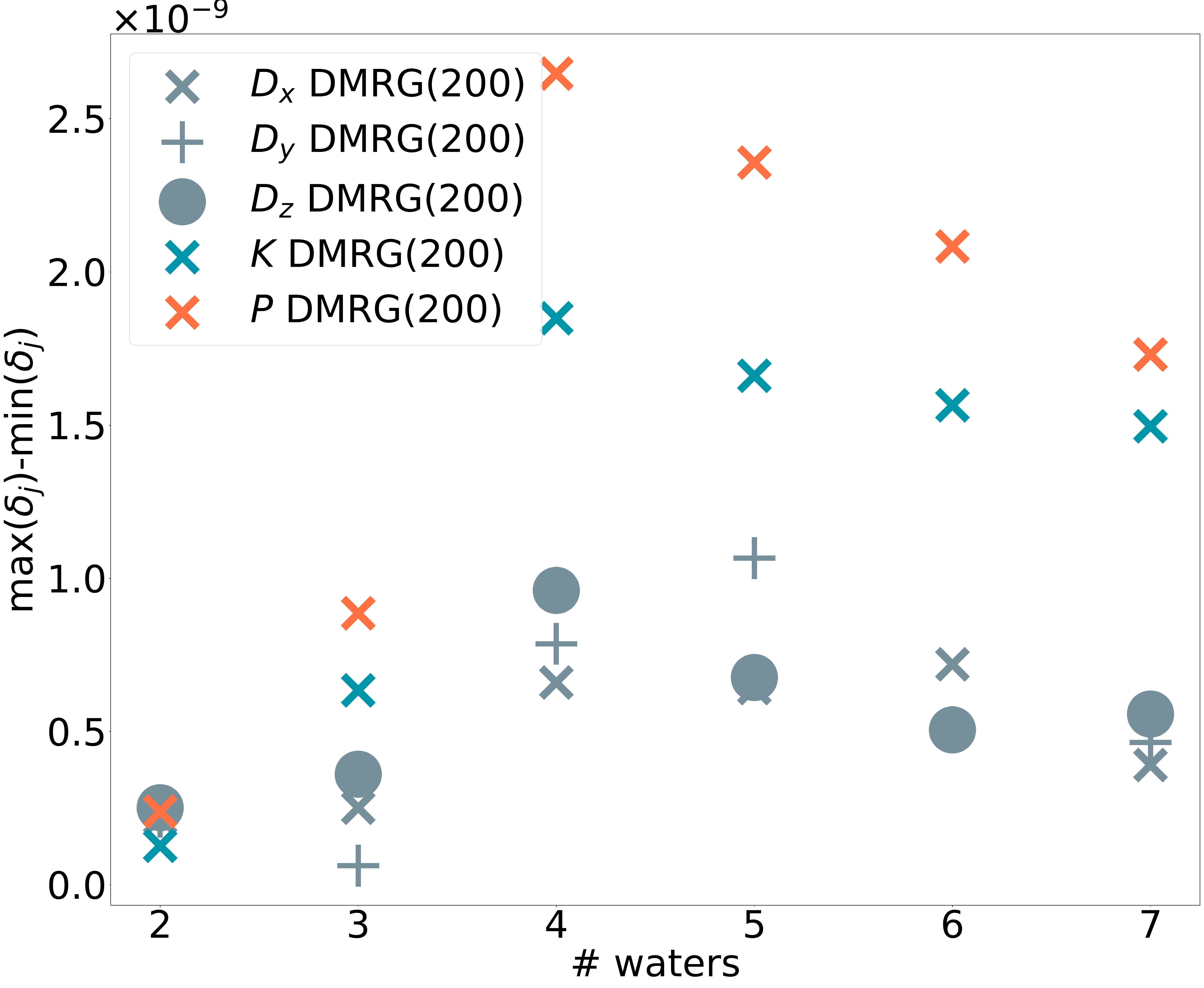}
\caption{Each plot shows the spread of $\delta_j$ for five different single-particle operators versus the size of a growing cluster of water molecules when using prior information. The prior information is obtained from restricted Hartree-Fock (RHF), density functional theory (DFT) with the B3LYP functional, and density matrix renormalization group theory (DMRG) with a low bond dimension of 200, respectively. The ground truth to which this prior information is compared is obtained from DMRG with a high bond dimension of 750. The molecular properties are calculated in the cc-pVDZ basis using an all-electron approach for RHF and DFT and an active space correlating 6 orbitals and 8 electrons per water molecule for the DMRG calculations. The operators are the three components of the dipole moment ($D_x$, $D_y$, and $D_z$), the kinetic energy operator ($K$) and the single-particle Hamiltonian ($P$). The minimum and maximum values of $\delta_j$ are taken over the two projectors in the single-particle operator ($j=0,1$) and over an ensemble of 26 geometries taken from a well-equilibrated $500$ step molecular dynamics (MD) run at $300 \mathrm{K}$ taking $1\mathrm{fs}$ steps using the TIP3P water model. The decrease in value of the $\delta_j$ with growing water cluster can be attributed to the normalization of the projectors in a growing basis. The different behavior for the low bond dimension DMRG at small cluster sizes is due to the equivalence between low bond dimension DMRG and high bond dimension DMRG for these systems. The prior knowledge in that case is essentially exact and the size of the perturbation to be learned is close to $0$. The three components of the dipole operator should yield similar results since the water clusters should be near-rotationally invariant over the course of the MD run. Finite size effects in the smaller clusters and the finite ensemble of geometries result in some spread between the three components. MD calculations where performed with ASE~\cite{Hjorth2017ASE}, HF and DFT calculations with PySCF~\cite{Sun2020PySCF} and DMRG calculations with Block2~\cite{Zhai2021Low,Zhai2023Block2}.}
\label{fig:delta}
\end{figure}

The relative size of $\delta_j$ and $P_0(j)$ is a primary driver of the complexity of our method.
Figure~\ref{fig:delta_by_p} addresses the question of relative error by examining the difference between the relative errors estimated using RHF, DFT and DMRG.  Interestingly, the data shows a different scaling behavior between these results: For the first two methods, the ratio goes down as the water clusters grow, while for the last it goes up (at least up to the sizes we were able to simulate). For the DMRG, this means that the prior knowledge becomes of decreasing quality and the correction becomes more important as the water cluster grows. This is to be expected as the bond dimension in the DMRG calculations is not growing with system size.

We find in this data that the relative error for these predictions is on the order of $10^{-3}$ for the case of restricted Hartree-Fock and DFT; whereas DMRG performs again substantially better giving errors on the order of $10^{-5}$.  In these cases, the error is sufficiently small that we can reliably pick the $\delta_j$ small enough to ensure that $\max(P_0(j)/\delta_j)\approx 1$. We further see that in the case of DMRG the error seems to be increasing as we approach the thermodynamic limit.  RHF seems to be decreasing at a rate that is approximately consistent with an $O((\# \text{waters})^{-1/4})$ scaling, but we see evidence of saturation in the DFT data. While the trends in this data are difficult to divine, this provides evidence that advantages can be seen as long as $\sum_{j} \sqrt{P_0(j)} \in o(1)$.

\begin{figure}[!ht]
\includegraphics[width=0.48\linewidth]{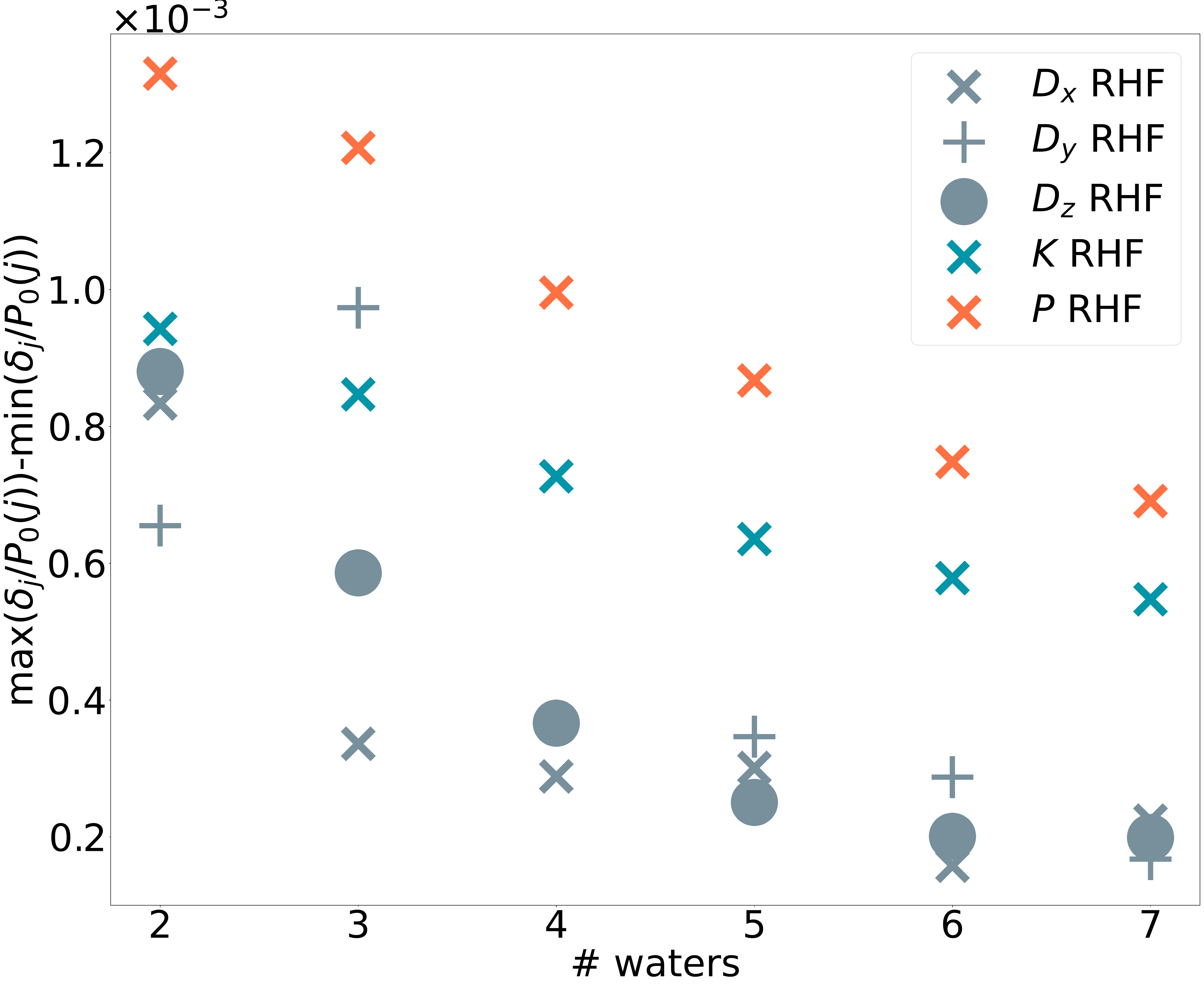}
\includegraphics[width=0.48\linewidth]{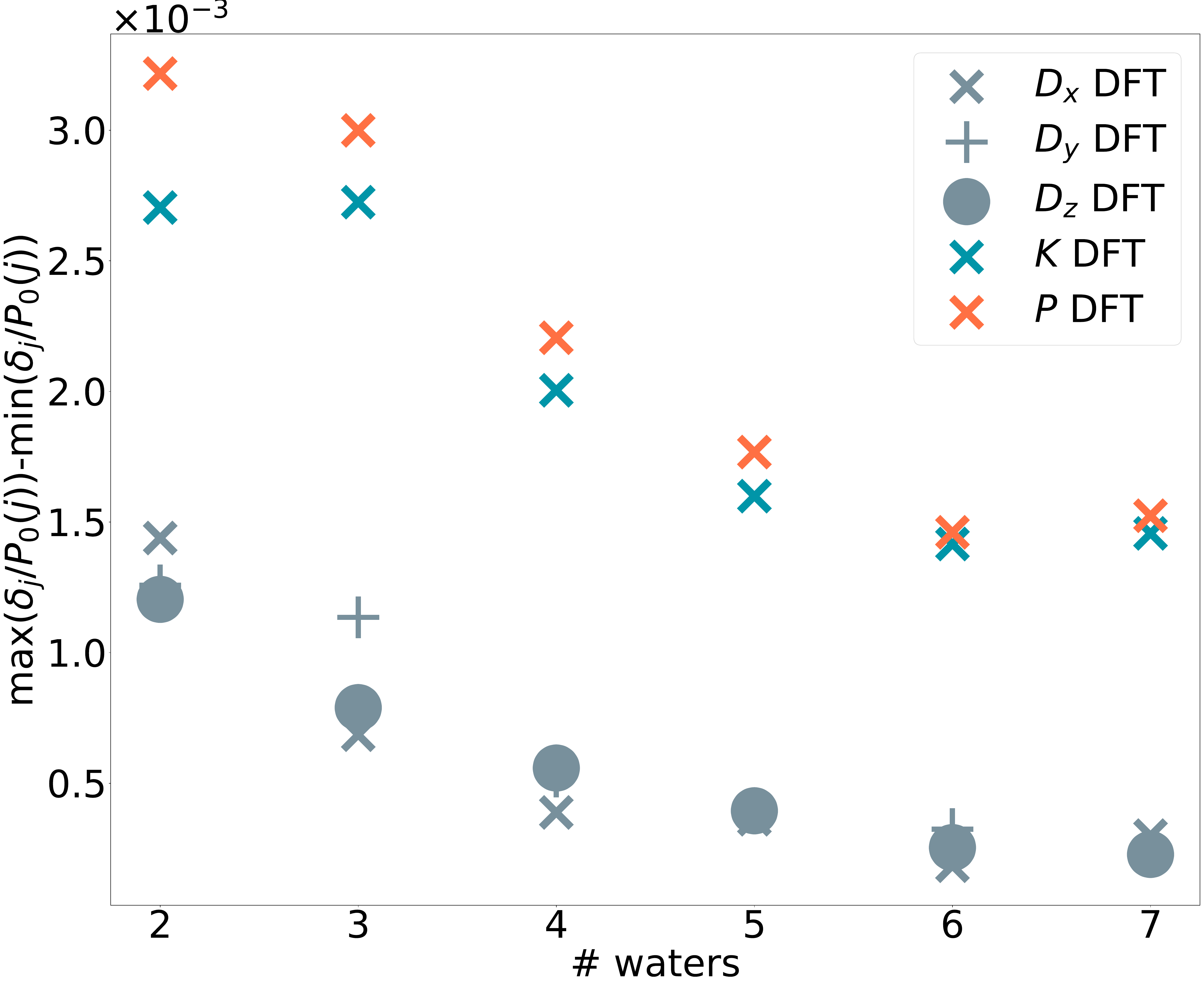} \\
\includegraphics[width=0.48\linewidth]{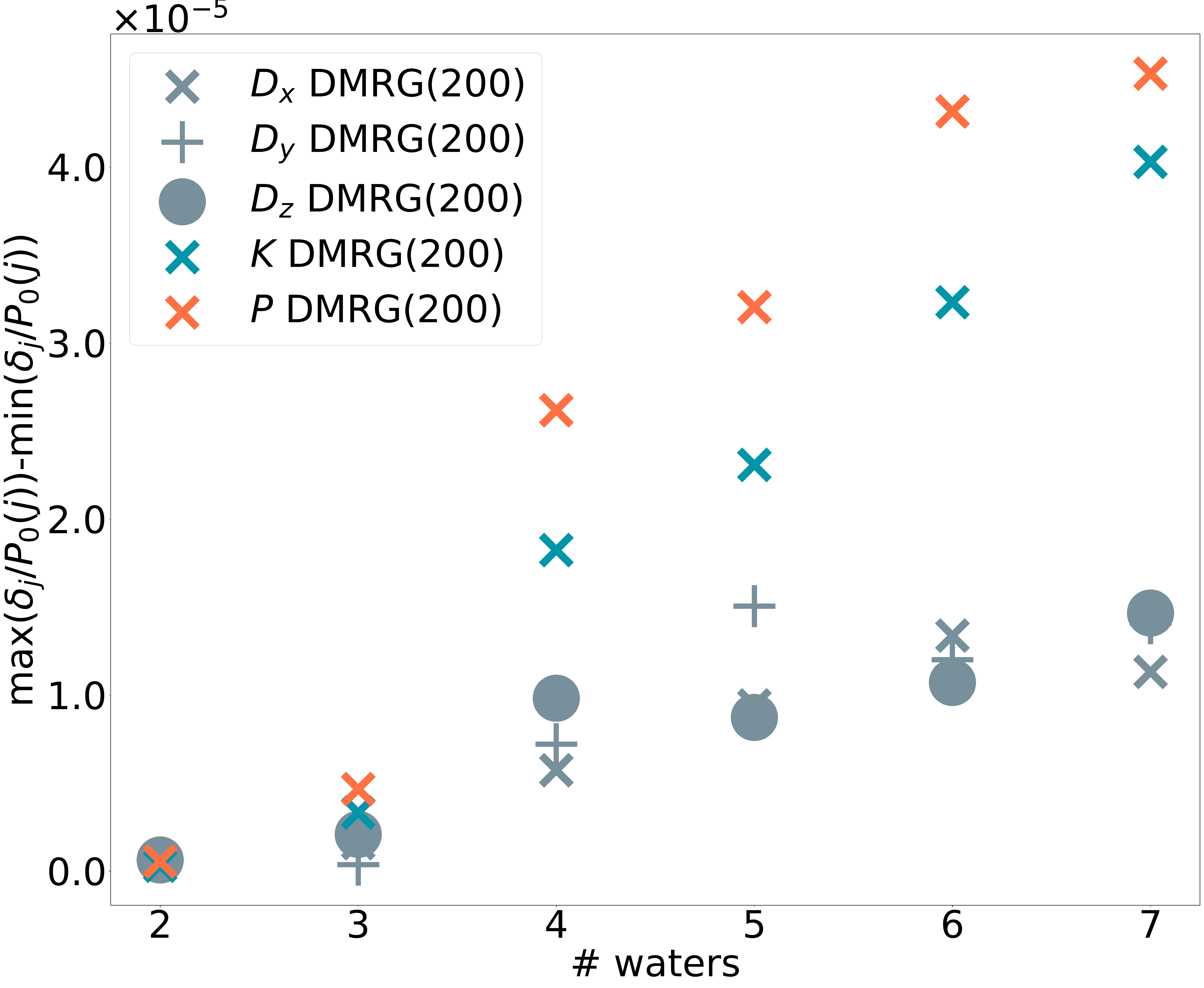}
\caption{Each plot shows the spread of $\delta_j/P_0(j)$ for five different single-particle operators versus the size of a growing cluster of water molecules when using prior information. The prior information is obtained from restricted Hartree-Fock (RHF), density functional theory (DFT) with the B3LYP functional, and density matrix renormalization group theory (DMRG) with a low bond dimension of 200, respectively. The ground truth to which this prior information is compared is obtained from DMRG with a high bond dimension of 750. The molecular properties are calculated in the cc-pVDZ basis using an all-electron approach for RHF and DFT and an active space correlating $6$ orbitals and 8 electrons per water molecule for the DMRG calculations. The operators are the three components of the dipole moment ($D_x$, $D_y$, and $D_z$), the kinetic energy operator ($K$) and the single-particle Hamiltonian ($P$). The minimum and maximum values of $\delta_j$ are taken over the two projectors in the single-particle operator ($j=0,1$) and over an ensemble of 26 geometries taken from a well-equilibrated $500$ step molecular dynamics (MD) run at $300 \mathrm{K}$ taking $1\mathrm{fs}$ steps using the TIP3P~\cite{Jorgensen1983TIP3P} water model.}
\label{fig:delta_by_p}
\end{figure}

One caveat in the above discussion is that we assume for generality that the quantum state $\ket{\psi}$ is provided by an oracle. However, in practice, particular methods must be considered to prepare such a state. In the following, we will consider the case where $\ket{\psi}$ corresponds to the ground state of the system of interest. 
While many methods can be considered for preparing an approximate ground state, adiabatic state preparation and phase estimation are two of the most popular methods for achieving a small preparation error. In both cases, the resources needed scale polynomially with the inverse eigenvalue gap. In the case of adiabatic state preparation, this is because of the dependence on the inverse gap. For phase estimation, the resource scaling arises from the cost of measuring the energy with an error smaller than the gap. Here, we take the latter approach as it will often provide superior scaling with the eigenvalue gap, only tying the performance in cases where boundary cancellation methods are employed and we are in a regime where the evolution time is sufficiently long. Specifically, we use the results of~\cite{Ge2019groundstate}, which we restate as the following lemma.

\begin{lemma}[Ground state preparation, Ge et al.~\cite{Ge2019groundstate}]
\label{lem:Ge}
For any $\epsilon_{\psi} > 0$, assume we have access to a trivially preparable state $\ket{\psi_0'}\in \mathbb{C}^{2^n}$ that has overlap at least $|\braket{\psi_0'}{\psi_0}| \geq |a_0| > 0$ with the target state $\ket{\psi_0}$, which is an eigenstate of a Hamiltonian $H = \sum_{k} \alpha_{k} U_k$ for unitary $U_k\in \mathbb{C}^{2^n\times 2^n}$ with coefficient $1$-norm $\|\alpha\|_1$. Furthermore, assume we have an eigenvalue estimate $E_0'$ of the eigenvalue $E_0$ associated with $\ket{\psi_0}$. We then further assume that $|E_0 - E_0'| \in O(\gamma/\log(|a_0| \epsilon_{\psi}))$ where $\gamma$ is a lower bound on the spectral gap of $H$. Then there exists a unitary $U_{\psi_0}$ acting on $b+n$ qubits such that 
\begin{enumerate}
\item We can perform a heralded measurement such that upon measuring zero on a qubit register, 
$$
\left|\frac{(\bra{0}^b \otimes \One) U_{\psi_0}\ket{0}^b\ket{\psi_0'}}{\norm{(\bra{0}^b \otimes \One) U_{\psi_0}\ket{0}^b\ket{\psi_0'}}} - \ket{\psi_0}\right|_2 \le \epsilon_\psi.
$$
\item The probability of not measuring zero is at most $1-\epsilon_{\psi}^2$.
\item $U_{\psi_0}$ can be implemented using a number of queries to the state preparation routine for $\ket{\psi'_0}$ that scales as
$$
    \widetilde{O}\left( \frac{\log(1/\epsilon_\psi)}{|a_0|} \right)
$$
and a number of queries to $\mathtt{PREPARE}_H$ and $\mathtt{SELECT}_H$ that scale as 
$$
\widetilde{O}\left(\frac{\|\alpha\|_1}{|a_0|\gamma} \right).
$$
\end{enumerate}

\end{lemma}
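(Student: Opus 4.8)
Since the stated lemma is a restatement of the ground-state preparation guarantee of Ge et al., the proof amounts to reconstructing their spectral-filtering argument on top of the block-encoding access model of \defn{access}. The plan is to expand the trivially preparable state in the eigenbasis of $H$ as $\ket{\psi_0'} = a_0 \ket{\psi_0} + \sum_{k \ge 1} a_k \ket{\psi_k}$, where by assumption $E_k \ge E_0 + \gamma$ for every $k \ge 1$, and then to apply a spectral filter $f(H)$ that is close to $1$ on the ground-state eigenvalue and exponentially small on the rest of the spectrum. Concretely, I would take $f$ to be a polynomial (or an LCU/QSVT-implementable function such as a shifted error function) centered at the energy estimate $E_0'$ that satisfies $|f(E_0) - 1| \le \epsilon_\psi/3$ and $|f(E_k)| \le \delta$ for all $E_k \ge E_0 + \gamma$. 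The role of the hypothesis $|E_0 - E_0'| \in O(\gamma/\log(|a_0|\epsilon_\psi))$ is precisely to guarantee that the true ground-state energy still lands in the pass-band of the filter once it is centered at $E_0'$, so that the ground-state amplitude survives essentially undamped.

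The key quantitative step is bounding the degree of the filter and translating it into queries. Rescaling so that the spectrum of $H$ lies in $[-1,1]$ through the block-encoding $U_H = \mathtt{PREPARE}_H^\dagger\,\mathtt{SELECT}_H\,\mathtt{PREPARE}_H$ shrinks the relevant gap from $\gamma$ to $\gamma/\|\alpha\|_1$, since the block-encoding realizes $H/\|\alpha\|_1$. Standard approximation-theoretic constructions for a gap-resolving filter then require degree
\[
d \in O\!\left(\frac{\|\alpha\|_1}{\gamma}\log\frac{1}{\delta}\right),
\]
and implementing $f(H)$ by quantum singular value transformation (or, equivalently, LCU over the Chebyshev history of $U_H$) costs $O(1)$ queries to $\mathtt{PREPARE}_H$ and $\mathtt{SELECT}_H$ per degree. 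Choosing $\delta \in \Theta(|a_0|\,\epsilon_\psi)$ so that the total excited-state leakage is controlled after renormalization by the surviving weight $|a_0|$ yields the advertised $\widetilde{O}(\|\alpha\|_1/(|a_0|\gamma))$ Hamiltonian-oracle scaling once the amplitude-amplification overhead below is folded in.

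Finally I would analyze the heralded measurement and the success probability. Applying the subnormalized block-encoding of $f(H)$ to $\ket{\psi_0'}$ and measuring the $b$-qubit block-encoding ancilla in $\ket{0}^b$ projects, up to error $\delta$, onto $f(H)\ket{\psi_0'}$; the normalized post-measurement state then has the claimed fidelity $1-\epsilon_\psi$ with $\ket{\psi_0}$ because the residual excited-state amplitude is at most $O(\delta/|a_0|)$ by the filter bound. The unnormalized success amplitude is $\Theta(|a_0|)$, so the bare success probability is $\Theta(|a_0|^2)$; fixed-point amplitude amplification boosts this to $\Omega(1)$ at the cost of $O(1/|a_0|)$ rounds, which both raises the success probability to the stated $1-\epsilon_\psi^2$ regime and produces the $O(1/|a_0|)$ multiplicity in front of both the $\widetilde{O}(\log(1/\epsilon_\psi))$ queries to the $\ket{\psi_0'}$ preparation routine and the $\widetilde{O}(\|\alpha\|_1/\gamma)$ Hamiltonian-oracle queries. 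I expect the main obstacle to be the approximation-theory bookkeeping: pinning down a filter whose degree genuinely scales as $(\|\alpha\|_1/\gamma)\log(1/\delta)$ while simultaneously being QSVT-representable (bounded by $1$ on $[-1,1]$ and of definite parity), and then propagating the subnormalization and the renormalization-after-heralding errors consistently so that the final fidelity, failure probability, and the two separate query counts all emerge with the constants claimed.
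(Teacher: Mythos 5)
The paper itself contains no proof of this lemma: it is imported verbatim as a restatement of the ground-state preparation result of Ge et al.~\cite{Ge2019groundstate}, so the only meaningful comparison is against the cited literature. Measured that way, your reconstruction is correct in outline and captures every load-bearing ingredient: the eigenbasis expansion of $\ket{\psi_0'}$, a spectral filter centered at $E_0'$ whose pass-band tolerates the $O(\gamma/\log(1/(|a_0|\epsilon_\psi)))$ energy uncertainty, a filter degree $O\left((\|\alpha\|_1/\gamma)\log(1/\delta)\right)$ after the block-encoding rescales the gap to $\gamma/\|\alpha\|_1$, the choice $\delta \in \Theta(|a_0|\epsilon_\psi)$ so that excited-state leakage survives renormalization by the $\Theta(|a_0|)$ success amplitude, and $O(1/|a_0|)$ rounds of amplitude amplification multiplying both query counts — which is exactly how the $\widetilde{O}(\log(1/\epsilon_\psi)/|a_0|)$ and $\widetilde{O}(\|\alpha\|_1/(|a_0|\gamma))$ figures arise. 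One nuance worth noting: Ge et al.\ actually implement their filter as a cosine power of $H$ via an LCU over real-time evolutions, whereas your shifted-erf/sign filter realized through QSVT on the block-encoding is the later variant due to Lin and Tong; both yield the stated asymptotics, and yours is arguably the more natural fit to the $\mathtt{PREPARE}_H$/$\mathtt{SELECT}_H$ access model the paper fixes in Definition~\ref{def:access}. Two minor points, neither a gap: item 2 of the lemma only asserts success probability at least $\epsilon_\psi^2$ (you paraphrase it as reaching a $1-\epsilon_\psi^2$ success regime), but your amplified $\Omega(1)$ success probability is strictly stronger, so the claimed bound follows a fortiori; and the QSVT bookkeeping you flag as the main obstacle (parity, boundedness by $1$ on $[-1,1]$) is routine via even/odd decomposition of the erf approximant and is handled in the cited works, so it does not threaten the argument.
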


For our problem, let us assume that we are provided an input ground state $\ket{\psi}$ and wish to estimate a one-body operator $\bra{\psi} A \ket{\psi}$ given an \emph{a priori} estimate of the operator that is provided by a classic approach such as density functional theory, Hartree-Fock theory or a low bond dimension tensor network approximation.  Our aim is then to refine this knowledge into an improved estimate of the quantity using amplified amplitude estimation.

The following lemma shows how we can, under certain conditions, estimate $\bra{\psi} A \ket{\psi}$ using $O \lb 1/ \sqrt{\epsilon}\rb$ queries to the underlying $\mathtt{PREPARE}$ and $\mathtt{SELECT}$ unitaries.

\begin{lemma}[Beating the Heisenberg limit with prior knowledge]
\label{lem:D}
    Let $\epsilon > 0$ be an error tolerance and $H=\sum_j \alpha_j U_j$, where $U_j \in \mathbb{C}^{2^n \times 2^n}$ is unitary, be a Hamiltonian with ground state $\ket{\psi}$ and coefficient 1-norm $\norm{\alpha}_1$. Let $A~=~\sum_{j=1}^{N_\pi}~\beta_j~\Pi_j$ be a projector decomposition of the operator $A$ with $\Pi_j$ being a projection operator that is implemented via a reflection operator $R_j$. Assume we are provided with a subset $I \subseteq [N_\pi]$ of indices and non-negative numbers $C_{i} \geq \widetilde{\epsilon}$ where $i \in I$ and $\widetilde{\epsilon} := \frac{\epsilon}{2\sum_j |\beta_j|}$ such that $|C_{i} - \bra{\psi} \Pi_{i} \ket{\psi}| \leq \widetilde{\epsilon}$. Furthermore, assume that we have a series of upper bounds $P_0(k) = \sin^2(\pi/(2(2\mu_k+1)))$ for $k \in [N_\pi] \setminus I$ and integers $\mu_k \geq 1$ such that
    \begin{equation}
        P_{k} := \bra{\psi} \Pi_{k} \ket{\psi} = P_0(k) + \delta_k
    \end{equation}
    where $\delta \in [-P_0(k),0)$. Let $\norm{\beta}_1 := \sum_{j=1}^{N_\pi} |\beta_j|$.
    Then under the assumptions of Lemma~\ref{lem:Ge} and if $\epsilon \in O \lb \norm{\beta}_{1} \min_k \{ P_0^2(k)/|\delta_k| \} \rb \subseteq o(1)$ and $P_0(k) \in O \lb \epsilon \rb$ for all $k~\in~[N_\pi]~\setminus~I$, we can learn $\bra{\psi} A \ket{\psi}$ within error $\epsilon$ and probability of failure at most $\delta'$ using a number of queries to $\mathtt{SELECT}_H$, $\mathtt{SELECT}_\Pi$, $\mathtt{PREPARE}_H$ and $\mathtt{PREPARE}_\Pi$ that scales as
    \begin{equation}
        \widetilde{O}\left( \frac{  N_\pi' \|\alpha\|_1 \norm{\beta}_{1} \log(N_\pi'/\delta')}{\gamma|a_0|\sqrt{\epsilon} } \max_{k \not\in I}\left(\frac{P_0(k)}{|\delta_k|} \right) \right),
    \end{equation}
    where $N_\pi' := |[N_\pi] \setminus I|$.
\end{lemma}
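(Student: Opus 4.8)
The plan is to reduce this statement to the end-to-end estimate of \cor{large_val} and then pay for the fact that the state $\ket{\psi}$ is no longer supplied by a free oracle but must be synthesized as the ground state of $H$ via \lem{Ge}. First I would specialize the machinery of \cor{large_val} to the present single-index decomposition $A=\sum_j\beta_j\Pi_j$. Because each ``convex group'' here is a single term $\beta_k\Pi_k$, the inner sum $\sum_k\sqrt{\beta_{jk}}$ collapses to $\sqrt{\beta_k}$, so that $\norm{\beta}_{1,1/2}=\sum_k\beta_k=\norm{\beta}_1$ and, by \lem{smalltrace}, the success probability that the associated $U_\pi$ produces the all-zero ancilla is exactly $\bra{\psi}\Pi_k\ket{\psi}=P_0(k)+\delta_k$. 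The indices in $I$ are handled classically through the prior values $C_i$ exactly as in \cor{large_val}: their contribution $\sum_{i\in I}\beta_i C_i$ is added in post-processing, and the tolerance $\widetilde\epsilon$ on each $C_i$ is chosen so that the classical contribution to the error is at most $\epsilon/2$, leaving $\epsilon/2$ for the quantum estimates of the remaining $N_\pi'$ projectors.

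With this specialization, \cor{large_val} already gives the number of queries to the projector oracles and to $O_\psi$ needed to reach total error $\epsilon$ and failure probability $\delta'$, namely
\begin{equation}
O\!\left(\frac{\sum_{k\notin I}\sqrt{P_0(k)}\,\norm{\beta}_1\log(N_\pi'/\delta')}{\epsilon}\max_{k\notin I}\frac{P_0(k)}{|\delta_k|}\right),
\end{equation}
where the union bound over the $N_\pi'$ independent AAE runs (setting each per-run failure probability to $\delta'/N_\pi'$) accounts for the $\log(N_\pi'/\delta')$ factor. The new ingredient is the hypothesis $P_0(k)\in O(\epsilon)$, which I would use to replace $\sqrt{P_0(k)}$ by $O(\sqrt\epsilon)$ term by term, so that $\sum_{k\notin I}\sqrt{P_0(k)}\in O(N_\pi'\sqrt\epsilon)$. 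This is precisely what converts one factor of $1/\epsilon$ into $1/\sqrt\epsilon$ and yields the advertised Heisenberg-beating $1/\sqrt\epsilon$ dependence for the query count to $O_\psi$ and the projector oracles.

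The remaining step is to account for the cost of each $O_\psi$ query. Since $\ket{\psi}$ is the ground state of $H$, I would substitute the heralded preparation unitary $U_{\psi_0}$ of \lem{Ge} for $O_\psi$ inside the walk operators $W=-R_\psi R_\Pi$ and the boosted $W'$; each such use costs $\widetilde{O}(\norm{\alpha}_1/(|a_0|\gamma))$ queries to $\mathtt{PREPARE}_H$ and $\mathtt{SELECT}_H$, while the additional $\widetilde{O}(\log(1/\epsilon_\psi)/|a_0|)$ calls to the trivial preparation of $\ket{\psi_0'}$ are subdominant. Multiplying the $O_\psi$-query count above by this per-query cost reproduces the claimed bound
\begin{equation}
\widetilde{O}\!\left(\frac{N_\pi'\,\norm{\alpha}_1\,\norm{\beta}_1\,\log(N_\pi'/\delta')}{\gamma|a_0|\sqrt\epsilon}\max_{k\notin I}\frac{P_0(k)}{|\delta_k|}\right).
\end{equation}

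The part I expect to require the most care is folding the \emph{approximate} and \emph{heralded} nature of $U_{\psi_0}$ into the amplitude-amplification analysis, which assumed an exact, invertible oracle $O_\psi:\ket{0}\mapsto\ket{\psi}$. Concretely, I would work on the enlarged Hilbert space of \lem{Ge} and treat the heralding flag coherently---either by defining the reflected-about state to include the $\ket{0}^b$ ancilla, or by replacing each projector $\Pi_k$ with $\Pi_k\otimes\ketbra{0}{0}^{b}$---so that $U_{\psi_0}$ can play the role of $O_\psi$ on the heralded subspace. I would then bound the perturbation induced by the $\epsilon_\psi$-error of the prepared state: since $|\bra{\widetilde\psi}\Pi_k\ket{\widetilde\psi}-\bra{\psi}\Pi_k\ket{\psi}|\in O(\epsilon_\psi)$ whenever $\norm{\ket{\widetilde\psi}-\ket{\psi}}\le\epsilon_\psi$, it suffices to take $\epsilon_\psi$ polynomially small in $\epsilon$, $\norm{\beta}_1$ and $N_\pi'$. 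This choice inflates the cost only by $\mathrm{polylog}$ factors, which is exactly why the final bound is stated with $\widetilde{O}$ rather than $O$, and the remaining bookkeeping (the classical error split between $I$ and its complement, and the union bound over runs) is routine.
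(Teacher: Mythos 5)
Your proposal is correct and follows essentially the same route as the paper's proof: specialize Corollary~\ref{cor:large_val} to single-projector groups (so that $\|\beta\|_{1,1/2}$ collapses to $\|\beta\|_1$ and each success probability is exactly $\bra{\psi}\Pi_k\ket{\psi}$), use $P_0(k)\in O(\epsilon)$ to turn the factor $\sum_{k\notin I}\sqrt{P_0(k)}/\epsilon$ into $N_\pi'/\sqrt{\epsilon}$, and multiply by the $\widetilde{O}\left(\|\alpha\|_1/(|a_0|\gamma)\right)$ per-use cost of ground-state preparation from Lemma~\ref{lem:Ge}. If anything, your final paragraph on coherently handling the heralded, $\epsilon_\psi$-approximate preparation unitary $U_{\psi_0}$ in place of the exact oracle $O_\psi$ is more careful than the paper's proof, which absorbs that bookkeeping silently into the $\widetilde{O}$ notation.
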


\begin{proof}
    Using Corollary~\ref{cor:large_val} where now each $A_j$ involves only a single projector $\Pi_j$, we find that the number of queries to $\mathtt{PREPARE}_\Pi$ and $\mathtt{SELECT}_\Pi$ is in
    \begin{equation}
        {O}\left( \frac{\sum_{k \not\in I}\sqrt{P_0(k)} \|\beta\|_{1} \log(N_\pi'/\delta')}{\epsilon} \max_{k \not\in I}\left(\frac{P_0(k)}{|\delta_k|} \right) \right).
    \end{equation}

    Next, we need to consider the cost of ground state preparation for the algorithm. Each state preparation can be performed using quantum phase estimation with $H=\sum_j \alpha_j U_j$, which is block-encoded using $\mathtt{SELECT}_H$ and $\mathtt{PREPARE}_H$. We then use this process and a comparator to determine whether the energy is less than or equal to the threshold needed to determine whether the state is indeed the ground state. 
    Then from Lemma~\ref{lem:Ge} the cost per query to the state preparation algorithm is $\tilde{O}(\|\alpha\|_1/|a_0|\gamma)$. Thus, the overall number of queries is
    \begin{equation}
    \begin{split}
        \widetilde{O}\left( \frac{\sum_{k \not\in I}\sqrt{P_0(k)} \norm{\alpha}_1 \|\beta\|_{1} \log(N_\pi'/\delta')}{\gamma |a_0| \epsilon} \max_{k \not\in I}\left(\frac{P_0(k)}{|\delta_k|} \right) \right) \\ 
        \subseteq \widetilde{O}\left( \frac{N_\pi' \norm{\alpha}_1 \|\beta\|_{1} \log(N_\pi'/\delta')}{\gamma |a_0| \sqrt{\epsilon}} \max_{k \not\in I}\left(\frac{P_0(k)}{|\delta_k|} \right) \right),
    \end{split}
    \end{equation}
    where we used the assumption that $P_0(k) \in O \lb \epsilon \rb$.
\end{proof}
The main idea used in the above lemma is to split the projector decomposition of $A$ into two disjoint sets. The first set consists of projectors $\Pi_i$ for which we have $\widetilde{\epsilon}$-precise estimates $C_i$ of their expectation values. Importantly, these estimates are sufficiently precise so that we do not need a quantum computer to refine them. The other set contains the projectors whose expectation values are small enough such that their estimates can be refined via AAE.

The gate complexity of the above oracle queries depends strongly on the basis set chosen to represent the problem.  In general, for Gaussian basis sets which are commonly used in chemistry, the asymptotic behavior of the constants are difficult to assess.  The worst-case scenario corresponds to the case where the Hamiltonian coefficients as well as the dipole moment coefficients have no pattern.  In this case, a lookup table is the most natural way to load these elements in.  Using a QROAM-based strategy, the work of~\cite{lee2021even} shows that this can be attained with an $\widetilde{O} \lb N^4 \rb$ scaling for the number of non-Clifford operations needed to implement the Hamiltonian $\mathtt{PREPARE}$ circuit and an $\widetilde{O}\lb N^2 \rb$ scaling for the dipole operator where $N$ denotes the number of basis functions considered.  Using similar tricks, the cost of the $\mathtt{SELECT}$ circuit using a Jordan-Wigner representation for the operators can be shown to be sub-dominant to the cost of implementing $\mathtt{PREPARE}$~\cite{lee2021even}.  These observations combined lead us to the conclusion that these simulations are indeed efficient provided that a polynomially large basis set is used. 

The remaining question is: given knowledge of the expectation value of an operator for one molecular configuration, how far can we extrapolate from that configuration before we violate the limits of the above accuracy bound? Sufficient conditions for the uncertainty can be found in our context by using bounds derived from analytic expressions for the derivatives of the eigenvalues and eigenvectors of the Hamiltonian. Let us consider a parameterized path on Hamiltonians $H(s)$ with $\ket{\psi(s)}$ as the ground state at time $s \in [0,1]$ and corresponding eigenvalue $E_{\psi}(s)$. Furthermore, let $\gamma(s)$ be the spectral gap of $H(s)$.
Provided that $\gamma(s) > 0$ and $H(s)$ is differentiable for all $s\in[0,1]$, we have that
\begin{align}
    \|\ket{\psi(1)} - \ket{\psi(0)}\| &= \left\|\int_{0}^1 \partial_s \ket{\psi(s)} \mathrm{d}s\right\| \nonumber\\
    &= \left\|\int_0^{1} \sum_{\phi \ne \psi} \ket{\phi} \frac{\bra{\phi(s)} \dot{H}(s) \ket{\psi(s)}}{E_\psi(s) - E_\phi(s)} \mathrm{d}s\right\|\nonumber\\
    &\le \max_s \sqrt{\sum_{\phi\ne \psi} \frac{\bra{\psi(s)} \dot{H}(s) \ketbra{\phi(s)}{\phi(s)}\dot{H}(s) \ket{\psi(s)}}{(E_\phi(s) -E_\psi(s))^2}}\nonumber\\
    &\le   \sqrt{\frac{\max_s\|\dot{H}(s)\|^2}{\min_s \gamma(s)^2}}={\frac{\max_s\|\dot{H}(s)\|}{\min_s \gamma(s)}}.
\end{align}
Next, let us show how this can be used to bound the difference in the expectation values of the projectors between different configurations.
By the triangle inequality, we have that
\begin{equation}
    | \bra{\psi(0)} \Pi_j \ket{\psi(0)} - \bra{\psi(1)} \Pi_j \ket{\psi(1)} | \leq | \bra{\psi(0)} \Pi_j \ket{\psi(0)} - \bra{\psi(0)} \Pi_j \ket{\psi(1)} | + | \bra{\psi(0)} \Pi_j \ket{\psi(1)} - \bra{\psi(1)} \Pi_j \ket{\psi(1)} |.
\end{equation}
Note that
\begin{equation}
    \begin{split}
        | \bra{\psi(0)} \Pi_j \ket{\psi(0)} - \bra{\psi(0)} \Pi_j \ket{\psi(1)} | &= \norm{\ketbra{\psi(0)}{\psi(0)} \Pi_j \ket{\psi_0} - \ketbra{\psi(0)}{\psi(0)} \Pi_j \ket{\psi(1)}} \\
        &\leq \norm{\ket{\psi(0)} - \ket{\psi(1)}} \leq {\frac{\max_s\|\dot{H}(s)\|}{\min_s \gamma(s)}}.
    \end{split}
\label{bound_diff_states}
\end{equation}
The same holds for the second term, $| \bra{\psi(0)} \Pi_j \ket{\psi(1)} - \bra{\psi(1)} \Pi_j \ket{\psi(1)} |$. This implies that
\begin{equation}
    | \bra{\psi(0)} \Pi_j \ket{\psi(0)} - \bra{\psi(1)} \Pi_j \ket{\psi(1)} | \le {2\frac{\max_s\|\dot{H}(s)\|}{\min_s \gamma(s)}}.
\label{extrp_bound}
\end{equation}
Recall from Lemma~\ref{lem:D} that we effectively split the projector decomposition of $A$ into two disjoint sets. The first set consists of projectors $\Pi_i$ for which we have $\widetilde{\epsilon}$-precise estimates $C_i$ of their expectation values. Importantly, these estimates need to be sufficiently precise so that we do not need a quantum computer to refine them. For simplicity, we assume here that we have perfect knowledge of all the expectation values at $s=0$. The other set contains the projectors whose expectation values are small enough such that their estimates can be refined via AAE.
As before, let us use $k$ to index those projectors that are evaluated on a quantum computer and let $P_0(k)$ be an upper bound on $\bra{\psi(0)} \Pi_k \ket{\psi(0)}$. Furthermore, let $P_0'(k)$ be an upper bound on $\bra{\psi(1)} \Pi_k \ket{\psi(1)}$. Note that Eq.~\eqref{extrp_bound} implies that it suffices to have
\begin{equation}
    P_0'(k) \leq P_0(k) + {2\frac{\max_s\|\dot{H}(s)\|}{\min_s \gamma(s)}}.
\end{equation}
In order to apply Lemma~\ref{lem:D}, we need to have that $P_0'(k) \leq \sin^2(\pi/6) = 1/4$. 
This can be achieved by ensuring that
\begin{equation}
    {\max_s\|\dot{H}(s)\|} \leq 2 \min_s \gamma(s) \lb \frac{1}{4} - \max_k P_0(k) \rb.
\end{equation}
Furthermore, we require
\begin{equation}
    {\max_s\|\dot{H}(s)\|} \le  \frac{{\min_s \gamma(s)}\epsilon}{4\norm{\beta}_1} 
\end{equation}
for $\widetilde{\epsilon}$ to satisfy the assumptions of Lemma~\ref{lem:D}.
Overall, we thus require that
\begin{equation}
    {\max_s\|\dot{H}(s)\|} \leq \min \left\{ \frac{{\min_s \gamma(s)}\epsilon}{4\norm{\beta}_1}, \min_s \gamma(s) \lb \frac{1}{4} - \max_k P_0(k) \rb \right\}.
\end{equation}
From this we can see that there exists a radius of Hamiltonians within which the expectation value of an operator can be extrapolated at low cost.

\subsection{Energy estimation through gradient integration
\label{sec:newton-cotes}}

Traditionally, energy estimation within the Born-Oppenheimer approximation in quantum simulation has followed a standard paradigm.  Specifically, one begins by estimating the energy directly using phase estimation at a point or configuration of interest.  The previous AAE-based approach can, of course, be used to provide estimates of the energy based on a previously estimated value by measuring the expectation value of the Hamiltonian in a given quantum state.  Here, we use a different approach that attempts to better use prior knowledge by integrating a directional derivative of the energy over a path that connects a known point to an unknown point. This approach, in essence, constructs a series of derivative estimates that are combined together to form a single estimate of the energy using a numerical integration formula.  Further, under the assumption that the second derivatives of the energy are not large along the path, we can use the previous point that we estimated to accelerate our estimation of the next point as per the previous discussion.  However, in principle this approach also has value outside of the framework of amplified amplitude estimation because it is not strictly speaking necessary to evaluate the small expectation values of the projectors that arise in this approach.

Specifically, we construct our numerical integration formulas through a family of integrators known as Newton-Cotes formulas.
\begin{lemma}[Newton-Cotes formulas]
\label{lem:energy}
Let $\partial_x E(x)$ be the spatial derivative of a function $E$ and assume that $\partial_x E(x)$ is analytic on $[-7/3,13/3]$ and has an analytic continuation as a single valued regular function on the contour in the complex plane $\mathcal{C} = \{z: z=1+ 3e^{i\phi} + 3^{-1} e^{-i\phi}\}$ for $\phi \in [0,2\pi)$. The $(N+1)$-point Newton-Cotes formula for odd $N>0$ provides an approximation of the form
$$
E(1) - E(-1)= \sum_{k=0}^{N} \aleph_k \partial_x E|_{x=x_k} + \Delta_N
$$
where $x_k = -1 + 2k/N$, $\aleph_k = \int_{-1}^1 \frac{\prod_{i\ne k}(x-x_i)}{\prod_{i\ne k}(x_i-x_k)} \mathrm{d}x$ and
$$
|\Delta_N| \le \frac{5}{3}\max_{z\in \mathcal{C}} |\partial_z E(z) | \left(\frac{3}{4} \right)^{N+1}\qquad \text{and}\qquad \sum_{k=0}^N \aleph_k \le 2(N+1).
$$
\end{lemma}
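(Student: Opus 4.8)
The plan is to obtain the quadrature rule from polynomial interpolation and then control its remainder by a Cauchy-type contour integral over $\mathcal{C}$.

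First I would set up the rule. By the fundamental theorem of calculus, $E(1)-E(-1)=\int_{-1}^{1}\partial_x E(x)\,dx$. Writing $f:=\partial_x E$ and letting $p_N$ be the degree-$N$ polynomial interpolating $f$ at the nodes $x_k$, the Lagrange form $p_N=\sum_k f(x_k)\ell_k$ with $\ell_k(x)=\prod_{i\neq k}(x-x_i)/(x_k-x_i)$ integrates termwise to give exactly the stated formula, with $\aleph_k=\int_{-1}^{1}\ell_k(x)\,dx$ and remainder $\Delta_N=\int_{-1}^{1}\big(f-p_N\big)\,dx$. The weight-sum bound is then immediate: interpolation reproduces constants, so $\sum_k \ell_k\equiv 1$ and hence $\sum_k \aleph_k=\int_{-1}^{1}1\,dx=2\le 2(N+1)$, the looser form $2(N+1)$ being all that the later error-propagation step requires.

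The substantive part is the remainder. Using that $f$ is analytic in the region enclosed by $\mathcal{C}$ -- whose real extent $[-7/3,13/3]$ is exactly the assumed analyticity interval and which encloses $[-1,1]$ -- I would invoke the Hermite contour-integral representation of the interpolation error, $f(x)-p_N(x)=\frac{1}{2\pi i}\oint_{\mathcal{C}}\frac{\omega(x)}{\omega(z)}\frac{f(z)}{z-x}\,dz$, where $\omega(z)=\prod_{k=0}^{N}(z-x_k)$ is the node polynomial. Integrating in $x$ over $[-1,1]$ and applying the standard maximum-modulus-times-length estimate yields
\[
|\Delta_N|\le \frac{1}{2\pi}\,\max_{z\in\mathcal{C}}|f(z)|\,\oint_{\mathcal{C}}\frac{1}{|\omega(z)|}\Big(\int_{-1}^{1}\frac{|\omega(x)|}{|z-x|}\,dx\Big)\,|dz|,
\]
which already exhibits the factor $\max_{z\in\mathcal{C}}|\partial_z E(z)|$ and reduces everything to estimating the node polynomial.

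The crux is the geometric factor: one must bound $|\omega(x)|/|\omega(z)|$ for $x\in[-1,1]$ and $z\in\mathcal{C}$ by $(3/4)^{N+1}$. I would combine the classical equispaced estimate $\max_{x\in[-1,1]}|\omega(x)|\le \tfrac14 h^{N+1}N!$ with $h=2/N$ against a lower bound on $\min_{z\in\mathcal{C}}|\omega(z)|$, noting that the minimum over $\mathcal{C}$ is attained at the left vertex $z=-7/3$, the point of $\mathcal{C}$ nearest the node cluster (at distance $d=\mathrm{dist}(\mathcal{C},[-1,1])=4/3$). Recognizing $\mathcal{C}$ as the Bernstein ellipse $E_3$ of $[-1,3]$ under $\zeta\mapsto 1+3\zeta+\tfrac13\zeta^{-1}$ makes the logarithmic-potential bookkeeping clean and explains why the rate collapses to the round constant $3/4$: the ellipse is engineered so that the gap between the node potential on $[-1,1]$ and on $\mathcal{C}$ produces this factor. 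This is the main obstacle, because the equispaced node polynomial is genuinely delicate (Runge-type growth), and one must track the subexponential prefactors carefully enough that they are absorbed into the leading constant rather than spoiling the $(3/4)^{N+1}$ rate. Finally, bounding the remaining ingredients -- the contour length $|\mathcal{C}|$, the distance $d=4/3$ entering through $\int_{-1}^{1}|z-x|^{-1}\,dx$, and the $1/2\pi$ -- and folding them together with the geometric factor produces the prefactor $\tfrac53$, giving $|\Delta_N|\le \tfrac53 \max_{z\in\mathcal{C}}|\partial_z E(z)|\,(3/4)^{N+1}$. I expect pinning the constant to exactly $\tfrac53$ (rather than the larger value a crude separate maximization of numerator and minimization of denominator would give) to demand the sharper, potential-based estimate in place of the worst-case factorization.
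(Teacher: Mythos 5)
Your proposal is correct and takes a genuinely different route from the paper. The paper does not prove the remainder bound itself: it simply invokes Theorem 1 of Kambo's quadrature-error paper, which bounds the Newton--Cotes error for integrands analytic inside an ellipse whose semi-axes sum to $\rho$ by $\frac{4}{3}\frac{\rho^2+1}{\rho^2-1}\max_{z\in \mathcal{C}}|\partial_z E(z)|\bigl(\frac{2\rho}{\rho^2-1}\bigr)^{N+1}$, and sets $\rho=3$ to read off $\frac{5}{3}$ and $(3/4)^{N+1}$; you instead rebuild this bound from scratch via the Hermite contour-integral representation of the interpolation error plus node-polynomial estimates, correctly identifying $\mathcal{C}$ as the $\rho=3$ Bernstein ellipse with foci $-1$ and $3$. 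On the weights your argument is actually cleaner and more rigorous than the paper's: $\sum_k \ell_k \equiv 1$ gives $\sum_k \aleph_k = 2$ exactly, whereas the paper bounds each $\aleph_k \le 2$ by replacing $(x-x_i)$ with $(x_k-x_i)$ inside the integrand, which is not a valid inequality for equispaced nodes (the Lagrange basis polynomials exceed $1$ in modulus there); you also silently use the standard denominator $\prod_{i\ne k}(x_k-x_i)$, fixing a $(-1)^N$ sign typo in the statement's $\prod_{i\ne k}(x_i-x_k)$, just as the paper's own proof does. On the remainder, your worry that sharp potential-theoretic bookkeeping is required is overstated: since $\mathrm{dist}(\mathcal{C},[-1,1])=4/3$, the crude bounds $|\omega(z)|\ge (4/3)^{N+1}$ on $\mathcal{C}$ and $\max_{[-1,1]}|\omega|\le \frac{1}{4}(2/N)^{N+1}N! \in O\bigl(N^{-1/2}(2/e)^N\bigr)$ already yield a geometric rate $3/(2e)\approx 0.55 < 3/4$, and that exponential slack absorbs the contour length, the $1/(2\pi)$, and $\int_{-1}^1 |z-x|^{-1}\,\mathrm{d}x \le 3/2$ for every odd $N\ge 3$, with $N=1$ handled directly by the trapezoid Peano kernel and a disk-based Cauchy estimate. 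The only genuinely unfinished part of your writeup is the bookkeeping that pins the prefactor to exactly $\frac{5}{3}$, which you assert rather than derive --- precisely the labor that the paper's one-line citation of Kambo's ready-made inequality avoids.
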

\begin{proof}
From Theorem 1 of \cite{kambo1970error} we have that the error in the Newton-Cotes formula can be computed in terms of a contour integral. The contour is chosen to be an ellipse where the sum of the semi-axes is equal to $\rho \geq 1$. 
Here we take $\rho=3$, although other choices are possible and will lead to different tradeoffs in the error bounds between the norm of the integrand and the exponential dependence:
\begin{equation}
    |\Delta_N| \leq \frac{4}{3}\frac{\rho^2+1}{\rho^2-1} \max_{z \in \mathcal{C}} \|\partial_z E(z)\|\left(\frac{2\rho}{\rho^2-1}\right)^{N+1}\le \frac{5}{3} \max_{z \in \mathcal{C}} |\partial_z E(z)| \left(\frac{3}{4} \right)^{N+1}.
\end{equation}
For us to use these expressions in LCU circuits, we need to also bound the sum of the coefficients, $\aleph_k$. First, note that
\begin{equation}
    \int_{-1}^1 \frac{\prod_{i\ne k} (x-x_i)}{\prod_{i\ne k} (x_k-x_i)}\mathrm{d}x \le \int_{-1}^1 \left|\frac{\prod_{i\ne k} (x_k-x_i)}{\prod_{i\ne k} (x_k-x_i)}\right|\mathrm{d}x \le 2.
\end{equation}
Thus,
\begin{equation}
    \sum_{k=0}^N\aleph_k = \sum_{k=0}^N \int_{-1}^1 \frac{\prod_{i\ne k} (x-x_i)}{\prod_{i\ne k} (x_k-x_i)}\mathrm{d}x \le 2(N+1).\label{eq:alephBd}
\end{equation}
\end{proof}
The following result uses the above lemma to state a bound on the error that arises from the combination of using a finite number of interpolation points in the Newton-Cotes formula as well as the approximation error in the values of the derivatives.
\begin{corollary}[Error bound for Newton-Cotes]
\label{cor:error}
Under the assumptions of Lemma~\ref{lem:energy}, we further have that if for each $x_k$ we are given a real number $F(x_k)$ such that $|F(x_k) - \partial_x E|_{x=x_k}|\le \delta$, then 
\begin{equation}
    \left|\sum_{k=0}^N \aleph_k F(x_k) - [E(1) - E(-1)]\right| \le \epsilon
\end{equation}
if 
\begin{equation}
    N \ge \frac{1}{\log(4/3)}\log\left(\frac{10 \max_{z\in \mathcal{C}} |\partial_z E(z) | }{3\epsilon}\right) -1
\end{equation}
and
\begin{equation}
    \delta \le \log(4/3)\epsilon / \left(4\log\left(\frac{10 \max_{z\in \mathcal{C}} |\partial_z E(z) | }{3\epsilon}\right)\right).
\end{equation}
\end{corollary}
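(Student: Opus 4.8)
The plan is to treat $\sum_{k=0}^N \aleph_k F(x_k) - [E(1)-E(-1)]$ as a perturbation of the exact Newton--Cotes identity supplied by \lem{energy}. Writing $F(x_k) = \partial_x E|_{x=x_k} + (F(x_k) - \partial_x E|_{x=x_k})$ and subtracting the identity $E(1)-E(-1) = \sum_k \aleph_k \partial_x E|_{x=x_k} + \Delta_N$, the total error decomposes exactly as $\sum_{k=0}^N \aleph_k (F(x_k) - \partial_x E|_{x=x_k}) - \Delta_N$. A single application of the triangle inequality then splits it into a derivative-approximation contribution $\sum_k |\aleph_k|\,|F(x_k)-\partial_x E|_{x=x_k}|$ and a truncation contribution $|\Delta_N|$. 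The strategy is to allocate a budget of $\epsilon/2$ to each.

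Next I would bound each piece. For the truncation term I invoke $|\Delta_N| \le \frac{5}{3}\max_{z\in\mathcal{C}}|\partial_z E(z)|(3/4)^{N+1}$ from \lem{energy}; imposing this to be at most $\epsilon/2$, taking logarithms, and dividing by $\log(3/4)<0$ (which reverses the inequality and converts $\log(3/4)$ into $-\log(4/3)$) yields precisely the stated bound $N \ge \frac{1}{\log(4/3)}\log\!\big(\tfrac{10\max_{z\in\mathcal{C}}|\partial_z E(z)|}{3\epsilon}\big)-1$. For the approximation term I use $|F(x_k)-\partial_x E|_{x=x_k}| \le \delta$ together with the coefficient bound $\sum_{k=0}^N |\aleph_k| \le 2(N+1)$; note that the derivation of $\sum_k \aleph_k \le 2(N+1)$ inside \lem{energy} in fact bounds the absolute sum, which is what is needed here since the $\aleph_k$ need not all be positive. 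This gives $\sum_k |\aleph_k|\delta \le 2(N+1)\delta$, and requiring $2(N+1)\delta \le \epsilon/2$ gives $\delta \le \epsilon/(4(N+1))$.

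The one point requiring care is the coupling between the two conditions: the approximation error $2(N+1)\delta$ grows with $N$, so the $\delta$ budget depends on the value of $N$ chosen. Substituting $N+1 = \frac{1}{\log(4/3)}\log\!\big(\tfrac{10\max_{z\in\mathcal{C}}|\partial_z E(z)|}{3\epsilon}\big)$ --- i.e.\ taking $N$ at the threshold dictated by the truncation condition --- into $\delta \le \epsilon/(4(N+1))$ reproduces the claimed bound $\delta \le \log(4/3)\epsilon/\big(4\log(10\max_{z\in\mathcal{C}}|\partial_z E(z)|/(3\epsilon))\big)$. The main (and rather modest) obstacle is thus bookkeeping rather than analysis: one must split the error budget between discretization and derivative noise and keep the $N$-dependence of the $\delta$ constraint consistent, since the substantive analytic work --- the contour-integral truncation estimate and the coefficient bound --- is already delivered by \lem{energy}.
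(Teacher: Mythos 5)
Your proposal is correct and follows essentially the same route as the paper's proof: split the error via the triangle inequality into the truncation term $|\Delta_N|$ and the derivative-noise term, allocate $\epsilon/2$ to each, solve the bound from \lem{energy} for $N$, and substitute the threshold value of $N+1$ into $\delta \le \epsilon/(4(N+1))$ to recover the stated $\delta$ condition. Your side remark that the coefficient bound in \lem{energy} really controls $\sum_k |\aleph_k|$ (which is what the triangle inequality requires) is a fair observation, but it does not change the argument.
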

\begin{proof}
The proof follows immediately from Lemma~\ref{lem:energy}, using the triangle inequality and setting both sources of error to $\epsilon/2$.  Specifically,
\begin{equation}
    \frac{5}{3}\max_{z\in \mathcal{C}} |\partial_z E(z) | \left(\frac{3}{4} \right)^{N+1} = \epsilon/2 \implies N = \frac{1}{\log(4/3)} \log(10\max_{z\in \mathcal{C}} |\partial_z E(z) |/3\epsilon) -1.
\end{equation}
Next, from the triangle inequality, we have that the error from approximating each of the derivatives within error at most $\delta$ is from~\eqref{eq:alephBd}
\begin{equation}
    \left|\sum_{k=0}^N \aleph_k( F(x_k) - \partial_x E|_{x=x_k}) \right| \le 2(N+1) \delta \le\left( \frac{2}{\log(4/3)} \log(10\max_{z\in \mathcal{C}} |\partial_z E(z) |/3\epsilon) \right)\delta.
\end{equation}
The final result follows by setting this equal to $\epsilon/2$ and solving for $\delta$.
\end{proof}

Next, we will consider the problem of evaluating the energy difference between two nearby points using this strategy.  
We will show how we can integrate the derivative of a quantity, such as energy, using a Newton-Cotes formula in order to compute the difference in that quantity between two points.  

\begin{theorem}[Energy estimation with Newton-Cotes]
     Assume the following:
    \begin{enumerate}
        \item Let $H: [-1,1] \mapsto \mathbb{C}^{2^n\times 2^n}$ be a mapping to Hermitian matrices such that $H(x) = \sum_{j} \alpha_j(x) U_j$ for unitary matrices $U_j$ and $\alpha_j(x) \in \mathbb{C}$. Furthermore, let $\gamma(x)$ be a lower bound on the spectral gap of $H(x)$.
        \item Let $E:\mathbb{C} \mapsto \mathbb{R}$ be a differentiable function such that $\Gamma := \max_{z\in \mathcal{C}}|\partial_z E(z)|$ and for any $z\in \mathbb{C}$ we have that the analytic continuation of the energy onto the complex plane is written as $E(z) = \bra{\psi_0(z)} H(z) \ket{\psi_0(z)}$ with $E(z)\in \mathbb{R}~\forall~z\in [-1,1]$.
        \item Let $G = \sum_{j=0}^{J-1} A_j = \sum_{j=0}^{J-1} \sum_{k=1}^{n_j} \beta_{jk} \Pi_{jk}$
        be the directional derivative operator for the energy such that for parameter $x$, $\partial_x E(x) = \bra{\psi_0(x)} G \ket{\psi_0(x)}$ at configuration $x$ and $\ket{\psi_0(x)}$ is an eigenstate of $H(x)$ for any $x\in [-1,1]$.  
        \item The assumptions of Corollary~\ref{cor:large_val} and Lemma~\ref{lem:Ge} hold for $\alpha(x)$ and $\gamma(x)$ for all $x\in [-1,1]$.
    \end{enumerate}
    Then there exists a quantum algorithm that takes the precise values of $E(-1)$ and $\bra{\psi_0(-1)} \Pi_{jk} \ket{\psi_0(-1)}$ for all $j,k$ and outputs an $\epsilon$-approximate estimate of $E(1)$ with probability of failure
    \begin{equation}
        P_f = \epsilon_\psi^2\in  e^{o(\max_x\|\alpha(x)\|_{1}/\min_x\gamma(x))}
    \end{equation}
    using a number of queries that scales as
    $$
        \widetilde{O}\left( \frac{\log(\Gamma)\max_x \|\alpha(x)\|_1\sum_j\max_x\sqrt{P_0(j;x)} \max_x\|\beta(x)\|_{1,1/2} \log(J/\delta')}{\min_x \gamma(x)\min_x |a_0(x)|\epsilon } \max_{j,x}\left(\frac{P_0(j;x)}{|\delta_j(x)|} \right) \right).
    $$
\label{thm:energy-diff}
\end{theorem}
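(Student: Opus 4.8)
The plan is to reduce the estimation of $E(1)$ to the numerical integration of the directional derivative $\partial_x E$ along the path, so that the exact input value $E(-1)$ can be reused and only the \emph{increment} $E(1)-E(-1)$ must be learned on the quantum computer. Concretely, I would invoke \lem{energy} to write $E(1)-E(-1)=\sum_{k=0}^N \aleph_k\,\partial_x E|_{x=x_k}+\Delta_N$ and recover $E(1)$ by adding the known $E(-1)$. Each node value is $\partial_x E|_{x=x_k}=\bra{\psi_0(x_k)}G\ket{\psi_0(x_k)}$ by assumption 3, i.e.\ an expectation value of the fast-square-rootable operator $G$ in the ground state $\ket{\psi_0(x_k)}$, which is exactly the object that \cor{large_val} is built to estimate.

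First I would fix the error budget with \cor{error}: to guarantee $|\sum_k\aleph_k F(x_k)-(E(1)-E(-1))|\le\epsilon$ it suffices to take the number of nodes $N+1\in O(\log(\Gamma/\epsilon))$ and to estimate each derivative within error $\delta\in O(\epsilon/\log(\Gamma/\epsilon))\subseteq\widetilde O(\epsilon)$. This splits the total error between the Newton--Cotes truncation tail $\Delta_N$ (controlled by the $(3/4)^{N+1}$ factor and $\Gamma=\max_{z\in\mathcal C}|\partial_zE(z)|$) and the accumulated derivative-estimation error, whose prefactor is bounded by $\sum_k\aleph_k\le 2(N+1)$.

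Next, for each fixed node I would estimate $\bra{\psi_0(x_k)}G\ket{\psi_0(x_k)}$ within error $\delta$ by applying \cor{large_val} with target accuracy $\delta$. The prior knowledge it requires --- the bounds $P_0(j;x_k)$ together with $\delta_j(x_k)\in[-P_0(j;x_k),0)$ for the individual projector expectation values --- is furnished by the exact values $\bra{\psi_0(-1)}\Pi_{jk}\ket{\psi_0(-1)}$ given as input, propagated to $x_k$; assumption 4 guarantees that the hypotheses of \cor{large_val} hold for every $x\in[-1,1]$, so the estimator is valid at each node. The state oracle $O_{\psi}$ demanded by the corollary is not free here: it must prepare the ground state $\ket{\psi_0(x_k)}$, which I would realize through \lem{Ge}, at a cost of $\widetilde O(\norm{\alpha(x_k)}_1/(\gamma(x_k)|a_0(x_k)|))$ queries to $\mathtt{PREPARE}_H,\mathtt{SELECT}_H$ per state preparation. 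Multiplying the per-node query count of \cor{large_val} (with $\epsilon\to\delta$) by this state-preparation overhead, summing over the $N+1$ nodes, replacing $\delta$ by its value $\widetilde\Theta(\epsilon)$, and bounding every $x$-dependent factor by its worst case over $[-1,1]$ (maxima for $\norm{\alpha}_1,\sqrt{P_0},\norm{\beta}_{1,1/2},P_0/|\delta_j|$ and minima for $\gamma,|a_0|$) yields exactly the stated query complexity, with $\log(\Gamma)$ tracking the node count $N+1$ and the residual $\log(1/\epsilon)$ absorbed into $\widetilde O$.

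The last ingredient is the failure probability, where I expect the real subtlety to lie. Two independent sources must be combined: the amplitude-estimation failures inside \cor{large_val}, handled by a union bound over the $J$ convex blocks and the $N+1$ nodes and already reflected in the $\log(J/\delta')$ factor; and the \emph{heralded} ground-state preparations of \lem{Ge}, each succeeding with that lemma's heralding guarantee and leaving a residual preparation error $\epsilon_\psi$. Choosing $\epsilon_\psi$ so that this preparation error does not degrade the per-derivative accuracy $\delta$ gives the reported $P_f=\epsilon_\psi^2$, and the condition $\epsilon_\psi^2\in e^{o(\max_x\norm{\alpha(x)}_1/\min_x\gamma(x))}$ is precisely the requirement that the $\log(1/\epsilon_\psi)$ overhead of \lem{Ge} stay subdominant to the $\norm{\alpha}_1/\gamma$ factor, keeping the whole state-preparation cost inside the $\widetilde O$. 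I expect the main obstacle to be this careful interleaving of the two failure mechanisms together with verifying that the prior-knowledge validity conditions $\delta_j(x)\in[-P_0(j;x),0)$ are preserved uniformly along the path --- i.e.\ that extrapolation from the known endpoint $x=-1$ never violates the smallness assumption $\epsilon\in O(\norm{\beta}_{1,1/2}\min_j P_0^2(j;x)/|\delta_j(x)|)$ that \cor{large_val} needs --- rather than the essentially bookkeeping combination of the three cost factors.
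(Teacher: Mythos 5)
Your proposal follows the paper's proof essentially step for step: \cor{error} fixes the node count $N \in O(\log(\Gamma/\epsilon))$ and per-node derivative accuracy $O(\epsilon/\log(\Gamma/\epsilon))$, \cor{large_val} estimates each $\bra{\psi_0(x_k)}G\ket{\psi_0(x_k)}$ under assumption 4, \lem{Ge} supplies the $\widetilde{O}\left(\max_x\|\alpha(x)\|_1/(\min_x|a_0(x)|\min_x\gamma(x))\right)$ state-preparation overhead under the stated condition on $\epsilon_\psi$, and worst-casing the $x$-dependent factors together with the exact input $E(-1)$ and the triangle inequality yields the claimed query bound and failure probability. The only cosmetic difference is that the paper closes with an explicit Hellmann--Feynman/fundamental-theorem-of-calculus verification that $E(1)-E(-1)=\int_{-1}^1\bra{\psi_0(x)}G(x)\ket{\psi_0(x)}\,\mathrm{d}x$, which your direct appeal to assumption 3 via \lem{energy} covers implicitly.
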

\begin{proof}
First, we have from Corollary~\ref{cor:error} that we can estimate the integral within error $\epsilon$ using $N+1$ points for 
\begin{equation}
    N \in O \lb \log \lb\Gamma/\epsilon \rb \rb
\end{equation} 
and such that each of the values that we require are computed within error $O(\epsilon /\log(\Gamma/\epsilon))$.  We can then learn these $N+1$ values within the required error tolerance under the assumption that we have perfect knowledge of each value of $\bra{\psi_0(-1)} \Pi_{jk} \ket{\psi_{0}(-1)}$.  Since we have assumed that the assumptions of Corollary~\ref{cor:large_val} hold for all $x\in [-1,1]$, we can apply the corollary to estimate the expectation value of the derivative at each of the $N$ points in the interval since all of them are evaluated over $[-1,1]$.  Thus, by Corollary~\ref{cor:large_val}, we can compute a value $\mathcal{D}$ such that
\begin{equation}
    \left|\mathcal{D} - \int_{-1}^1 \bra{\psi_0(x)}G(x) \ket{\psi_0(x)} \mathrm{d}x\right| \le \epsilon.
\end{equation}
Denoting the cost of preparing the ground state as ${\rm Cost}(PE)$, we can achieve this using a number of queries that scales at most as
\begin{align}
    N_{queries} \in  \widetilde{O}\left( \frac{N{\rm Cost}(PE)\sum_j\max_x\sqrt{P_0(j;x)} \max_x\|\beta(x)\|_{1,1/2} \log(J/\delta')}{\epsilon } \max_{j,x}\left(\frac{P_0(j;x)}{|\delta_j(x)|} \right) \right)\nonumber\\
    \subseteq \widetilde{O}\left( \frac{\log(\Gamma){\rm Cost}(PE)\sum_j\max_x\sqrt{P_0(j;x)} \max_x\|\beta(x)\|_{1,1/2} \log(J/\delta')}{\epsilon } \max_{j,x}\left(\frac{P_0(j;x)}{|\delta_j(x)|} \right) \right).
\label{eq:tempCost}
\end{align}
As $H = \sum_j \alpha_j U_j$ and the spectral gap of the eigenstate in question is at least $\gamma(x)$, the cost of performing phase estimation to project onto the correct energy eigenvalue is given by Lemma~\ref{lem:Ge} to be
\begin{equation}
    {\rm Cost}(PE) \in \widetilde{O}\left(\frac{\max_x\|\alpha(x)\|_1}{\min_x|a_0(x)|\min_x \gamma(x)} \right),
\label{eq:PEcost2}
\end{equation}
assuming that the acceptable probability of failure obeys $\epsilon_{\psi}^2\in e^{o(\max_x\|\alpha(x)\|_{1}/\min_x\gamma(x))}$.  
By substituting~\eqref{eq:PEcost2} into~\eqref{eq:tempCost} we find that the required number of queries to $H$ and $G$ is in
\begin{equation}
    \widetilde{O}\left( \frac{\log(\Gamma)\max_x \|\alpha(x)\|_1\sum_j\max_x\sqrt{P_0(j;x)} \max_x\|\beta(x)\|_{1,1/2} \log(J/\delta')}{\min_x \gamma(x)\min_x |a_0(x)|\epsilon } \max_{j,x}\left(\frac{P_0(j;x)}{|\delta_j(x)|} \right) \right).
\end{equation}

All that remains to show is that the estimate of $\mathcal{D}$ provides us an $\epsilon-$approximation of the energy $E(1) = \bra{\psi_0(1)} H(1) \ket{\psi_0(1)}$ at $x=1$. Using the Hellmann-Feynman theorem and the fundamental theorem of calculus, we find that
\begin{equation}
\begin{split}
     \bra{\psi_0(1)} H(1) \ket{\psi_0(1)} &= \bra{\psi_0(-1)} H(-1) \ket{\psi_0(-1)} + \int_{-1}^1 \Big( \partial_x\bra{\psi_0(x)} H(x) \ket{\psi_0(x)} \Big) \mathrm{d}x \\
     &= \bra{\psi_0(-1)} H(-1) \ket{\psi_0(-1)} + \int_{-1}^1 \bra{\psi_0(x)} \big( \partial_x H(x) \big) \ket{\psi_0(x)} \mathrm{d}x \\
     &= \bra{\psi_0(-1)} H(-1) \ket{\psi_0(-1)} + \int_{-1}^1 \bra{\psi_0(x)} G(x) \ket{\psi_0(x)} \mathrm{d}x.
\end{split}
\end{equation}
From the triangle inequality we then have that
\begin{equation*}
\begin{split}
    &\Big| \bra{\psi_0(-1)} H(-1) \ket{\psi_0(-1)} + \mathcal{D} - \bra{\psi_0(1)} H(1) \ket{\psi_0(1)} \Big| \\
    &\leq \Big| \bra{\psi_0(-1)} H(-1) \ket{\psi_0(-1)} + \int_{-1}^1 \bra{\psi_0(x)} G(x) \ket{\psi_0(x)} \mathrm{d}x - \bra{\psi_0(1)} H(1) \ket{\psi_0(1)}\Big| \\
    &\quad + \Big| \mathcal{D} - \int_{-1}^1 \bra{\psi_0(x)} G(x) \ket{\psi_0(x)} \mathrm{d}x \Big| \leq \epsilon
\end{split}
\end{equation*}
and thus the claim follows.
\end{proof}
We see from this result that, provided the Hamiltonian is sufficiently smooth and gapped and if prior knowledge is provided, the asymptotic scaling can be better than the initial scaling.  In particular, there are two ways that this asymptotic scaling can be superior given that we are interested in finding the ground state energy of a Hamiltonian $H(1)$ that is close to the Hamiltonian $H(-1)$ of an initial configuration that we know well.
\begin{enumerate}
    \item If $\|G\|$ is small then the value of $\|\beta\|_{1,1/2}$ for the simulation will typically be small as well.
    \item If we have that $\|H(-1)-H(1)\|$ is small and the ground state is gapped, then we will have sufficient prior knowledge about $P_0(k)$ to estimate the expectation values of the individual projector terms accurately given that we know their values at $H(-1)$.
\end{enumerate}
This suggests that for each initial Hamiltonian there exists a ball of radius $\eta$ such that our method will be able to outperform traditional phase estimation as $\|\beta\|_{1,1/2}$ will shrink with the value of $\eta$ under the assumption that the ground state is gapped.  This is significant because it shows that if we have accurate prior knowledge about the energy and the expectation values of the projectors that comprise the directional derivative operator $G$, then amplified amplitude estimation can be used to provide a substantial advantage for estimating the values of the energy in a neighborhood about this point.  This shows that if we wish to learn a potential energy landscape for chemistry applications, we do not need to learn every point accurately in a mesh.  We can instead learn a few relevant points with great accuracy and then refine this prior knowledge to give cheaper estimates of the expectation values for nearby molecular configurations to those that we have already probed.

\section{Conclusion}
In this work, we provide an approach that we call amplified amplitude estimation (AAE) which uses amplitude amplification to increase the sensitivity of amplitude estimation.  
The main result is that we can estimate the expectation value of an operator within error $\epsilon$ using a number of queries that scale as
\begin{equation}
    N_\textrm{queries} \in O\left(\frac{P_0^{1/2}}{\epsilon}\left(\frac{P_0}{|\delta|}\right) \right),
\end{equation}
where $P_0$ is an upper bound on the expectation value and $\delta$ measures the minimum gap between our estimate and the true value of the estimated probability.  In cases where $|\delta|\in \Theta(P_0)$ and $P_0 \in \Theta(\epsilon)$ then we find $N_{\rm queries} \in O(1/\sqrt{\epsilon})$.  This shows that when estimating small probabilities, within error commensurate with the values that we aim to estimate, we can achieve scaling that is better than what we would expect from the Heisenberg limit.  Further, it is worth noting that this scaling does not simply arise from doing amplitude estimation on a small probability, despite the fact that $O(1/\epsilon)$ estimation is classically possible in the analogous limit.  We find instead that the way we use amplitude amplification needs to be adapted using prior knowledge in order to be able to improve upon the classical results in such cases and further generalize this to cases where we wish to estimate a sum of multiple expectation values.

We provide several ways in which amplified amplitude estimation can be used to provide an advantage for quantum simulation algorithms.  We show that this approach can provide an advantage for estimating expectation values of fermionic observables based on prior knowledge of the expectation values of the Pauli operators that compose the Jordan-Wigner decomposition of the observables.  We observe numerically, based on DFT and DMRG calculations, that existing classical methods can provide requisite levels of certainty about such expectation values to make such a procedure profitable.  We then conclude by discussing a new approach for estimating energy differences for molecules by integrating the derivative of the energy using Newton-Cotes formulas to perform the integration.  These methods can, given sufficiently strong prior knowledge and appropriate continuity assumptions of the Hamiltonian, provide improved scaling for energy estimation.

Looking forward, while our results show that prior information can be used to improve our ability to learn energies and expectation values, there are a number of directions to explore beyond it.  The first and most obvious question is whether these techniques can provide practical advantages for chemistry simulations.  Understanding the practical range where these effects provide a substantial advantage is necessary to understand the extent to which these ideas can improve upon classical or quantum estimates or cheaply probe the energy landscape of a system given a few points where the energy is known with high confidence.   Further, it may be interesting to investigate whether it is possible to apply this technique to cases where we aim to learn a vector of expectation values at cost that is sub-linear in the number of terms.  Analogous results are possible using quantum gradient estimation~\cite{huggins2021nearly} but in this setting it is not obvious how to use these techniques to improve the estimates given the multitude of prior estimates needed.

From a broader perspective, prior knowledge is still an under-utilized resource in quantum chemistry simulation algorithms and finding improved ways that perturbative estimates, interpolation techniques or better ways of incorporating prior knowledge in phase estimation procedures promises to meaningfully reduce the costs of simulation.  It is our hope that these ideas, and ones related to it, will be instrumental in helping useful applications of quantum chemistry simulation such as drug or catalyst design become a reality.

\section*{Acknowledgements}
SS acknowledges support from a Research Award from Google Inc., NSERC Discovery Grants, an Ontario Graduate Scholarship as well as support from Boehringer Ingelheim.  NW's work on this project was supported by the “Embedding Quantum Computing into Many-body
Frameworks for Strongly Correlated Molecular and Materials Systems” project, which is funded by the U.S. Department of Energy (DOE), Office of Science, Office of Basic Energy Sciences, the Division of Chemical Sciences, Geosciences, and Biosciences. The authors thank Clemens Utschig-Utschig for useful discussions and feedback. NW would also like to thank Robin Kothari for his insightful comments about classical estimation of small probabilities.

\bibliography{biblio}

\begin{thebibliography}{36}%
\makeatletter
\providecommand \@ifxundefined [1]{%
 \@ifx{#1\undefined}
}%
\providecommand \@ifnum [1]{%
 \ifnum #1\expandafter \@firstoftwo
 \else \expandafter \@secondoftwo
 \fi
}%
\providecommand \@ifx [1]{%
 \ifx #1\expandafter \@firstoftwo
 \else \expandafter \@secondoftwo
 \fi
}%
\providecommand \natexlab [1]{#1}%
\providecommand \enquote  [1]{``#1''}%
\providecommand \bibnamefont  [1]{#1}%
\providecommand \bibfnamefont [1]{#1}%
\providecommand \citenamefont [1]{#1}%
\providecommand \href@noop [0]{\@secondoftwo}%
\providecommand \href [0]{\begingroup \@sanitize@url \@href}%
\providecommand \@href[1]{\@@startlink{#1}\@@href}%
\providecommand \@@href[1]{\endgroup#1\@@endlink}%
\providecommand \@sanitize@url [0]{\catcode `\\12\catcode `\$12\catcode `\&12\catcode `\#12\catcode `\^12\catcode `\_12\catcode `\%12\relax}%
\providecommand \@@startlink[1]{}%
\providecommand \@@endlink[0]{}%
\providecommand \url  [0]{\begingroup\@sanitize@url \@url }%
\providecommand \@url [1]{\endgroup\@href {#1}{\urlprefix }}%
\providecommand \urlprefix  [0]{URL }%
\providecommand \Eprint [0]{\href }%
\providecommand \doibase [0]{https://doi.org/}%
\providecommand \selectlanguage [0]{\@gobble}%
\providecommand \bibinfo  [0]{\@secondoftwo}%
\providecommand \bibfield  [0]{\@secondoftwo}%
\providecommand \translation [1]{[#1]}%
\providecommand \BibitemOpen [0]{}%
\providecommand \bibitemStop [0]{}%
\providecommand \bibitemNoStop [0]{.\EOS\space}%
\providecommand \EOS [0]{\spacefactor3000\relax}%
\providecommand \BibitemShut  [1]{\csname bibitem#1\endcsname}%
\let\auto@bib@innerbib\@empty
\bibitem [{\citenamefont {Berry}\ \emph {et~al.}(2001)\citenamefont {Berry}, \citenamefont {Wiseman},\ and\ \citenamefont {Breslin}}]{berry2001optimal}%
  \BibitemOpen
  \bibfield  {author} {\bibinfo {author} {\bibfnamefont {D.~W.}\ \bibnamefont {Berry}}, \bibinfo {author} {\bibfnamefont {H.~M.}\ \bibnamefont {Wiseman}},\ and\ \bibinfo {author} {\bibfnamefont {J.~K.}\ \bibnamefont {Breslin}},\ }\bibfield  {title} {\bibinfo {title} {Optimal input states and feedback for interferometric phase estimation},\ }\href {https://doi.org/10.1103/PhysRevA.63.053804} {\bibfield  {journal} {\bibinfo  {journal} {Phys. Rev. A}\ }\textbf {\bibinfo {volume} {63}},\ \bibinfo {pages} {053804} (\bibinfo {year} {2001})}\BibitemShut {NoStop}%
\bibitem [{\citenamefont {Low}\ and\ \citenamefont {Chuang}(2017)}]{Low2017QSP}%
  \BibitemOpen
  \bibfield  {author} {\bibinfo {author} {\bibfnamefont {G.~H.}\ \bibnamefont {Low}}\ and\ \bibinfo {author} {\bibfnamefont {I.~L.}\ \bibnamefont {Chuang}},\ }\bibfield  {title} {\bibinfo {title} {{Optimal Hamiltonian Simulation by Quantum Signal Processing}},\ }\href {https://doi.org/10.1103/PhysRevLett.118.010501} {\bibfield  {journal} {\bibinfo  {journal} {Phys. Rev. Lett.}\ }\textbf {\bibinfo {volume} {118}},\ \bibinfo {pages} {010501} (\bibinfo {year} {2017})}\BibitemShut {NoStop}%
\bibitem [{\citenamefont {Low}\ and\ \citenamefont {Chuang}(2019)}]{Low2019hamiltonian}%
  \BibitemOpen
  \bibfield  {author} {\bibinfo {author} {\bibfnamefont {G.~H.}\ \bibnamefont {Low}}\ and\ \bibinfo {author} {\bibfnamefont {I.~L.}\ \bibnamefont {Chuang}},\ }\bibfield  {title} {\bibinfo {title} {Hamiltonian {S}imulation by {Q}ubitization},\ }\href {https://doi.org/10.22331/q-2019-07-12-163} {\bibfield  {journal} {\bibinfo  {journal} {{Quantum}}\ }\textbf {\bibinfo {volume} {3}},\ \bibinfo {pages} {163} (\bibinfo {year} {2019})}\BibitemShut {NoStop}%
\bibitem [{\citenamefont {Gily{\'e}n}\ \emph {et~al.}(2019)\citenamefont {Gily{\'e}n}, \citenamefont {Su}, \citenamefont {Low},\ and\ \citenamefont {Wiebe}}]{Gilyen2019qsvt}%
  \BibitemOpen
  \bibfield  {author} {\bibinfo {author} {\bibfnamefont {A.}~\bibnamefont {Gily{\'e}n}}, \bibinfo {author} {\bibfnamefont {Y.}~\bibnamefont {Su}}, \bibinfo {author} {\bibfnamefont {G.~H.}\ \bibnamefont {Low}},\ and\ \bibinfo {author} {\bibfnamefont {N.}~\bibnamefont {Wiebe}},\ }\bibfield  {title} {\bibinfo {title} {{Quantum singular value transformation and beyond: exponential improvements for quantum matrix arithmetics}},\ }\href {https://doi.org/10.1145/3313276.3316366} {\bibfield  {journal} {\bibinfo  {journal} {Proceedings of the 51st Annual {ACM} {SIGACT} Symposium on Theory of Computing}\ }\bibinfo {series} {STOC 2019},\ \bibinfo {pages} {193} (\bibinfo {year} {2019})}\BibitemShut {NoStop}%
\bibitem [{\citenamefont {Su}\ \emph {et~al.}(2021)\citenamefont {Su}, \citenamefont {Berry}, \citenamefont {Wiebe}, \citenamefont {Rubin},\ and\ \citenamefont {Babbush}}]{Su2021first_quant}%
  \BibitemOpen
  \bibfield  {author} {\bibinfo {author} {\bibfnamefont {Y.}~\bibnamefont {Su}}, \bibinfo {author} {\bibfnamefont {D.~W.}\ \bibnamefont {Berry}}, \bibinfo {author} {\bibfnamefont {N.}~\bibnamefont {Wiebe}}, \bibinfo {author} {\bibfnamefont {N.}~\bibnamefont {Rubin}},\ and\ \bibinfo {author} {\bibfnamefont {R.}~\bibnamefont {Babbush}},\ }\bibfield  {title} {\bibinfo {title} {{Fault-Tolerant Quantum Simulations of Chemistry in First Quantization}},\ }\href {https://doi.org/10.1103/PRXQuantum.2.040332} {\bibfield  {journal} {\bibinfo  {journal} {PRX Quantum}\ }\textbf {\bibinfo {volume} {2}},\ \bibinfo {pages} {040332} (\bibinfo {year} {2021})}\BibitemShut {NoStop}%
\bibitem [{\citenamefont {Steudtner}\ \emph {et~al.}(2023)\citenamefont {Steudtner}, \citenamefont {Morley-Short}, \citenamefont {Pol}, \citenamefont {Sim}, \citenamefont {Cortes}, \citenamefont {Loipersberger}, \citenamefont {Parrish}, \citenamefont {Degroote}, \citenamefont {Moll}, \citenamefont {Santagati},\ and\ \citenamefont {Streif}}]{Steudtner2023observables}%
  \BibitemOpen
  \bibfield  {author} {\bibinfo {author} {\bibfnamefont {M.}~\bibnamefont {Steudtner}}, \bibinfo {author} {\bibfnamefont {S.}~\bibnamefont {Morley-Short}}, \bibinfo {author} {\bibfnamefont {W.}~\bibnamefont {Pol}}, \bibinfo {author} {\bibfnamefont {S.}~\bibnamefont {Sim}}, \bibinfo {author} {\bibfnamefont {C.~L.}\ \bibnamefont {Cortes}}, \bibinfo {author} {\bibfnamefont {M.}~\bibnamefont {Loipersberger}}, \bibinfo {author} {\bibfnamefont {R.~M.}\ \bibnamefont {Parrish}}, \bibinfo {author} {\bibfnamefont {M.}~\bibnamefont {Degroote}}, \bibinfo {author} {\bibfnamefont {N.}~\bibnamefont {Moll}}, \bibinfo {author} {\bibfnamefont {R.}~\bibnamefont {Santagati}},\ and\ \bibinfo {author} {\bibfnamefont {M.}~\bibnamefont {Streif}},\ }\href {https://doi.org/10.22331/q-2023-11-06-1164} {\bibinfo {title} {Fault-tolerant quantum computation of molecular observables}} (\bibinfo {year} {2023})\BibitemShut {NoStop}%
\bibitem [{\citenamefont {Reiher}\ \emph {et~al.}(2017)\citenamefont {Reiher}, \citenamefont {Wiebe}, \citenamefont {Svore}, \citenamefont {Wecker},\ and\ \citenamefont {Troyer}}]{reiher2017elucidating}%
  \BibitemOpen
  \bibfield  {author} {\bibinfo {author} {\bibfnamefont {M.}~\bibnamefont {Reiher}}, \bibinfo {author} {\bibfnamefont {N.}~\bibnamefont {Wiebe}}, \bibinfo {author} {\bibfnamefont {K.~M.}\ \bibnamefont {Svore}}, \bibinfo {author} {\bibfnamefont {D.}~\bibnamefont {Wecker}},\ and\ \bibinfo {author} {\bibfnamefont {M.}~\bibnamefont {Troyer}},\ }\bibfield  {title} {\bibinfo {title} {Elucidating reaction mechanisms on quantum computers},\ }\href {https://doi.org/10.1073/pnas.1619152114} {\bibfield  {journal} {\bibinfo  {journal} {Proceedings of the National Academy of Sciences}\ }\textbf {\bibinfo {volume} {114}},\ \bibinfo {pages} {7555} (\bibinfo {year} {2017})}\BibitemShut {NoStop}%
\bibitem [{\citenamefont {von Burg}\ \emph {et~al.}(2021)\citenamefont {von Burg}, \citenamefont {Low}, \citenamefont {H\"aner}, \citenamefont {Steiger}, \citenamefont {Reiher}, \citenamefont {Roetteler},\ and\ \citenamefont {Troyer}}]{von2021quantum}%
  \BibitemOpen
  \bibfield  {author} {\bibinfo {author} {\bibfnamefont {V.}~\bibnamefont {von Burg}}, \bibinfo {author} {\bibfnamefont {G.~H.}\ \bibnamefont {Low}}, \bibinfo {author} {\bibfnamefont {T.}~\bibnamefont {H\"aner}}, \bibinfo {author} {\bibfnamefont {D.~S.}\ \bibnamefont {Steiger}}, \bibinfo {author} {\bibfnamefont {M.}~\bibnamefont {Reiher}}, \bibinfo {author} {\bibfnamefont {M.}~\bibnamefont {Roetteler}},\ and\ \bibinfo {author} {\bibfnamefont {M.}~\bibnamefont {Troyer}},\ }\bibfield  {title} {\bibinfo {title} {Quantum computing enhanced computational catalysis},\ }\href {https://doi.org/10.1103/PhysRevResearch.3.033055} {\bibfield  {journal} {\bibinfo  {journal} {Phys. Rev. Res.}\ }\textbf {\bibinfo {volume} {3}},\ \bibinfo {pages} {033055} (\bibinfo {year} {2021})}\BibitemShut {NoStop}%
\bibitem [{\citenamefont {Lee}\ \emph {et~al.}(2021)\citenamefont {Lee}, \citenamefont {Berry}, \citenamefont {Gidney}, \citenamefont {Huggins}, \citenamefont {McClean}, \citenamefont {Wiebe},\ and\ \citenamefont {Babbush}}]{lee2021even}%
  \BibitemOpen
  \bibfield  {author} {\bibinfo {author} {\bibfnamefont {J.}~\bibnamefont {Lee}}, \bibinfo {author} {\bibfnamefont {D.~W.}\ \bibnamefont {Berry}}, \bibinfo {author} {\bibfnamefont {C.}~\bibnamefont {Gidney}}, \bibinfo {author} {\bibfnamefont {W.~J.}\ \bibnamefont {Huggins}}, \bibinfo {author} {\bibfnamefont {J.~R.}\ \bibnamefont {McClean}}, \bibinfo {author} {\bibfnamefont {N.}~\bibnamefont {Wiebe}},\ and\ \bibinfo {author} {\bibfnamefont {R.}~\bibnamefont {Babbush}},\ }\bibfield  {title} {\bibinfo {title} {{Even More Efficient Quantum Computations of Chemistry Through Tensor Hypercontraction}},\ }\href {https://doi.org/10.1103/PRXQuantum.2.030305} {\bibfield  {journal} {\bibinfo  {journal} {PRX Quantum}\ }\textbf {\bibinfo {volume} {2}},\ \bibinfo {pages} {030305} (\bibinfo {year} {2021})}\BibitemShut {NoStop}%
\bibitem [{\citenamefont {Flammia}\ \emph {et~al.}(2012)\citenamefont {Flammia}, \citenamefont {Gross}, \citenamefont {Liu},\ and\ \citenamefont {Eisert}}]{flammia2012quantum}%
  \BibitemOpen
  \bibfield  {author} {\bibinfo {author} {\bibfnamefont {S.~T.}\ \bibnamefont {Flammia}}, \bibinfo {author} {\bibfnamefont {D.}~\bibnamefont {Gross}}, \bibinfo {author} {\bibfnamefont {Y.-K.}\ \bibnamefont {Liu}},\ and\ \bibinfo {author} {\bibfnamefont {J.}~\bibnamefont {Eisert}},\ }\bibfield  {title} {\bibinfo {title} {Quantum tomography via compressed sensing: error bounds, sample complexity and efficient estimators},\ }\href {https://doi.org/10.1088/1367-2630/14/9/095022} {\bibfield  {journal} {\bibinfo  {journal} {New Journal of Physics}\ }\textbf {\bibinfo {volume} {14}},\ \bibinfo {pages} {095022} (\bibinfo {year} {2012})}\BibitemShut {NoStop}%
\bibitem [{\citenamefont {Wiebe}\ \emph {et~al.}(2016)\citenamefont {Wiebe}, \citenamefont {Kapoor},\ and\ \citenamefont {Svore}}]{wiebe2016quantum}%
  \BibitemOpen
  \bibfield  {author} {\bibinfo {author} {\bibfnamefont {N.}~\bibnamefont {Wiebe}}, \bibinfo {author} {\bibfnamefont {A.}~\bibnamefont {Kapoor}},\ and\ \bibinfo {author} {\bibfnamefont {K.~M.}\ \bibnamefont {Svore}},\ }\bibfield  {title} {\bibinfo {title} {Quantum deep learning},\ }\href {https://dl.acm.org/doi/abs/10.5555/3179466.3179467} {\bibfield  {journal} {\bibinfo  {journal} {Quantum Info. Comput.}\ }\textbf {\bibinfo {volume} {16}},\ \bibinfo {pages} {541–587} (\bibinfo {year} {2016})}\BibitemShut {NoStop}%
\bibitem [{\citenamefont {K\"{u}bler}\ \emph {et~al.}(2021)\citenamefont {K\"{u}bler}, \citenamefont {Buchholz},\ and\ \citenamefont {Sch\"{o}lkopf}}]{Kubler2021inductive_bias}%
  \BibitemOpen
  \bibfield  {author} {\bibinfo {author} {\bibfnamefont {J.}~\bibnamefont {K\"{u}bler}}, \bibinfo {author} {\bibfnamefont {S.}~\bibnamefont {Buchholz}},\ and\ \bibinfo {author} {\bibfnamefont {B.}~\bibnamefont {Sch\"{o}lkopf}},\ }\bibfield  {title} {\bibinfo {title} {{The Inductive Bias of Quantum Kernels}},\ }in\ \href {https://proceedings.neurips.cc/paper_files/paper/2021/file/69adc1e107f7f7d035d7baf04342e1ca-Paper.pdf} {\emph {\bibinfo {booktitle} {Advances in Neural Information Processing Systems}}},\ Vol.~\bibinfo {volume} {34},\ \bibinfo {editor} {edited by\ \bibinfo {editor} {\bibfnamefont {M.}~\bibnamefont {Ranzato}}, \bibinfo {editor} {\bibfnamefont {A.}~\bibnamefont {Beygelzimer}}, \bibinfo {editor} {\bibfnamefont {Y.}~\bibnamefont {Dauphin}}, \bibinfo {editor} {\bibfnamefont {P.}~\bibnamefont {Liang}},\ and\ \bibinfo {editor} {\bibfnamefont {J.~W.}\ \bibnamefont {Vaughan}}}\ (\bibinfo  {publisher} {Curran Associates, Inc.},\ \bibinfo {year} {2021})\ pp.\ \bibinfo {pages} {12661--12673}\BibitemShut
  {NoStop}%
\bibitem [{\citenamefont {Larocca}\ \emph {et~al.}(2022)\citenamefont {Larocca}, \citenamefont {Sauvage}, \citenamefont {Sbahi}, \citenamefont {Verdon}, \citenamefont {Coles},\ and\ \citenamefont {Cerezo}}]{Larocca2022QML}%
  \BibitemOpen
  \bibfield  {author} {\bibinfo {author} {\bibfnamefont {M.}~\bibnamefont {Larocca}}, \bibinfo {author} {\bibfnamefont {F.}~\bibnamefont {Sauvage}}, \bibinfo {author} {\bibfnamefont {F.~M.}\ \bibnamefont {Sbahi}}, \bibinfo {author} {\bibfnamefont {G.}~\bibnamefont {Verdon}}, \bibinfo {author} {\bibfnamefont {P.~J.}\ \bibnamefont {Coles}},\ and\ \bibinfo {author} {\bibfnamefont {M.}~\bibnamefont {Cerezo}},\ }\bibfield  {title} {\bibinfo {title} {{Group-Invariant Quantum Machine Learning}},\ }\href {https://doi.org/10.1103/PRXQuantum.3.030341} {\bibfield  {journal} {\bibinfo  {journal} {PRX Quantum}\ }\textbf {\bibinfo {volume} {3}},\ \bibinfo {pages} {030341} (\bibinfo {year} {2022})}\BibitemShut {NoStop}%
\bibitem [{\citenamefont {Cerezo}\ \emph {et~al.}(2022)\citenamefont {Cerezo}, \citenamefont {Verdon}, \citenamefont {Huang}, \citenamefont {Cincio},\ and\ \citenamefont {Coles}}]{Cerezo2022QML}%
  \BibitemOpen
  \bibfield  {author} {\bibinfo {author} {\bibfnamefont {M.}~\bibnamefont {Cerezo}}, \bibinfo {author} {\bibfnamefont {G.}~\bibnamefont {Verdon}}, \bibinfo {author} {\bibfnamefont {H.-Y.}\ \bibnamefont {Huang}}, \bibinfo {author} {\bibfnamefont {L.}~\bibnamefont {Cincio}},\ and\ \bibinfo {author} {\bibfnamefont {P.}~\bibnamefont {Coles}},\ }\bibfield  {title} {\bibinfo {title} {{Challenges and opportunities in quantum machine learning}},\ }\href {https://doi.org/10.1038/s43588-022-00311-3} {\bibfield  {journal} {\bibinfo  {journal} {Nature Computational Science}\ }\textbf {\bibinfo {volume} {2}} (\bibinfo {year} {2022})}\BibitemShut {NoStop}%
\bibitem [{\citenamefont {Ragone}\ \emph {et~al.}(2023)\citenamefont {Ragone}, \citenamefont {Braccia}, \citenamefont {Nguyen}, \citenamefont {Schatzki}, \citenamefont {Coles}, \citenamefont {Sauvage}, \citenamefont {Larocca},\ and\ \citenamefont {Cerezo}}]{Ragone2023GQML}%
  \BibitemOpen
  \bibfield  {author} {\bibinfo {author} {\bibfnamefont {M.}~\bibnamefont {Ragone}}, \bibinfo {author} {\bibfnamefont {P.}~\bibnamefont {Braccia}}, \bibinfo {author} {\bibfnamefont {Q.~T.}\ \bibnamefont {Nguyen}}, \bibinfo {author} {\bibfnamefont {L.}~\bibnamefont {Schatzki}}, \bibinfo {author} {\bibfnamefont {P.~J.}\ \bibnamefont {Coles}}, \bibinfo {author} {\bibfnamefont {F.}~\bibnamefont {Sauvage}}, \bibinfo {author} {\bibfnamefont {M.}~\bibnamefont {Larocca}},\ and\ \bibinfo {author} {\bibfnamefont {M.}~\bibnamefont {Cerezo}},\ }\href@noop {} {\bibinfo {title} {{Representation Theory for Geometric Quantum Machine Learning}}} (\bibinfo {year} {2023}),\ \Eprint {https://arxiv.org/abs/2210.07980} {arXiv:2210.07980 [quant-ph]} \BibitemShut {NoStop}%
\bibitem [{\citenamefont {Shor}(1994)}]{shor1994algorithms}%
  \BibitemOpen
  \bibfield  {author} {\bibinfo {author} {\bibfnamefont {P.~W.}\ \bibnamefont {Shor}},\ }\bibfield  {title} {\bibinfo {title} {{Algorithms for quantum computation: discrete logarithms and factoring}},\ }in\ \href {https://doi.org/https://doi.org/10.1109/SFCS.1994.365700} {\emph {\bibinfo {booktitle} {Proceedings 35th annual symposium on foundations of computer science}}}\ (\bibinfo {organization} {Ieee},\ \bibinfo {year} {1994})\ pp.\ \bibinfo {pages} {124--134}\BibitemShut {NoStop}%
\bibitem [{\citenamefont {Sholl}\ and\ \citenamefont {Steckel}(2022)}]{sholl2022density}%
  \BibitemOpen
  \bibfield  {author} {\bibinfo {author} {\bibfnamefont {D.~S.}\ \bibnamefont {Sholl}}\ and\ \bibinfo {author} {\bibfnamefont {J.~A.}\ \bibnamefont {Steckel}},\ }\href@noop {} {\emph {\bibinfo {title} {{Density functional theory: a practical introduction}}}}\ (\bibinfo  {publisher} {John Wiley \& Sons},\ \bibinfo {year} {2022})\BibitemShut {NoStop}%
\bibitem [{\citenamefont {He}\ \emph {et~al.}(2019)\citenamefont {He}, \citenamefont {Yu}, \citenamefont {Li},\ and\ \citenamefont {Zhao}}]{he2019density}%
  \BibitemOpen
  \bibfield  {author} {\bibinfo {author} {\bibfnamefont {Q.}~\bibnamefont {He}}, \bibinfo {author} {\bibfnamefont {B.}~\bibnamefont {Yu}}, \bibinfo {author} {\bibfnamefont {Z.}~\bibnamefont {Li}},\ and\ \bibinfo {author} {\bibfnamefont {Y.}~\bibnamefont {Zhao}},\ }\bibfield  {title} {\bibinfo {title} {{Density functional theory for battery materials}},\ }\href@noop {} {\bibfield  {journal} {\bibinfo  {journal} {Energy \& Environmental Materials}\ }\textbf {\bibinfo {volume} {2}},\ \bibinfo {pages} {264} (\bibinfo {year} {2019})}\BibitemShut {NoStop}%
\bibitem [{\citenamefont {Ge}\ \emph {et~al.}(2019)\citenamefont {Ge}, \citenamefont {Tura},\ and\ \citenamefont {Cirac}}]{Ge2019groundstate}%
  \BibitemOpen
  \bibfield  {author} {\bibinfo {author} {\bibfnamefont {Y.}~\bibnamefont {Ge}}, \bibinfo {author} {\bibfnamefont {J.}~\bibnamefont {Tura}},\ and\ \bibinfo {author} {\bibfnamefont {J.~I.}\ \bibnamefont {Cirac}},\ }\bibfield  {title} {\bibinfo {title} {{Faster ground state preparation and high-precision ground energy estimation with fewer qubits}},\ }\href {https://doi.org/10.1063/1.5027484} {\bibfield  {journal} {\bibinfo  {journal} {Journal of Mathematical Physics}\ }\textbf {\bibinfo {volume} {60}},\ \bibinfo {pages} {022202} (\bibinfo {year} {2019})}\BibitemShut {NoStop}%
\bibitem [{\citenamefont {Lin}\ and\ \citenamefont {Tong}(2020)}]{Lin2020ground_state}%
  \BibitemOpen
  \bibfield  {author} {\bibinfo {author} {\bibfnamefont {L.}~\bibnamefont {Lin}}\ and\ \bibinfo {author} {\bibfnamefont {Y.}~\bibnamefont {Tong}},\ }\bibfield  {title} {\bibinfo {title} {{Near-optimal ground state preparation}},\ }\href {https://doi.org/10.22331/q-2020-12-14-372} {\bibfield  {journal} {\bibinfo  {journal} {{Quantum}}\ }\textbf {\bibinfo {volume} {4}},\ \bibinfo {pages} {372} (\bibinfo {year} {2020})}\BibitemShut {NoStop}%
\bibitem [{\citenamefont {O'Brien}\ \emph {et~al.}(2022)\citenamefont {O'Brien}, \citenamefont {Streif}, \citenamefont {Rubin}, \citenamefont {Santagati}, \citenamefont {Su}, \citenamefont {Huggins}, \citenamefont {Goings}, \citenamefont {Moll}, \citenamefont {Kyoseva}, \citenamefont {Degroote}, \citenamefont {Tautermann}, \citenamefont {Lee}, \citenamefont {Berry}, \citenamefont {Wiebe},\ and\ \citenamefont {Babbush}}]{Obrien2022}%
  \BibitemOpen
  \bibfield  {author} {\bibinfo {author} {\bibfnamefont {T.~E.}\ \bibnamefont {O'Brien}}, \bibinfo {author} {\bibfnamefont {M.}~\bibnamefont {Streif}}, \bibinfo {author} {\bibfnamefont {N.~C.}\ \bibnamefont {Rubin}}, \bibinfo {author} {\bibfnamefont {R.}~\bibnamefont {Santagati}}, \bibinfo {author} {\bibfnamefont {Y.}~\bibnamefont {Su}}, \bibinfo {author} {\bibfnamefont {W.~J.}\ \bibnamefont {Huggins}}, \bibinfo {author} {\bibfnamefont {J.~J.}\ \bibnamefont {Goings}}, \bibinfo {author} {\bibfnamefont {N.}~\bibnamefont {Moll}}, \bibinfo {author} {\bibfnamefont {E.}~\bibnamefont {Kyoseva}}, \bibinfo {author} {\bibfnamefont {M.}~\bibnamefont {Degroote}}, \bibinfo {author} {\bibfnamefont {C.~S.}\ \bibnamefont {Tautermann}}, \bibinfo {author} {\bibfnamefont {J.}~\bibnamefont {Lee}}, \bibinfo {author} {\bibfnamefont {D.~W.}\ \bibnamefont {Berry}}, \bibinfo {author} {\bibfnamefont {N.}~\bibnamefont {Wiebe}},\ and\ \bibinfo {author} {\bibfnamefont {R.}~\bibnamefont {Babbush}},\ }\bibfield  {title} {\bibinfo {title}
  {{Efficient quantum computation of molecular forces and other energy gradients}},\ }\href {https://doi.org/10.1103/PhysRevResearch.4.043210} {\bibfield  {journal} {\bibinfo  {journal} {Phys. Rev. Res.}\ }\textbf {\bibinfo {volume} {4}},\ \bibinfo {pages} {043210} (\bibinfo {year} {2022})}\BibitemShut {NoStop}%
\bibitem [{\citenamefont {Goings}\ \emph {et~al.}(2022)\citenamefont {Goings}, \citenamefont {White}, \citenamefont {Lee}, \citenamefont {Tautermann}, \citenamefont {Degroote}, \citenamefont {Gidney}, \citenamefont {Shiozaki}, \citenamefont {Babbush},\ and\ \citenamefont {Rubin}}]{Goings2022}%
  \BibitemOpen
  \bibfield  {author} {\bibinfo {author} {\bibfnamefont {J.~J.}\ \bibnamefont {Goings}}, \bibinfo {author} {\bibfnamefont {A.}~\bibnamefont {White}}, \bibinfo {author} {\bibfnamefont {J.}~\bibnamefont {Lee}}, \bibinfo {author} {\bibfnamefont {C.~S.}\ \bibnamefont {Tautermann}}, \bibinfo {author} {\bibfnamefont {M.}~\bibnamefont {Degroote}}, \bibinfo {author} {\bibfnamefont {C.}~\bibnamefont {Gidney}}, \bibinfo {author} {\bibfnamefont {T.}~\bibnamefont {Shiozaki}}, \bibinfo {author} {\bibfnamefont {R.}~\bibnamefont {Babbush}},\ and\ \bibinfo {author} {\bibfnamefont {N.~C.}\ \bibnamefont {Rubin}},\ }\bibfield  {title} {\bibinfo {title} {{Reliably assessing the electronic structure of cytochrome P450 on today’s classical computers and tomorrow’s quantum computers}},\ }\href {https://doi.org/10.1073/pnas.2203533119} {\bibfield  {journal} {\bibinfo  {journal} {Proceedings of the National Academy of Sciences}\ }\textbf {\bibinfo {volume} {119}},\ \bibinfo {pages} {e2203533119} (\bibinfo {year}
  {2022})}\BibitemShut {NoStop}%
\bibitem [{\citenamefont {Cortes}\ \emph {et~al.}(2023)\citenamefont {Cortes}, \citenamefont {Loipersberger}, \citenamefont {Parrish}, \citenamefont {Morley-Short}, \citenamefont {Pol}, \citenamefont {Sim}, \citenamefont {Steudtner}, \citenamefont {Tautermann}, \citenamefont {Degroote}, \citenamefont {Moll}, \citenamefont {Santagati},\ and\ \citenamefont {Streif}}]{Cortes2023}%
  \BibitemOpen
  \bibfield  {author} {\bibinfo {author} {\bibfnamefont {C.~L.}\ \bibnamefont {Cortes}}, \bibinfo {author} {\bibfnamefont {M.}~\bibnamefont {Loipersberger}}, \bibinfo {author} {\bibfnamefont {R.~M.}\ \bibnamefont {Parrish}}, \bibinfo {author} {\bibfnamefont {S.}~\bibnamefont {Morley-Short}}, \bibinfo {author} {\bibfnamefont {W.}~\bibnamefont {Pol}}, \bibinfo {author} {\bibfnamefont {S.}~\bibnamefont {Sim}}, \bibinfo {author} {\bibfnamefont {M.}~\bibnamefont {Steudtner}}, \bibinfo {author} {\bibfnamefont {C.~S.}\ \bibnamefont {Tautermann}}, \bibinfo {author} {\bibfnamefont {M.}~\bibnamefont {Degroote}}, \bibinfo {author} {\bibfnamefont {N.}~\bibnamefont {Moll}}, \bibinfo {author} {\bibfnamefont {R.}~\bibnamefont {Santagati}},\ and\ \bibinfo {author} {\bibfnamefont {M.}~\bibnamefont {Streif}},\ }\href@noop {} {\bibinfo {title} {{Fault-tolerant quantum algorithm for symmetry-adapted perturbation theory}}} (\bibinfo {year} {2023}),\ \Eprint {https://arxiv.org/abs/2305.07009} {arXiv:2305.07009 [quant-ph]}
  \BibitemShut {NoStop}%
\bibitem [{\citenamefont {Simon}\ \emph {et~al.}(2023)\citenamefont {Simon}, \citenamefont {Santagati}, \citenamefont {Degroote}, \citenamefont {Moll}, \citenamefont {Streif},\ and\ \citenamefont {Wiebe}}]{Simon2023Liouvillian}%
  \BibitemOpen
  \bibfield  {author} {\bibinfo {author} {\bibfnamefont {S.}~\bibnamefont {Simon}}, \bibinfo {author} {\bibfnamefont {R.}~\bibnamefont {Santagati}}, \bibinfo {author} {\bibfnamefont {M.}~\bibnamefont {Degroote}}, \bibinfo {author} {\bibfnamefont {N.}~\bibnamefont {Moll}}, \bibinfo {author} {\bibfnamefont {M.}~\bibnamefont {Streif}},\ and\ \bibinfo {author} {\bibfnamefont {N.}~\bibnamefont {Wiebe}},\ }\href@noop {} {\bibinfo {title} {Improved precision scaling for simulating coupled quantum-classical dynamics}} (\bibinfo {year} {2023}),\ \Eprint {https://arxiv.org/abs/2307.13033} {arXiv:2307.13033 [quant-ph]} \BibitemShut {NoStop}%
\bibitem [{\citenamefont {Childs}\ and\ \citenamefont {Wiebe}(2012)}]{childs2012hamiltonian}%
  \BibitemOpen
  \bibfield  {author} {\bibinfo {author} {\bibfnamefont {A.~M.}\ \bibnamefont {Childs}}\ and\ \bibinfo {author} {\bibfnamefont {N.}~\bibnamefont {Wiebe}},\ }\bibfield  {title} {\bibinfo {title} {{Hamiltonian Simulation Using Linear Combinations of Unitary Operations}},\ }\href {https://doi.org/10.26421/QIC12.11-12-1} {\bibfield  {journal} {\bibinfo  {journal} {Quantum Information and Computation}\ }\textbf {\bibinfo {volume} {12}},\ \bibinfo {pages} {901–924} (\bibinfo {year} {2012})}\BibitemShut {NoStop}%
\bibitem [{\citenamefont {Brassard}\ \emph {et~al.}(2000)\citenamefont {Brassard}, \citenamefont {Hoyer}, \citenamefont {Mosca},\ and\ \citenamefont {Tapp}}]{brassard2002quantum}%
  \BibitemOpen
  \bibfield  {author} {\bibinfo {author} {\bibfnamefont {G.}~\bibnamefont {Brassard}}, \bibinfo {author} {\bibfnamefont {P.}~\bibnamefont {Hoyer}}, \bibinfo {author} {\bibfnamefont {M.}~\bibnamefont {Mosca}},\ and\ \bibinfo {author} {\bibfnamefont {A.}~\bibnamefont {Tapp}},\ }\bibfield  {title} {\bibinfo {title} {{Quantum Amplitude Amplification and Estimation}},\ }\href {https://doi.org/10.1090/conm/305/05215} {\bibfield  {journal} {\bibinfo  {journal} {AMS Contemporary Mathematics Series}\ }\textbf {\bibinfo {volume} {305}} (\bibinfo {year} {2000})}\BibitemShut {NoStop}%
\bibitem [{\citenamefont {Somma}\ and\ \citenamefont {Boixo}(2013)}]{somma2013spectral}%
  \BibitemOpen
  \bibfield  {author} {\bibinfo {author} {\bibfnamefont {R.~D.}\ \bibnamefont {Somma}}\ and\ \bibinfo {author} {\bibfnamefont {S.}~\bibnamefont {Boixo}},\ }\bibfield  {title} {\bibinfo {title} {{Spectral Gap Amplification}},\ }\href {https://doi.org/10.1137/120871997} {\bibfield  {journal} {\bibinfo  {journal} {SIAM Journal on Computing}\ }\textbf {\bibinfo {volume} {42}},\ \bibinfo {pages} {593} (\bibinfo {year} {2013})}\BibitemShut {NoStop}%
\bibitem [{\citenamefont {Huggins}\ \emph {et~al.}(2022)\citenamefont {Huggins}, \citenamefont {Wan}, \citenamefont {McClean}, \citenamefont {O'Brien}, \citenamefont {Wiebe},\ and\ \citenamefont {Babbush}}]{huggins2021nearly}%
  \BibitemOpen
  \bibfield  {author} {\bibinfo {author} {\bibfnamefont {W.~J.}\ \bibnamefont {Huggins}}, \bibinfo {author} {\bibfnamefont {K.}~\bibnamefont {Wan}}, \bibinfo {author} {\bibfnamefont {J.}~\bibnamefont {McClean}}, \bibinfo {author} {\bibfnamefont {T.~E.}\ \bibnamefont {O'Brien}}, \bibinfo {author} {\bibfnamefont {N.}~\bibnamefont {Wiebe}},\ and\ \bibinfo {author} {\bibfnamefont {R.}~\bibnamefont {Babbush}},\ }\bibfield  {title} {\bibinfo {title} {{Nearly Optimal Quantum Algorithm for Estimating Multiple Expectation Values}},\ }\href {https://doi.org/10.1103/PhysRevLett.129.240501} {\bibfield  {journal} {\bibinfo  {journal} {Phys. Rev. Lett.}\ }\textbf {\bibinfo {volume} {129}},\ \bibinfo {pages} {240501} (\bibinfo {year} {2022})}\BibitemShut {NoStop}%
\bibitem [{\citenamefont {Jordan}(2005)}]{jordan2005fast}%
  \BibitemOpen
  \bibfield  {author} {\bibinfo {author} {\bibfnamefont {S.~P.}\ \bibnamefont {Jordan}},\ }\bibfield  {title} {\bibinfo {title} {{Fast Quantum Algorithm for Numerical Gradient Estimation}},\ }\href {https://doi.org/10.1103/PhysRevLett.95.050501} {\bibfield  {journal} {\bibinfo  {journal} {Phys. Rev. Lett.}\ }\textbf {\bibinfo {volume} {95}},\ \bibinfo {pages} {050501} (\bibinfo {year} {2005})}\BibitemShut {NoStop}%
\bibitem [{\citenamefont {Gily\'{e}n}\ \emph {et~al.}(2019)\citenamefont {Gily\'{e}n}, \citenamefont {Arunachalam},\ and\ \citenamefont {Wiebe}}]{gilyen2019optimizing}%
  \BibitemOpen
  \bibfield  {author} {\bibinfo {author} {\bibfnamefont {A.}~\bibnamefont {Gily\'{e}n}}, \bibinfo {author} {\bibfnamefont {S.}~\bibnamefont {Arunachalam}},\ and\ \bibinfo {author} {\bibfnamefont {N.}~\bibnamefont {Wiebe}},\ }\bibfield  {title} {\bibinfo {title} {{Optimizing Quantum Optimization Algorithms via Faster Quantum Gradient Computation}},\ }in\ \href {https://doi.org/https://doi.org/10.1137/1.9781611975482.87} {\emph {\bibinfo {booktitle} {Proceedings of the Thirtieth Annual ACM-SIAM Symposium on Discrete Algorithms}}},\ \bibinfo {series and number} {SODA '19}\ (\bibinfo  {publisher} {Society for Industrial and Applied Mathematics},\ \bibinfo {address} {USA},\ \bibinfo {year} {2019})\ p.\ \bibinfo {pages} {1425–1444}\BibitemShut {NoStop}%
\bibitem [{\citenamefont {Larsen}\ \emph {et~al.}(2017)\citenamefont {Larsen}, \citenamefont {Mortensen}, \citenamefont {Blomqvist}, \citenamefont {Castelli}, \citenamefont {Christensen}, \citenamefont {Dułak}, \citenamefont {Friis}, \citenamefont {Groves}, \citenamefont {Hammer}, \citenamefont {Hargus}, \citenamefont {Hermes}, \citenamefont {Jennings}, \citenamefont {Jensen}, \citenamefont {Kermode}, \citenamefont {Kitchin}, \citenamefont {Kolsbjerg}, \citenamefont {Kubal}, \citenamefont {Kaasbjerg}, \citenamefont {Lysgaard}, \citenamefont {Maronsson}, \citenamefont {Maxson}, \citenamefont {Olsen}, \citenamefont {Pastewka}, \citenamefont {Peterson}, \citenamefont {Rostgaard}, \citenamefont {Schiøtz}, \citenamefont {Schütt}, \citenamefont {Strange}, \citenamefont {Thygesen}, \citenamefont {Vegge}, \citenamefont {Vilhelmsen}, \citenamefont {Walter}, \citenamefont {Zeng},\ and\ \citenamefont {Jacobsen}}]{Hjorth2017ASE}%
  \BibitemOpen
  \bibfield  {author} {\bibinfo {author} {\bibfnamefont {A.~H.}\ \bibnamefont {Larsen}}, \bibinfo {author} {\bibfnamefont {J.~J.}\ \bibnamefont {Mortensen}}, \bibinfo {author} {\bibfnamefont {J.}~\bibnamefont {Blomqvist}}, \bibinfo {author} {\bibfnamefont {I.~E.}\ \bibnamefont {Castelli}}, \bibinfo {author} {\bibfnamefont {R.}~\bibnamefont {Christensen}}, \bibinfo {author} {\bibfnamefont {M.}~\bibnamefont {Dułak}}, \bibinfo {author} {\bibfnamefont {J.}~\bibnamefont {Friis}}, \bibinfo {author} {\bibfnamefont {M.~N.}\ \bibnamefont {Groves}}, \bibinfo {author} {\bibfnamefont {B.}~\bibnamefont {Hammer}}, \bibinfo {author} {\bibfnamefont {C.}~\bibnamefont {Hargus}}, \bibinfo {author} {\bibfnamefont {E.~D.}\ \bibnamefont {Hermes}}, \bibinfo {author} {\bibfnamefont {P.~C.}\ \bibnamefont {Jennings}}, \bibinfo {author} {\bibfnamefont {P.~B.}\ \bibnamefont {Jensen}}, \bibinfo {author} {\bibfnamefont {J.}~\bibnamefont {Kermode}}, \bibinfo {author} {\bibfnamefont {J.~R.}\ \bibnamefont {Kitchin}}, \bibinfo {author}
  {\bibfnamefont {E.~L.}\ \bibnamefont {Kolsbjerg}}, \bibinfo {author} {\bibfnamefont {J.}~\bibnamefont {Kubal}}, \bibinfo {author} {\bibfnamefont {K.}~\bibnamefont {Kaasbjerg}}, \bibinfo {author} {\bibfnamefont {S.}~\bibnamefont {Lysgaard}}, \bibinfo {author} {\bibfnamefont {J.~B.}\ \bibnamefont {Maronsson}}, \bibinfo {author} {\bibfnamefont {T.}~\bibnamefont {Maxson}}, \bibinfo {author} {\bibfnamefont {T.}~\bibnamefont {Olsen}}, \bibinfo {author} {\bibfnamefont {L.}~\bibnamefont {Pastewka}}, \bibinfo {author} {\bibfnamefont {A.}~\bibnamefont {Peterson}}, \bibinfo {author} {\bibfnamefont {C.}~\bibnamefont {Rostgaard}}, \bibinfo {author} {\bibfnamefont {J.}~\bibnamefont {Schiøtz}}, \bibinfo {author} {\bibfnamefont {O.}~\bibnamefont {Schütt}}, \bibinfo {author} {\bibfnamefont {M.}~\bibnamefont {Strange}}, \bibinfo {author} {\bibfnamefont {K.~S.}\ \bibnamefont {Thygesen}}, \bibinfo {author} {\bibfnamefont {T.}~\bibnamefont {Vegge}}, \bibinfo {author} {\bibfnamefont {L.}~\bibnamefont {Vilhelmsen}}, \bibinfo
  {author} {\bibfnamefont {M.}~\bibnamefont {Walter}}, \bibinfo {author} {\bibfnamefont {Z.}~\bibnamefont {Zeng}},\ and\ \bibinfo {author} {\bibfnamefont {K.~W.}\ \bibnamefont {Jacobsen}},\ }\bibfield  {title} {\bibinfo {title} {The atomic simulation environment—a python library for working with atoms},\ }\href {https://doi.org/10.1088/1361-648X/aa680e} {\bibfield  {journal} {\bibinfo  {journal} {Journal of Physics: Condensed Matter}\ }\textbf {\bibinfo {volume} {29}},\ \bibinfo {pages} {273002} (\bibinfo {year} {2017})}\BibitemShut {NoStop}%
\bibitem [{\citenamefont {Sun}\ \emph {et~al.}(2020)\citenamefont {Sun}, \citenamefont {Zhang}, \citenamefont {Banerjee}, \citenamefont {Bao}, \citenamefont {Barbry}, \citenamefont {Blunt}, \citenamefont {Bogdanov}, \citenamefont {Booth}, \citenamefont {Chen}, \citenamefont {Cui}, \citenamefont {Eriksen}, \citenamefont {Gao}, \citenamefont {Guo}, \citenamefont {Hermann}, \citenamefont {Hermes}, \citenamefont {Koh}, \citenamefont {Koval}, \citenamefont {Lehtola}, \citenamefont {Li}, \citenamefont {Liu}, \citenamefont {Mardirossian}, \citenamefont {McClain}, \citenamefont {Motta}, \citenamefont {Mussard}, \citenamefont {Pham}, \citenamefont {Pulkin}, \citenamefont {Purwanto}, \citenamefont {Robinson}, \citenamefont {Ronca}, \citenamefont {Sayfutyarova}, \citenamefont {Scheurer}, \citenamefont {Schurkus}, \citenamefont {Smith}, \citenamefont {Sun}, \citenamefont {Sun}, \citenamefont {Upadhyay}, \citenamefont {Wagner}, \citenamefont {Wang}, \citenamefont {White}, \citenamefont {Whitfield}, \citenamefont
  {Williamson}, \citenamefont {Wouters}, \citenamefont {Yang}, \citenamefont {Yu}, \citenamefont {Zhu}, \citenamefont {Berkelbach}, \citenamefont {Sharma}, \citenamefont {Sokolov},\ and\ \citenamefont {Chan}}]{Sun2020PySCF}%
  \BibitemOpen
  \bibfield  {author} {\bibinfo {author} {\bibfnamefont {Q.}~\bibnamefont {Sun}}, \bibinfo {author} {\bibfnamefont {X.}~\bibnamefont {Zhang}}, \bibinfo {author} {\bibfnamefont {S.}~\bibnamefont {Banerjee}}, \bibinfo {author} {\bibfnamefont {P.}~\bibnamefont {Bao}}, \bibinfo {author} {\bibfnamefont {M.}~\bibnamefont {Barbry}}, \bibinfo {author} {\bibfnamefont {N.~S.}\ \bibnamefont {Blunt}}, \bibinfo {author} {\bibfnamefont {N.~A.}\ \bibnamefont {Bogdanov}}, \bibinfo {author} {\bibfnamefont {G.~H.}\ \bibnamefont {Booth}}, \bibinfo {author} {\bibfnamefont {J.}~\bibnamefont {Chen}}, \bibinfo {author} {\bibfnamefont {Z.-H.}\ \bibnamefont {Cui}}, \bibinfo {author} {\bibfnamefont {J.~J.}\ \bibnamefont {Eriksen}}, \bibinfo {author} {\bibfnamefont {Y.}~\bibnamefont {Gao}}, \bibinfo {author} {\bibfnamefont {S.}~\bibnamefont {Guo}}, \bibinfo {author} {\bibfnamefont {J.}~\bibnamefont {Hermann}}, \bibinfo {author} {\bibfnamefont {M.~R.}\ \bibnamefont {Hermes}}, \bibinfo {author} {\bibfnamefont {K.}~\bibnamefont {Koh}},
  \bibinfo {author} {\bibfnamefont {P.}~\bibnamefont {Koval}}, \bibinfo {author} {\bibfnamefont {S.}~\bibnamefont {Lehtola}}, \bibinfo {author} {\bibfnamefont {Z.}~\bibnamefont {Li}}, \bibinfo {author} {\bibfnamefont {J.}~\bibnamefont {Liu}}, \bibinfo {author} {\bibfnamefont {N.}~\bibnamefont {Mardirossian}}, \bibinfo {author} {\bibfnamefont {J.~D.}\ \bibnamefont {McClain}}, \bibinfo {author} {\bibfnamefont {M.}~\bibnamefont {Motta}}, \bibinfo {author} {\bibfnamefont {B.}~\bibnamefont {Mussard}}, \bibinfo {author} {\bibfnamefont {H.~Q.}\ \bibnamefont {Pham}}, \bibinfo {author} {\bibfnamefont {A.}~\bibnamefont {Pulkin}}, \bibinfo {author} {\bibfnamefont {W.}~\bibnamefont {Purwanto}}, \bibinfo {author} {\bibfnamefont {P.~J.}\ \bibnamefont {Robinson}}, \bibinfo {author} {\bibfnamefont {E.}~\bibnamefont {Ronca}}, \bibinfo {author} {\bibfnamefont {E.~R.}\ \bibnamefont {Sayfutyarova}}, \bibinfo {author} {\bibfnamefont {M.}~\bibnamefont {Scheurer}}, \bibinfo {author} {\bibfnamefont {H.~F.}\ \bibnamefont {Schurkus}},
  \bibinfo {author} {\bibfnamefont {J.~E.~T.}\ \bibnamefont {Smith}}, \bibinfo {author} {\bibfnamefont {C.}~\bibnamefont {Sun}}, \bibinfo {author} {\bibfnamefont {S.-N.}\ \bibnamefont {Sun}}, \bibinfo {author} {\bibfnamefont {S.}~\bibnamefont {Upadhyay}}, \bibinfo {author} {\bibfnamefont {L.~K.}\ \bibnamefont {Wagner}}, \bibinfo {author} {\bibfnamefont {X.}~\bibnamefont {Wang}}, \bibinfo {author} {\bibfnamefont {A.}~\bibnamefont {White}}, \bibinfo {author} {\bibfnamefont {J.~D.}\ \bibnamefont {Whitfield}}, \bibinfo {author} {\bibfnamefont {M.~J.}\ \bibnamefont {Williamson}}, \bibinfo {author} {\bibfnamefont {S.}~\bibnamefont {Wouters}}, \bibinfo {author} {\bibfnamefont {J.}~\bibnamefont {Yang}}, \bibinfo {author} {\bibfnamefont {J.~M.}\ \bibnamefont {Yu}}, \bibinfo {author} {\bibfnamefont {T.}~\bibnamefont {Zhu}}, \bibinfo {author} {\bibfnamefont {T.~C.}\ \bibnamefont {Berkelbach}}, \bibinfo {author} {\bibfnamefont {S.}~\bibnamefont {Sharma}}, \bibinfo {author} {\bibfnamefont {A.~Y.}\ \bibnamefont
  {Sokolov}},\ and\ \bibinfo {author} {\bibfnamefont {G.~K.-L.}\ \bibnamefont {Chan}},\ }\bibfield  {title} {\bibinfo {title} {{Recent developments in the PySCF program package}},\ }\href {https://doi.org/10.1063/5.0006074} {\bibfield  {journal} {\bibinfo  {journal} {The Journal of Chemical Physics}\ }\textbf {\bibinfo {volume} {153}},\ \bibinfo {pages} {024109} (\bibinfo {year} {2020})}\BibitemShut {NoStop}%
\bibitem [{\citenamefont {Zhai}\ and\ \citenamefont {Chan}(2021)}]{Zhai2021Low}%
  \BibitemOpen
  \bibfield  {author} {\bibinfo {author} {\bibfnamefont {H.}~\bibnamefont {Zhai}}\ and\ \bibinfo {author} {\bibfnamefont {G.~K.-L.}\ \bibnamefont {Chan}},\ }\bibfield  {title} {\bibinfo {title} {{Low communication high performance ab initio density matrix renormalization group algorithms}},\ }\href {https://doi.org/10.1063/5.0050902} {\bibfield  {journal} {\bibinfo  {journal} {The Journal of Chemical Physics}\ }\textbf {\bibinfo {volume} {154}},\ \bibinfo {pages} {224116} (\bibinfo {year} {2021})}\BibitemShut {NoStop}%
\bibitem [{\citenamefont {Zhai}\ \emph {et~al.}(2023)\citenamefont {Zhai}, \citenamefont {Larsson}, \citenamefont {Lee}, \citenamefont {Cui}, \citenamefont {Zhu}, \citenamefont {Sun}, \citenamefont {Peng}, \citenamefont {Peng}, \citenamefont {Liao}, \citenamefont {Tölle}, \citenamefont {Yang}, \citenamefont {Li},\ and\ \citenamefont {Chan}}]{Zhai2023Block2}%
  \BibitemOpen
  \bibfield  {author} {\bibinfo {author} {\bibfnamefont {H.}~\bibnamefont {Zhai}}, \bibinfo {author} {\bibfnamefont {H.~R.}\ \bibnamefont {Larsson}}, \bibinfo {author} {\bibfnamefont {S.}~\bibnamefont {Lee}}, \bibinfo {author} {\bibfnamefont {Z.-H.}\ \bibnamefont {Cui}}, \bibinfo {author} {\bibfnamefont {T.}~\bibnamefont {Zhu}}, \bibinfo {author} {\bibfnamefont {C.}~\bibnamefont {Sun}}, \bibinfo {author} {\bibfnamefont {L.}~\bibnamefont {Peng}}, \bibinfo {author} {\bibfnamefont {R.}~\bibnamefont {Peng}}, \bibinfo {author} {\bibfnamefont {K.}~\bibnamefont {Liao}}, \bibinfo {author} {\bibfnamefont {J.}~\bibnamefont {Tölle}}, \bibinfo {author} {\bibfnamefont {J.}~\bibnamefont {Yang}}, \bibinfo {author} {\bibfnamefont {S.}~\bibnamefont {Li}},\ and\ \bibinfo {author} {\bibfnamefont {G.~K.-L.}\ \bibnamefont {Chan}},\ }\bibfield  {title} {\bibinfo {title} {{Block2: A comprehensive open source framework to develop and apply state-of-the-art DMRG algorithms in electronic structure and beyond}},\ }\href
  {https://doi.org/10.1063/5.0180424} {\bibfield  {journal} {\bibinfo  {journal} {The Journal of Chemical Physics}\ }\textbf {\bibinfo {volume} {159}},\ \bibinfo {pages} {234801} (\bibinfo {year} {2023})}\BibitemShut {NoStop}%
\bibitem [{\citenamefont {Jorgensen}\ \emph {et~al.}(1983)\citenamefont {Jorgensen}, \citenamefont {Chandrasekhar}, \citenamefont {Madura}, \citenamefont {Impey},\ and\ \citenamefont {Klein}}]{Jorgensen1983TIP3P}%
  \BibitemOpen
  \bibfield  {author} {\bibinfo {author} {\bibfnamefont {W.~L.}\ \bibnamefont {Jorgensen}}, \bibinfo {author} {\bibfnamefont {J.}~\bibnamefont {Chandrasekhar}}, \bibinfo {author} {\bibfnamefont {J.~D.}\ \bibnamefont {Madura}}, \bibinfo {author} {\bibfnamefont {R.~W.}\ \bibnamefont {Impey}},\ and\ \bibinfo {author} {\bibfnamefont {M.~L.}\ \bibnamefont {Klein}},\ }\bibfield  {title} {\bibinfo {title} {{Comparison of simple potential functions for simulating liquid water}},\ }\href {https://doi.org/10.1063/1.445869} {\bibfield  {journal} {\bibinfo  {journal} {The Journal of Chemical Physics}\ }\textbf {\bibinfo {volume} {79}},\ \bibinfo {pages} {926} (\bibinfo {year} {1983})}\BibitemShut {NoStop}%
\bibitem [{\citenamefont {Kambo}(1970)}]{kambo1970error}%
  \BibitemOpen
  \bibfield  {author} {\bibinfo {author} {\bibfnamefont {N.~S.}\ \bibnamefont {Kambo}},\ }\bibfield  {title} {\bibinfo {title} {{Error of the Newton-Cotes and Gauss-Legendre Quadrature Formulas}},\ }\href {http://www.jstor.org/stable/2004476} {\bibfield  {journal} {\bibinfo  {journal} {Mathematics of Computation}\ }\textbf {\bibinfo {volume} {24}},\ \bibinfo {pages} {261} (\bibinfo {year} {1970})}\BibitemShut {NoStop}%
\end{thebibliography}%

\end{document}